\newcommand{\Comment}[1]{\ignorespaces}
\newtheorem{theorem}{Theorem}[section]
\newtheorem{lemma}[theorem]{Lemma}
\newtheorem{observation}[theorem]{Observation}
\newtheorem{corollary}[theorem]{Corollary}
\theoremstyle{definition}
\newtheorem*{definition*}{Definition}
\theoremstyle{plain}
\newenvironment{DenseItemize}[0]
{\begin{itemize} \itemsep0pt \parskip0pt \parsep0pt}
{\end{itemize}}
\newenvironment{DenseEnumerate}[0]
{\begin{enumerate} \itemsep0pt \parskip0pt \parsep0pt}
{\end{enumerate}}
\newcommand{\Id}{\mathrm{id}}
\newcommand{\Image}{\mathrm{Im}}
\newcommand{\Domain}{\mathrm{Dom}}
\newcommand{\Cost}{\mathit{c}}
\newcommand{\Ben}{\mathit{b}}
\newcommand{\Null}{\mathrm{NULL}}
\newcommand{\Opt}{\ensuremath{\mathtt{OPT}}}
\newcommand{\Alg}{\ensuremath{\mathtt{ALG}}}
\newcommand{\Rationals}{\mathbb{Q}}
\newcommand{\EdgeId}{\mathrm{eid}}
\newcommand{\MinEdgeCost}{\mathrm{emin}}
\newcommand{\Eff}{\mathrm{eff}}
\newcommand{\Lev}{\mathrm{lev}}
\newcommand{\Expectation}{\mathbb{E}}
\newcommand{\Probability}{\mathbb{P}}
\newcommand{\Entropy}{\mathit{H}}
\newcommand{\Dummy}{\mathrm{d}}
\newcommand{\SSSC}[0]{\ensuremath{\mathtt{SSSC}}}
\newcommand{\COVER}[0]{\ensuremath{\mathtt{COVER}}}
\newcommand{\Sect}{Sec.}
\newcommand{\Thm}{Thm.}
\newcommand{\Lem}{Lem.}
\newcommand{\Cor}{Cor.}
\newcommand{\Obs}{Obs.}
\begin{document}

\title{Semi-Streaming Set Cover
\\
(Full Version)%
}

\author{
Yuval Emek\thanks{
Technion, Israel.
Email: \texttt{yemek@ie.technion.ac.il}.
}
\and
Adi Ros\'{e}n\thanks{
CNRS and Universit\'{e} Paris Diderot, France.
Email: \texttt{adiro@liafa.univ-paris-diderot.fr}.
Research supported in part by ANR project RDAM.
}
}

\date{}

\begin{titlepage}

\maketitle

\begin{abstract}
This paper studies the set cover problem under the semi-streaming model.
The underlying set system is formalized in terms of a hypergraph $G = (V, E)$
whose edges arrive one-by-one and the goal is to construct an edge cover $F
\subseteq E$ with the objective of minimizing the cardinality (or cost in
the weighted case) of $F$.
We consider a parameterized relaxation of this problem, where given some
$0 \leq \epsilon < 1$,
the goal is to construct an edge $(1 - \epsilon)$-cover, namely, a
subset of edges incident to all but an $\epsilon$-fraction of the vertices (or
their benefit in the weighted case).
The key limitation imposed on the algorithm is that its space is limited to
(poly)logarithmically many bits per vertex.

Our main result is an asymptotically tight trade-off between $\epsilon$ and
the approximation ratio:
We design a semi-streaming algorithm that on input graph $G$, constructs a
succinct data structure $\mathcal{D}$ such that for every
$0 \leq \epsilon < 1$,
an edge $(1 - \epsilon)$-cover that approximates the optimal edge
\mbox{($1$-)cover} within a factor of
$f(\epsilon, n)$
can be extracted from $\mathcal{D}$ (efficiently and with no additional space
requirements),
where
\[
f(\epsilon, n) =
\left\{
\begin{array}{ll}
O (1 / \epsilon), & \text{if } \epsilon > 1 / \sqrt{n} \\
O (\sqrt{n}), & \text{otherwise}
\end{array}
\right. \, .
\]
In particular for the traditional set cover problem we obtain an
$O(\sqrt{n})$-approximation.
This algorithm is proved to be best possible by establishing a family
(parameterized by $\epsilon$) of matching lower bounds.
\end{abstract}

\renewcommand{\thepage}{}
\end{titlepage}

\pagenumbering{arabic}

\section{Introduction}
Given a \emph{set system} consisting of a \emph{universe} of items and a
collection of item sets, the goal in the \emph{set cover} problem is to
construct a minimum cardinality subcollection of sets that covers the whole
universe.
This problem is fundamental to combinatorial optimization with applications
ranging across many different domains.
It is one of the 21 problems whose NP-hardness was established by Karp in
\cite{Karp1972} and its study has led to the development of various
techniques in the field of approximation algorithms (see, e.g.,
\cite{Vazirani2001}).

In this paper, we investigate the set cover problem under the
\emph{semi-streaming} model \cite{Feigenbaum2005}, where the sets arrive
one-by-one and the algorithm's space is constrained to maintaining a small
number of bits per item (cf.\ the \emph{set-streaming} model of
\cite{SahaG2009}).
In particular, we are interested in the following two research questions:
(1) What is the best approximation ratio for the set cover problem under such
memory constraints?
(2) How does the answer to (1) change if we relax the set cover notion so
that the set subcollection is required to cover only a $\delta$-fraction of
the universe?

On top of the theoretical interest in the aforementioned research
questions, studying the set cover problem under the semi-streaming model is
justified by several practical applications too.
For example, Saha and Getoor~\cite{SahaG2009} describe the setting of a web
crawler that iterates a large collection of blogs, listing the topics covered
by each one of them.
A user interested in a certain set of topics can run a semi-streaming set
cover algorithm with relatively small memory requirements to identify a
subcollection of blogs that covers her desired topics.

\paragraph{The model.}
In order to fit our terminology to the graph theoretic terminology
traditionally used in the semi-streaming literature (and also to ease up the
presentation), we use an equivalent formulation for the set cover problem in
terms of edge covers in hypergraphs:
Consider some \emph{hypergraph} $G = (V, E)$, where $V$ is a set of $n$
\emph{vertices} and $E$ is a (multi-)set of $m$ \emph{hyperedges} (henceforth
\emph{edges}), where each edge $e \in E$ is an arbitrary non-empty subset $e
\subseteq V$.
Assume hereafter that $G$ does not admit any isolated vertices, namely, every
vertex is incident to at least one edge.
We say that an edge subset $F \subseteq E$ \emph{covers} $G$ if every
vertex in $V$ is incident to some edge in $F$.
The goal of the \emph{edge cover} problem is to construct a subset $F
\subseteq E$ of edges that covers $G$, where the objective is to minimize the
cardinality $|F|$.

A natural relaxation of the covering notion asks to cover some fraction of the
vertices in $V$:
Given some $0 < \delta \leq 1$, we say that an edge subset $F
\subseteq E$ \emph{$\delta$-covers} $G$ if at least $\delta n$ vertices are
incident to the edges in $F$, namely,
$\left| V(F) \right| \geq \delta n$,
where
$V(F) = \{ v \in V \mid \exists e \in F \text{ s.t. } v \in e \}$.
Under this terminology, a cover of $G$ is referred to as a $1$-cover.
This raises a bi-criteria optimization version of the set cover problem, where
the goal is to construct an edge subset $F \subseteq E$ that
$\delta$-covers $G$ with the objective of minimizing $|F|$ and maximizing
$\delta$.
In this paper, we focus on approximation algorithms, where the cardinality of
$F$ is compared to that of an optimal edge \mbox{($1$-)cover} of $G$.

In the \emph{weighted} version of the edge cover problem, the hypergraph $G$
is augmented with vertex \emph{benefits}
$\Ben : V \rightarrow \Rationals_{> 0}$
and edge \emph{costs}
$\Cost : E \rightarrow \Rationals_{> 0}$.
The edge cover definition is generalized so that edge subset $F
\subseteq E$ is said to \emph{$\delta$-cover} $G$ if the benefit of the
vertices incident to the edges in $F$ is at least a $\delta$-fraction of the
total benefit, namely,
$\Ben(V(F)) \geq \delta \cdot \Ben(V)$,
where
$\Ben(U) = \sum_{v \in U} \Ben(v)$
for every vertex subset $U \subseteq V$.
The goal is then to construct an edge subset $F$ that $\delta$-covers
$G = (V, E, \Ben, \Cost)$,
where the objective is to maximize $\delta$ and minimize the cost of $F$,
denoted
$\Cost(F) = \sum_{e \in F} \Cost(e)$.

Under the \emph{semi-streaming} model, the execution is partitioned into
discrete time steps and the edges in $E$ are presented one-by-one so that edge
$e_{t} \in E$ is presented at time $t = 0, 1, \dots, m - 1$, listing all
vertices $v \in e_{t}$;\footnote{
With the exception of our related work discussion, all semi-streaming
algorithms in this paper make a single (one way) pass over the input
hypergraph.
}
in the weighted version, the cost of $e_{t}$ and the benefits of the vertices
it contains are also listed.
The key limitation imposed on the algorithm is that its space is limited;
specifically, we allow the algorithm to maintain $\log^{O (1)} |G|$
bits per vertex, where $|G|$ denotes the number of bits in the standard binary
encoding of $G$.
Each edge $e \in E$ is associated with a unique \emph{identifier} $\Id(e)$ of
size $O (\log m)$ bits, say, the time $t$ at which edge $e_{t}$ is presented.
We may sometimes use the identifier $\Id(e)$ when we actually refer to the
edge $e$ itself, e.g., replacing $\Cost(e)$ with $\Cost(\Id(e))$;
our intention will be clear from the context.

In contrast to the random access memory model of computation, where given a
collection $\mathcal{I}$ of identifiers, one can easily determine which vertex
in $V$ is incident to which of the edges whose identifiers are in
$\mathcal{I}$ simply by examining the input, under the semi-streaming model,
the collection $\mathcal{I}$ by itself typically fails to provide this
information.
Therefore, instead of merely returning the identifiers of some edge
$\delta$-cover, we require that the algorithm outputs a
\emph{$\delta$-cover certificate} $\chi$ for $G$ which is a partial function
from $V$ to $\{ \Id(e) \mid e \in E \}$ with
\emph{domain}
\[
\Domain(\chi) = \{ v \in V \mid \chi \text{ is defined over } v \}
\]
and \emph{image}
\[
\Image(\chi) = \{ \Id(e) \mid \exists v \in \Domain(\chi) \text{ s.t. }
\chi(v) = \Id(e) \}
\]
that satisfies
(1) if $v \in \Domain(\chi)$ and $\chi(v) = \Id(e)$, then $v \in e$; and
(2) $\Ben(\Domain(\chi)) \geq \delta \cdot \Ben(V)$.
By definition, the image of $\chi$ consists of the identifiers of the edges in
some edge $\delta$-cover $F$ of $G$ and the quality of the $\delta$-cover
certificate $\chi$ is thus measured in terms of
$\Cost(\Image(\chi)) = \Cost(F)$.

\paragraph{Our contribution.}
Consider some unweighted hypergraph $G = (V, E)$ with optimal edge $1$-cover
$\Opt$.
We design a deterministic semi-streaming algorithm, referred to as \SSSC{}
(acronym of the paper's title), for the edge ($\delta$-)cover problem that
given some $0 \leq \epsilon < 1$,
outputs a $(1 - \epsilon)$-cover certificate $\chi_{\epsilon}$ for $G$ with
image of cardinality
$|\Image(\chi_{\epsilon})| = O (\min \{ 1 / \epsilon, \sqrt{n} \} \cdot
|\Opt|)$.\footnote{
Define $\min \{ 1 / x, y \} = y$ when $x = 0$.
}
This result is extended to the weighted case, where $G = (V, E, \Ben, \Cost)$,
showing that
$\Cost(\Image(\chi_{\epsilon})) = O (\min \{ 1 / \epsilon, \sqrt{n} \} \cdot
\Cost(\Opt))$
(see \Thm{}\ \ref{theorem:upper-bound} and
\ref{theorem:upper-bound-aspect-ratio}).
In particular, for the edge \mbox{($1$-)cover} problem, we obtain an $O
(\sqrt{n})$-approximation for both the weighted and unweighted cases.

On the negative side, we prove that for every
$\epsilon \geq 1 / \sqrt{n}$,
if a randomized semi-streaming algorithm for the set cover problem
outputs a $(1 - \epsilon)$-cover certificate $\chi$ for $G$, then it cannot
guarantee that
$\Expectation[|\Image(\chi)|] = o (|\Opt| / \epsilon)$
(see \Thm{}~\ref{theorem:lower-bound}).
This demonstrates that the approximation guarantee of our algorithm is
asymptotically optimal for the whole range of parameter
$0 \leq \epsilon < 1$
even for randomized algorithms.

Notice that \SSSC{} has the attractive feature that the (near-linear
size) data structure $\mathcal{D}$ it maintains is oblivious to
the parameter $\epsilon$.
That is, the algorithm processes the stream of edges with no knowledge of
$\epsilon$,  generating the data structure $\mathcal{D}$, and the promised $(1
- \epsilon)$-cover certificate $\chi_{\epsilon}$ can be efficiently extracted
from $\mathcal{D}$ (with no additional space requirements) for every
$0 \leq \epsilon < 1$
(in fact several such covers for different values of $\epsilon$ can be
extracted).
From a bi-criteria optimization perspective, our lower bound implies that the
parameterized collection
$\{ \chi_{\epsilon} \}_{0 \leq \epsilon < 1}$
encoded in $\mathcal{D}$ is an (asymptotically) optimal solution frontier
(cf. Pareto optimality).

Using a simple adjustment of the randomized rounding technique for set cover
(see, e.g., \cite{Vazirani2001}), it is not difficult to show that a basic
feasible solution to the linear program relaxation $\mathcal{P}$ of a given
set cover instance also serves as a compact data structure from which a $(1 -
\epsilon)$-cover certificate $\chi_{\epsilon}$ can be extracted for every
$0 \leq \epsilon < 1$.
In fact, the approximation ratio obtained this way is  better
than ours, namely, $O (\log (1 / \epsilon))$.
However, our lower bound shows that this approach cannot be applied --- and in
passing, that $\mathcal{P}$ cannot be solved --- under the semi-streaming
model.

Can our tight lower bound be an artifact of the requirement that the
algorithm outputs a cover \emph{certificate}?
We nearly eliminate this possibility by proving that for every constant
$c > 0$
and for every
$\epsilon \geq n^{-1 / 2 + c}$,
even if the randomized algorithm only guarantees an ``uncertified'' output,
i.e., only the identifiers of the edges in some edge $(1 - \epsilon)$-cover
$F$ of $G$ are returned, then the cardinality of $F$ must still be large,
specifically,
$|F| = \Omega \left( \frac{\log\log n}{\log n} \cdot |\Opt|  / \epsilon
\right)$,
where $\Opt$ in this case is proportional to $\epsilon^{2} n$
(see \Thm{}~\ref{theorem:lower-bound-uncertified}).\footnote{
By using a reduction from the index function studied in communication
complexity \cite{KremerNR1999}, one can show that there does not exist a
semi-streaming algorithm that distinguishes between hypergraphs admitting a
constant size edge cover and hypergraphs that cannot be covered by less than
$n^{\alpha}$ edges for any constant $0 < \alpha < 1 / 2$.
This lower bound is more attractive in the sense that it applies already to
the decision version of the set cover problem however, to the best of our
understanding, in contrast to the constructions of the present paper, this
result cannot be generalized to $(1 - \epsilon)$-covers for values of
$\epsilon \gg 1 / \sqrt{n}$.
}

\paragraph{Related work.}
The work most closely related to the present paper is probably the one
presented in Saha and Getoor's paper \cite{SahaG2009} that also considers the
set cover problem under the semi-streaming model (referred to as
\emph{set-streaming} in \cite{SahaG2009}) formulated as the edge cover problem
in hypergraphs.
Saha and Getoor design a $4$-approximation semi-streaming algorithm for the
\emph{maximum coverage} problem that given a hypergraph $G = (V, E)$ and a
parameter $k$, looks for $k$ edges that cover as many vertices as possible.
Based on that, they observe that an $O (\log n)$-approximation for the optimal
set cover can be obtained in $O (\log n)$ passes over the
input (this can be achieved based on our semi-streaming algorithm as well).
Using the terminology of the present paper, Saha and Getoor's maximum coverage
algorithm is very efficient for obtaining edge $(1 - \epsilon)$-covers as long
as $\epsilon$ is large, but it does not provide any (single pass) guarantees
for $\epsilon < 3 / 4$.
In contrast, our algorithm has asymptotically optimal (single pass) guarantees
for any $0 \leq \epsilon < 1$.
Another paper that considers semi-streaming algorithms in hypergraphs is that
of Halld\'{o}rsson et al.~\cite{HalldorssonHLS10} that studies the independent
set problem.

The semi-streaming model was introduced by Feigenbaum et
al.~\cite{Feigenbaum2005} for graph theoretic problems, where the edges of an
$n$ vertex input graph arrive sequentially and the algorithm is allowed to
maintain only
$\log^{O (1)} n$
bits of memory per vertex.
Since the number of bits required to encode an $n$ vertex graph is
$n^{O (1)}$, the space-per-vertex bound used in the present paper can be
viewed as a generalization of that of Feigenbaum et al.\ from graphs to
hypergraphs.
In any case, concerns regarding the comparison between the space bound used in
the present paper and that of \cite{Feigenbaum2005} can be lifted by
restricting attention to hypergraphs with
$m \leq 2^{\log^{O (1)} n}$
edges
(refer to \Sect{}~\ref{section:Algorithm} for a further discussion of
the space bounds of our algorithm).

Various graph theoretic problems have been treated under the semi-streaming
model.
These include
matching \cite{McGregor05,EpsteinLMS10,kmm12},
diameter and shortest path \cite{Feigenbaum2005,FKMSZ08},
min-cut and sparsification \cite{AG09,KelnerL2013},
graph spanners \cite{FKMSZ08}, and
independent set \cite{HalldorssonHLS10,EmekHR2012}.

Several variants of the set cover problem, all different than the problem
studied in the present paper, have been investigated under the model of online
computation.
Alon et al.~\cite{AlonAABN2009} focus on the online problem in which some
master set system is known in advance and an unknown subset of its items
arrive online;
the goal is to cover the arriving items, minimizing the number of sets used
for that purpose.
Another online variant of the set cover problem is studied by Fraigniaud et
al.~\cite{FraigniaudHPRR2013}, where the sets arrive online, but not all items
have to be covered.
Here, each item is associated with a penalty and the cost of the algorithm is
the sum of the total cost of the sets chosen for the partial cover and the
total penalty of the uncovered items.

Note that under the online computation model, there is a trivial linear lower
bound for the problem studied in the present paper if preemption is not
allowed.
If preemption is allowed, then the problem becomes interesting only under a
slightly stronger definition for the competitive ratio:
The performance of the algorithm is measured via the maximum over time $t$ of
the ratio
$\Alg_{t} / \Opt_{t}$,
where $\Opt_{t}$ is the cost of an optimal set cover for the set system
presented up to time $t$ and $\Alg_{t}$ is the cost of the set cover
maintained by the algorithm for that set system.
The set cover algorithm presented in the present paper is, in
fact, also an online algorithm for this problem with competitive ratio $O
(\sqrt{n})$.
The lower bound(s) established in the present paper can be slightly modified to
show that this is optimal.

Closely related to our notion of cover certificate is the \emph{universal set
cover} problem \cite{JiaLNRS2005,GrandoniGLMSS2013}, where given a set system,
the goal is to construct a mapping $f$ from the items to the sets containing
them so that for every item subset $X$, the cost of the image of $X$ under $f$
is as close as possible to the cost of a minimum set cover for $X$.
This problem resembles our guarantee that the promised $(1 - \epsilon)$-cover
certificate can be extracted from the data structure for every $\epsilon$
however, it is much stronger in the sense that it guarantees a small cover for
every item subset, rather than the existence of a ``good'' item subset for
every $\epsilon$.
To the best of our knowledge, the universal set cover problem has not been
studied under the semi-streaming model.

\paragraph{Techniques' overview.}
The main procedure of our algorithm \SSSC{} (referred to as \COVER{})
maintains for each vertex $v \in V$, a variable $\Eff(v)$.
This variable captures the ratio of the benefit of the last
\emph{effective} subset $T \subseteq e_{t}$ that covered $v$ to the cost of
$e_{t}$, where subset $T \subseteq e_{t}$ is said to be effective if
$\Ben(T) / \Cost(e_{t}) \geq 2 \cdot \Eff(u)$
for every $u \in T$.
This means, in particular, that the variable $\Eff(v)$ doubles with every
update.
(Note that \COVER{} actually maintains the logarithm of this $\Eff(v)$
variable for each vertex $v$, but the main idea is the same.)
By picking the effective subset $T \subseteq e_{t}$ that
maximizes $\Ben(T)$, we ensure that the collection of vertices $v \in V$
admitting high values of $\Eff(v)$ satisfies some desirable properties.
Specifically, a careful analysis shows that upon termination of the input
stream, there exists some threshold $\rho$ such that the total benefit of
vertices $v \in V$ with
$\Eff(v) \leq \rho$
is at most
$\epsilon \cdot \Ben(V)$,
whereas the total cost of the edges corresponding to the effective subsets of
the vertices $v \in V$ with $\Eff(v) > \rho$ is
$O (\Cost(\Opt) / \epsilon)$.
Invoking procedure \COVER{} on a hypergraph with the same edge costs and
uniform vertex benefits (in parallel to the invocation of \COVER{} on the
original input hypergraph) enables us to produce an edge $1$-cover that $O
(\sqrt{n})$-approximates $\Cost(\Opt)$.

The bad hypergraphs that lie at the heart of our lower bound are
constructed based on an \emph{affine plane} $\mathcal{A} = (P, L)$ with
$q^{2}$ points and $q (q + 1)$ lines (see, e.g., \cite{LindnerR2011}) by
randomly partitioning each line in $L$ into two edges (more edges in the
``uncertified'' version of the lower bound).
After presenting the two edges corresponding to all lines in $L$, we present
one additional edge $e^{*}$ that contains the points of all but
$r \approx \epsilon q$
random lines from some random angle $A_{i}$ of $\mathcal{A}$.
An optimal edge cover consists of the edge $e^{*}$ and the $2 r = O (\epsilon
q)$ edges corresponding to the $r$ lines missing from $e^{*}$.
Using careful information theoretic arguments, we show that any low space
deterministic algorithm must use many lines from angles other than
$A_{i}$ to construct a $(1 - \epsilon)$-cover $F$.
The properties of affine planes guarantee that the expected cardinality of $F$
is $\Omega (q)$.
By Yao's principal, our lower bound is translated from deterministic
algorithms to randomized ones.

\section{A semi-streaming algorithm}
\label{section:Algorithm}
Our goal in this section is to design a semi-streaming algorithm for the edge
($\delta$-)cover problem in hypergraphs.
The algorithm, referred to as \SSSC{}, is presented in
\Sect{}~\ref{section:algorithm-description} and its approximation ratio is
analyzed in \Sect{}~\ref{section:algorithm-analysis}.
For the sake of simplicity, we first assume that
all numerical values (vertex benefits and edge costs) are encoded using $O
(\log n)$ bits.
Under this assumption, the space bounds of \SSSC{} are
quite trivial and the analysis in \Sect{}~\ref{section:algorithm-analysis}
yields Theorem~\ref{theorem:upper-bound-assumption}.

\begin{theorem} \label{theorem:upper-bound-assumption}
On a weighted input hypergraph $G = (V, E, \Ben, \Cost)$ with numerical values
encoded using $O (\log n)$ bits,
our algorithm uses
$O (n \log (n + m))$
space, processes each input edge $e_{t} \in E$ in
$O (|e_{t}| \log |e_{t}|)$
time, and produces a data structure $\mathcal{D}$ with the following
guarantee:
For every
$0 \leq \epsilon < 1$,
a $(1 - \epsilon)$-cover certificate $\chi_{\epsilon}$ for $G$ such that
\[
\Cost(\Image(\chi_{\epsilon}))
=
O \left( \min \left\{ 1 / \epsilon, \sqrt{n} \right\} \cdot
\Cost(\Opt) \right)
\]
can be extracted from $\mathcal{D}$ in time $O (n \log n)$ with no additional
space requirements, where $\Opt$ stands for an optimal edge \mbox{($1$-)cover}
of $G$.
\end{theorem}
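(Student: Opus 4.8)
The plan is to describe the algorithm \SSSC{} in enough detail that the three claimed guarantees---space, per-edge processing time, and the extraction guarantee---fall out of an analysis carried out in the subsequent subsections. Concretely, \SSSC{} runs the core procedure \COVER{} in parallel on two instances sharing the edge costs: the original weighted hypergraph $G = (V,E,\Ben,\Cost)$, and a companion instance $G' = (V,E,\mathbf{1},\Cost)$ with uniform unit benefits. For each vertex $v$, \COVER{} stores the integer $\lceil \log \Eff(v) \rceil$ together with the identifier $\Id(e)$ of the most recent edge whose effective subset covered $v$; the data structure $\mathcal{D}$ is exactly this pair of arrays (one per instance). When edge $e_t$ arrives, listing $e_t$, $\Cost(e_t)$, and the benefits of its vertices, \COVER{} must compute an effective subset $T \subseteq e_t$ maximizing $\Ben(T)$ subject to $\Ben(T)/\Cost(e_t) \geq 2\Eff(u)$ for all $u \in T$; I would show this greedy selection reduces to sorting the vertices of $e_t$ by their current $\Eff$ values and walking the sorted list, which is why processing $e_t$ costs $O(|e_t|\log|e_t|)$ time. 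This part is essentially bookkeeping.

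Next I would nail down the space bound. Under the assumption that every benefit and cost fits in $O(\log n)$ bits, the $\Eff(v)$ values only ever take the form (initial ratio)$\cdot 2^k$, so their logarithms are $O(\log n)$-bit integers; together with an $O(\log m)$-bit edge identifier per vertex and two parallel copies, the total is $O(n(\log n + \log m)) = O(n\log(n+m))$ bits, as claimed. Since $\COVER{}$ touches only the vertices of $e_t$ when $e_t$ arrives, no extra working space beyond $\mathcal{D}$ is needed during the stream.

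The substantive step is the extraction routine and its correctness. Given $\epsilon$, scan the stored $\Eff$ values of the original instance to locate the threshold $\rho$ promised in the techniques overview: the total benefit of vertices with $\Eff(v) \leq \rho$ is at most $\epsilon\Ben(V)$, while the total cost of the edges recorded for vertices with $\Eff(v) > \rho$ is $O(\Cost(\Opt)/\epsilon)$. Defining $\chi_\epsilon$ on exactly those vertices $v$ with $\Eff(v) > \rho$, mapping each to its recorded edge identifier, gives a valid $(1-\epsilon)$-cover certificate of cost $O(\Cost(\Opt)/\epsilon)$; this scan plus a sort costs $O(n\log n)$. For the $O(\sqrt n)$ bound, apply the same extraction to the companion uniform-benefit instance with a value of $\epsilon$ on the order of $1/\sqrt n$, which leaves at most $\sqrt n$ vertices uncovered, then cover each of those residually with any one of its recorded edges (one extra edge per vertex), yielding a genuine $1$-cover whose cost is $O(\sqrt n \cdot \Cost(\Opt))$; taking the cheaper of the two certificates gives the $\min$. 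The crux---and the part deferred to \Sect{}~\ref{section:algorithm-analysis}---is proving the existence of the threshold $\rho$ with both properties simultaneously: the doubling behavior of $\Eff$ and the maximality of the chosen effective subset must be combined with a charging argument against an optimal cover $\Opt$, and a pigeonhole over the dyadic scales of $\Eff$ to extract the single good $\rho$. I expect that pigeonhole-plus-charging argument to be the main obstacle; the rest is assembly.
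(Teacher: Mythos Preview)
Your proposal tracks the paper's approach closely: two parallel invocations of \COVER{} (one on $G$, one on the unit-benefit companion $G'$), the sort-by-effectiveness trick for the $O(|e_t|\log|e_t|)$ per-edge time, and the threshold extraction. The space accounting and the outline of the threshold argument (charge against $\Opt$ for the low-effectiveness side, geometric sum over dyadic levels for the high-effectiveness side) are both on target.

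There is one genuine gap in the $O(\sqrt{n})$ branch. You write that the at-most-$\sqrt{n}$ residual vertices can each be covered by ``any one of its recorded edges'' at total cost $O(\sqrt{n}\cdot\Cost(\Opt))$. That step does not go through: the edge identifier stored by \COVER{} for a vertex $v$ with low effectiveness is just the last edge whose effective subset contained $v$, and its cost need not be bounded by $\Cost(\Opt)$ at all (in the unit-benefit instance, an effectiveness of $r$ only tells you $\Cost(\EdgeId(v))\le |T|/2^{r-1}$ for the relevant subset $T$, and $r$ can be arbitrarily negative). The paper handles this by running a third, separate procedure that records for every vertex the identifier of a \emph{minimum-cost} incident edge seen so far; since $\Opt$ covers $v$ with some edge, that minimum is trivially at most $\Cost(\Opt)$, and summing over the $\le\sqrt{n}$ residual vertices gives the claimed $\sqrt{n}\cdot\Cost(\Opt)$ overhead. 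You also need a fourth procedure storing $\Ben(v)$ for each $v$, so that the extraction step can actually evaluate $\Ben(I(\le r))$ when searching for the threshold. Both additions fit comfortably in the $O(n\log(n+m))$ space budget, but without the minimum-cost-edge tracking the $1$-cover cost bound is unjustified.
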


\Sect{}~\ref{section:algorithm-space} is dedicated to lifting the assumption
on the numerical values.
The following definitions are necessary for the discussion of the results we
obtain without this assumption:
\[
\Ben^{\lg} = \lg \left\lceil \max_{v \in V} \left\{ \Ben(v), \Ben(v)^{-1} \right\}
\right\rceil \,
\quad
\Cost^{\lg} = \lg \left\lceil \max_{e \in E} \left\{ \Cost(e), \Cost(e)^{-1}
\right\} \right\rceil \,
\quad
\Cost^{\Delta} = \lg \left\lceil \frac{\max_{e \in E} \Cost(e)}{\min_{e \in E}
\Cost(e)} \right\rceil \, ,
\]
where the last parameter captures the number of bits required to encode the
edge costs \emph{aspect ratio}.\footnote{
Throughout, $\lg$ denotes logarithm to the base of $2$.
}
Note that the encoding size $|G|$ of the input weighted hypergraph $G = (V, E,
\Ben, \Cost)$ is at least
$\Ben^{\lg} + \Cost^{\lg}$.
Moreover,
$\Cost^{\Delta}$
is always at most
$2 \Cost^{\lg}$,
but it may be much smaller than that.

Our results are cast in \Thm{}\ \ref{theorem:upper-bound} and
\ref{theorem:upper-bound-aspect-ratio}, where the former generalizes
\Thm{}~\ref{theorem:upper-bound-assumption} and the latter has a better space
bound, but slightly worse run-time guarantee.
Another drawback of \Thm{}~\ref{theorem:upper-bound-aspect-ratio} is
that it requires that the parameters $n$ and $\epsilon$ are known to the
algorithm in advance in contrast to \Thm{}
\ref{theorem:upper-bound} and \ref{theorem:upper-bound-assumption}
that do not require an apriori knowledge of any global parameter.

\sloppy
\begin{theorem} \label{theorem:upper-bound}
On a weighted input hypergraph $G = (V, E, \Ben, \Cost)$,
our algorithm uses
$O \left( n \log \left( n + m + \Ben^{\lg} + \Cost^{\lg} \right) \right)$
space, processes each input edge $e_{t} \in E$ in
$O (|e_{t}| \log |e_{t}|)$
time, and produces a data structure $\mathcal{D}$ with the following
guarantee:
For every
$0 \leq \epsilon < 1$,
a $(1 - \epsilon)$-cover certificate $\chi_{\epsilon}$ for $G$ such that
\[
\Cost(\Image(\chi_{\epsilon}))
=
O \left( \min \left\{ 1 / \epsilon, \sqrt{n} \right\} \cdot
\Cost(\Opt) \right)
\]
can be extracted from $\mathcal{D}$ in time $O (n \log n)$ with no additional
space requirements, where $\Opt$ stands for an optimal edge \mbox{($1$-)cover}
of $G$.
\end{theorem}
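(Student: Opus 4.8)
The plan is to build directly on \Thm{}~\ref{theorem:upper-bound-assumption}: the algorithm \SSSC{}, its main procedure \COVER{}, the parallel invocation of \COVER{} on the uniform-benefit hypergraph, the threshold-$\rho$ argument, and the resulting approximation analysis are all to be used verbatim, and the only thing that needs revisiting is the space bound. The per-vertex state that \COVER{} maintains consists of just two things: (i) (the logarithm of) the variable $\Eff(v)$, and (ii) the identifier of the last edge whose effective subset covered $v$, which is precisely what is output as $\chi_{\epsilon}(v)$. Item (ii) already fits in $O(\log m)$ bits, so the whole difficulty is that once the $O(\log n)$-bit encoding assumption is dropped, $\lg \Eff(v)$ is the logarithm of a ratio $\Ben(T)/\Cost(e_{t})$ and has a priori unbounded precision and range.

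First I would modify \COVER{} so that in place of $\lg \Eff(v)$ it stores only the rounded integer $\ell(v) = \lfloor \lg \Eff(v) \rfloor$, and correspondingly replace the notion of an effective subset by the \emph{integer} condition $\Ben(T)/\Cost(e_{t}) \geq 2^{\ell(u)+2}$ for every $u \in T$; since $\Eff(u) < 2^{\ell(u)+1}$, this still implies the original requirement $\Ben(T)/\Cost(e_{t}) \geq 2\Eff(u)$. On an update, $\ell(v)$ is reset to $\lfloor \lg(\Ben(T)/\Cost(e_{t})) \rfloor$, a value computable from the benefits and the cost listed with the current edge $e_{t}$; in particular the algorithm never stores a benefit or a cost, only the integer $\ell(v)$ and an edge identifier. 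Choosing the effective subset $T \subseteq e_{t}$ of maximum benefit is done, as before, by sorting the vertices of $e_{t}$ (now by their $\ell(\cdot)$ values) and, for each candidate maximal level $\theta$, testing the nested set $\{u \in e_{t} : \ell(u) \leq \theta\}$ against $\Ben(\cdot) \geq \Cost(e_{t}) 2^{\theta+2}$, which takes $O(|e_{t}| \log |e_{t}|)$ time. One then checks that replacing $2\Eff(u)$ by this rounded surrogate perturbs every inequality in the analysis of \Thm{}~\ref{theorem:upper-bound-assumption} by at most an $O(1)$ factor, so the guarantee $\Cost(\Image(\chi_{\epsilon})) = O(\min\{1/\epsilon, \sqrt{n}\} \cdot \Cost(\Opt))$ is preserved.

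Next I would bound the bit-length of $\ell(v)$. Every post-update value of $\Eff(v)$ has the form $\Ben(T)/\Cost(e_{t})$ with $\min_{v} \Ben(v) \leq \Ben(T) \leq n \max_{v} \Ben(v)$ and $\min_{e} \Cost(e) \leq \Cost(e_{t}) \leq \max_{e} \Cost(e)$, so $\ell(v)$ ranges over an interval of integers of width $O(\log n + \Ben^{\lg} + \Cost^{\lg})$ and hence fits in $O(\log(\log n + \Ben^{\lg} + \Cost^{\lg})) = O(\log(n + m + \Ben^{\lg} + \Cost^{\lg}))$ bits. Together with the $O(\log m)$-bit identifier, and counting the parallel uniform-benefit copy of \COVER{}, the per-vertex state is $O(\log(n + m + \Ben^{\lg} + \Cost^{\lg}))$ bits, giving total space $O(n \log(n + m + \Ben^{\lg} + \Cost^{\lg}))$. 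The handful of global scalars kept by the algorithm --- in particular a running total of the vertex benefits, and a rounded per-vertex benefit $b(v)$ needed at extraction time to locate the threshold $\rho$ --- can be stored to within a $(1 + 1/n)$ multiplicative factor using $O(\log n + \Ben^{\lg})$ bits each, which is absorbed into the stated bound. Extraction of $\chi_{\epsilon}$ from $\mathcal{D}$ then reduces to sorting the $n$ stored pairs $(\ell(v), \chi(v))$ by level, forming prefix sums of the $b(v)$'s, and scanning for the power-of-two threshold $\rho$ guaranteed by the analysis; this is $O(n \log n)$ time with no additional space.

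The step I expect to be the real obstacle is the second paragraph: verifying that rounding $\Eff(v)$ down to a power of two --- both in the effectiveness test and in the stored state --- is harmless for the delicate potential/threshold argument behind \Thm{}~\ref{theorem:upper-bound-assumption}. One must confirm that the ``$\Eff(v)$ at least doubles on every update'' property, the monotone structure of the nested sets $\{u \in e_{t} : \ell(u) \leq \theta\}$ exploited in selecting the maximum-benefit effective subset, and the existence of a threshold $\rho$ separating the low-$\Eff$ vertices (total benefit $\leq \epsilon \Ben(V)$) from the edges realizing the effective subsets of the high-$\Eff$ vertices (total cost $O(\Cost(\Opt)/\epsilon)$) all survive the rounding, now with $\rho$ restricted to powers of two and benefits known only up to a $(1+1/n)$ factor. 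Everything else is bookkeeping on bit-lengths and running times.
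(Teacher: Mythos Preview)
Your approach is correct and essentially the paper's, but you are re-doing work that is already in place: the paper's \COVER{} (Algorithm~\ref{algorithm:cover}) already stores $\Eff(v)$ as the integer $\Lev_t(T) = \lceil \lg(\Ben(T)/\Cost(e_t)) \rceil$, and the entire analysis of \Sect{}~\ref{section:algorithm-analysis} is carried out directly over these integers, so your ``real obstacle'' --- checking that the threshold argument survives rounding --- is a non-issue. The paper's proof of \Thm{}~\ref{theorem:upper-bound} (\Sect{}~\ref{section:algorithm-space}) is correspondingly short: it rounds the \emph{input} benefits and costs to powers of $2$ (a $2$-approximation in every relevant quantity, which also takes care of procedures P3 and P4 without your $(1+1/n)$-precision bookkeeping), runs \SSSC{} unchanged on the rounded hypergraph $\widetilde{G}$, and then observes --- exactly as you do --- that $|\Lev_t(T)| = O(\Ben^{\lg} + \Cost^{\lg} + \log n)$, so every per-vertex variable fits in $O(\log(n + m + \Ben^{\lg} + \Cost^{\lg}))$ bits. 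One small slip in your bookkeeping: ``$O(\log n + \Ben^{\lg})$ bits each'' for the stored per-vertex benefits is not absorbed into the stated total when $\Ben^{\lg}$ is large relative to $n$; you want $O(\log n + \log \Ben^{\lg})$, which is exactly what storing $\lfloor \lg \Ben(v) \rfloor$ gives.
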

\par\fussy

\begin{theorem} \label{theorem:upper-bound-aspect-ratio}
On a weighted input hypergraph $G = (V, E, \Ben, \Cost)$,
for any $0 \leq \epsilon < 1$,
our algorithm uses
$O \left( \log \left( \Ben^{\lg} + \Cost^{\lg} \right) + n \log \left( n + m +
\Cost^{\Delta} \right) \right)$
space, processes each input edge $e_{t} \in E$ in
$O (n \log n)$
time, and outputs a $(1 - \epsilon)$-cover certificate $\chi_{\epsilon}$
for $G$ such that
\[
\Cost(\Image(\chi_{\epsilon}))
=
O \left( \min \left\{ 1 / \epsilon, \sqrt{n} \right\} \cdot
\Cost(\Opt) \right) \, ,
\]
where $\Opt$ stands for an optimal edge \mbox{($1$-)cover} of $G$.
\end{theorem}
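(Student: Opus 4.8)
The plan is to run verbatim the algorithm of \Thm{}~\ref{theorem:upper-bound} --- in particular the two parallel invocations of procedure \COVER{}, one on the input hypergraph $G=(V,E,\Ben,\Cost)$ and one on the auxiliary hypergraph with the same edge costs and uniform vertex benefits --- but to encode the per-vertex state more frugally, exploiting that $n$ and $\epsilon$ are now known in advance. In the implementation behind \Thm{}~\ref{theorem:upper-bound} the only piece of state whose bit-length is governed by $\Ben^{\lg}$ or $\Cost^{\lg}$ (as opposed to $\log n$, $\log m$, or $\Cost^{\Delta}$) is the integer $\lceil\lg\Eff(v)\rceil$ stored for each vertex $v$; the stored edge identifier $\EdgeId(e_{t_v})$ already costs only $O(\log m)$ bits. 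First I would replace the $n$ stored copies of $\lceil\lg\Eff(v)\rceil$ by a single global register $\sigma=\lceil\lg\max_{u\in V}\Eff(u)\rceil$ --- an integer confined to an interval of width $O(\Ben^{\lg}+\Cost^{\lg}+\log n)$, hence representable in $O(\log(\Ben^{\lg}+\Cost^{\lg}))$ bits --- together with, for each vertex $v$, the non-negative offset $\sigma-\lceil\lg\Eff(v)\rceil$ \emph{truncated to the window} $\{0,1,\dots,W\}$, where $W=\Theta(\lg(n/\epsilon)+\Cost^{\Delta})$; a vertex whose offset leaves the window is \emph{forgotten}, i.e., reset to the state of a vertex not yet covered by any effective subset. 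When an edge $e_t$ is processed, computing the max-benefit effective subset by sorting $e_t$'s vertices on their stored values costs $O(|e_t|\log|e_t|)=O(n\log n)$ time, and if $\sigma$ then grows, a single $O(n)$-time pass over $V$ refreshes the offsets and forgets whatever has dropped out of the window; this gives the stated $O(n\log n)$ per-edge time. The uniform-benefit copy is handled analogously, with a register $\sigma'$ of $O(\log(\Cost^{\lg}+\log n))$ bits and offsets that, since there the efficiencies are $|T|/\Cost(e_t)$ with $|T|\in[1,n]$, automatically lie in a window of width $O(\log n+\Cost^{\Delta})$ with no forgetting needed. Summing the per-vertex costs over the two copies and adding the two registers yields the claimed space bound $O(\log(\Ben^{\lg}+\Cost^{\lg})+n\log(n+m+\Cost^{\Delta}))$, and the $(1-\epsilon)$-cover certificate $\chi_\epsilon$ is extracted from the final state exactly as in \Thm{}~\ref{theorem:upper-bound}.

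Correctness splits in two. First, the truncation must be harmless for the guarantee of \Thm{}~\ref{theorem:upper-bound}: forgetting a vertex only lowers the $\Eff$-value that \COVER{} sees for it, which can only enlarge the family of effective subsets of any subsequent edge, hence the max-benefit subset picked; since the approximation analysis underlying \Thm{}~\ref{theorem:upper-bound} exhibits a threshold $\rho$ with $\Ben(\{v:\Eff(v)\le\rho\})\le\epsilon\Ben(V)$ and with the certificate edges costing $O(\Cost(\Opt)/\epsilon)$, and since that analysis uses upper bounds on $\Eff$-values only to lower-bound coverage and lower bounds on $\Eff$-values only to upper-bound cost, the guarantee is monotone under this perturbation and survives. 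Second, no vertex relevant to the certificate may be \emph{permanently} forgotten: a closer look at the analysis shows its threshold obeys $\rho=\Omega(\epsilon\,\Ben(V)/\Cost(\Opt))$, while $\max_u\Eff(u)\le\Ben(V)/\min_e\Cost(e)$ and $\Cost(\Opt)\le n\max_e\Cost(e)$ (one edge per vertex suffices), so $\lg(\max_u\Eff(u)/\rho)\le\lg n+\lg(1/\epsilon)+\Cost^{\Delta}+O(1)\le W$. Fix a vertex $v$ with final value $\Eff(v)>\rho$ and let $t_v$ be the last step placing $v$ in an effective subset; from $t_v$ on, $\Eff(v)$ is unchanged and equals the efficiency of that step, which exceeds $\rho$, while the running maximum never exceeds $\max_u\Eff(u)$, so at every time $t\ge t_v$ --- in particular at termination --- the offset of $v$ is at most $\lg(\max_u\Eff(u)/\rho)+O(1)\le W$ and $v$ is inside the window. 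An easier version of this bound applies to the uniform-benefit copy, whose $O(\sqrt n)$-approximate $1$-cover furnishes the certificate used when $\epsilon\le1/\sqrt n$. Together the two points show that the extracted $\chi_\epsilon$ is a $(1-\epsilon)$-cover certificate with $\Cost(\Image(\chi_\epsilon))=O(\min\{1/\epsilon,\sqrt n\}\cdot\Cost(\Opt))$.

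The step I expect to be the real obstacle is the first correctness point: reopening the analysis of \COVER{} and confirming that it genuinely tolerates mid-stream resets of $\Eff$-values --- in particular that a vertex repeatedly covered and then forgotten cannot inflate the number of \emph{distinct} edges charged to the certificate, and that the averaging argument delivering the threshold $\rho$ (and the lower bound $\rho=\Omega(\epsilon\Ben(V)/\Cost(\Opt))$ needed above) is unaffected. I expect it to go through because each reset is chargeable to a doubling of the global maximum $\max_u\Eff(u)$, of which there are only $O(\Ben^{\lg}+\Cost^{\lg}+\log n)$, and because the cost bound counts only the final effective edges of the vertices above $\rho$; but phrasing the \COVER{} analysis so that it applies as a black box to \emph{any} execution in which $\Eff$-values may be reset --- rather than only to the canonical one --- is where the actual work lies. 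A routine secondary point is that all arithmetic happens on the $\lceil\lg(\cdot)\rceil$ scale, so forming $\Ben(T)=\sum_{u\in T}\Ben(u)$ from up to $n$ summands incurs an $O(\log n)$ additive error in $\lceil\lg\Ben(T)\rceil$; this is absorbed into the constants and into $W$.
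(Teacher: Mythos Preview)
Your route differs from the paper's, and the paper's is simpler. The paper also keeps a handful of global registers---namely $\Ben_{\min}$, $\Cost_{\min}$, and $\Eff_{\min}$, each costing $O(\log(\Ben^{\lg}+\Cost^{\lg}))$ bits---and stores every per-vertex quantity as an offset from the relevant register; this is what accounts for the $O(n\log n)$ per-edge time, since a change in a register triggers a pass over $V$. But the paper does \emph{not} truncate or reset $\Eff(\cdot)$. Instead it discards vertices by a \emph{static} criterion: as soon as $\Ben(v)\leq\epsilon\sigma/n$ (where $\sigma$ is the running benefit sum), $v$ is declared \emph{insignificant} and erased. The total benefit of insignificant vertices is at most $\epsilon\,\Ben(V)$, so coverage drops by at most an additive $\epsilon$, and on the surviving vertices \COVER{} runs exactly as before, so the whole chain \Lem{}~\ref{lemma:up-bound-benefit-single-edge}--\ref{lemma:main-procedure} applies verbatim. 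The benefit aspect ratio among surviving vertices is at most $n/\epsilon$, which in turn bounds the range of the level variable by $O(\log n+\Cost^{\Delta})$, yielding the stated space with no windowing of effectiveness at all.

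Your proposal has two concrete gaps. First, the claim that ``the only piece of state whose bit-length is governed by $\Ben^{\lg}$ or $\Cost^{\lg}$ is $\lceil\lg\Eff(v)\rceil$'' is false: procedure P4 stores $\Ben(v)$ for every vertex (the extraction step needs these to locate $r^{*}$), and procedure P3 must retain enough of the current min-cost edge's cost to compare against newcomers. You never say how these are compressed, and your extraction ``exactly as in \Thm{}~\ref{theorem:upper-bound}'' cannot run without them. Second, the obstacle you flag is real, not merely a matter of phrasing: the cost bound (\Lem{}~\ref{lemma:up-bound-edge-costs}) argues that the effective subsets $R(e_t)$ for distinct $e_t\in S(r)$ are pairwise disjoint, and that argument uses that $\Eff(\cdot)$ is non-decreasing over time, not a one-sided bound on it. A vertex covered at level $r$, then forgotten, then re-covered at level $r$ would sit in two such subsets. (It may be salvageable---for any level $r$ inside the \emph{final} window, a forgetting event between the two coverings would force $\sigma_\infty>r+W$, contradicting $r\geq\sigma_\infty-W$---but you would still have to re-establish \Lem{}~\ref{lemma:up-bound-benefit-single-edge} and the existence of the threshold $r^{*}$ in the presence of resets, and your monotonicity sketch does not do that.) The paper's approach sidesteps both issues precisely because the decision to erase a vertex depends only on its benefit, which never changes, leaving the \COVER{} dynamics on the retained vertices untouched.
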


\subsection{The Algorithm}
\label{section:algorithm-description}
In what follows we consider some weighted hypergraph $G = (V, E, \Ben, \Cost)$
with optimal edge \mbox{($1$-)cover} $\Opt$.
The main building block of algorithm \SSSC{} is a procedure referred to as
\COVER{}.
This procedure processes the stream of edges and outputs for every node $v \in
V$, an identifier of an edge $e$ that covers it, together with an integer
variable that intuitively captures the quality of edge $e$ in covering $v$.
Algorithm \SSSC{} uses two parallel invocations of \COVER{}, one on the input
graph $G$ and one on some modification of $G$, and upon termination of the
input stream, extracts the desired cover certificate from the output of these
two invocations.

\subsubsection{Procedure \COVER{}}
The procedure maintains for each vertex $v \in V$, the following variables:
\begin{DenseItemize}

\item
$\EdgeId(v) =$ an identifier $\Id(e)$ of some edge $e \in E$; and

\item
$\Eff(v) =$ a (not necessarily positive) integer refereed to as the
\emph{effectiveness} of $v$.

\end{DenseItemize}

We denote by $\EdgeId_{t}(v)$ and $\Eff_{t}(v)$ the values of 
$\EdgeId(v)$ and $\Eff(v)$,
respectively, at time $t$ (i.e., just before $e_t$ is processed).
Procedure \COVER{} that relies on the following definition is presented in
Algorithm~\ref{algorithm:cover}.

\begin{definition*}[\textbf{level, effectiveness}]
Consider edge $e_{t}$ presented at time $t$ and some subset $T \subseteq
e_{t}$.
The \emph{level} of $T$ at time $t$, denoted $\Lev_{t}(T)$, is defined as
\[
\Lev_{t}(T) = \left\lceil \lg \frac{\Ben(T)}{\Cost(e_{t})} \right\rceil \, .
\]
Subset $T$ is said to be \emph{effective} at time $t$ if for every $v \in T$,
it holds that
\[
\Lev_{t}(T) > \Eff_{t}(v) \, .
\]
\end{definition*}
\noindent
Note that $\emptyset$ is always vacuously effective.

\begin{algorithm}[h]
\caption{\label{algorithm:cover}
$\COVER(G = (V, E, \Ben, \Cost))$}
\begin{algorithmic}

\STATE \textbf{Initialization}
$\forall v \in V$:
$\EdgeId(v) \leftarrow \Null$  and 
$\Eff(v) \leftarrow - \infty$

\FOR{$t = 0, 1, \ldots$}
  \STATE Read edge $e_t \in E$ from the stream
  \STATE Compute an effective subset $T \subseteq e_{t}$ of largest benefit
$\Ben(T)$  \\
  \FORALL{$v \in T$}
    \STATE $\EdgeId(v) \leftarrow \Id(e_{t})$
    \STATE $\Eff(v) \leftarrow \Lev_{t}(T)$
  \ENDFOR
\ENDFOR

\RETURN  $\EdgeId(\cdot)$ and $\Eff(\cdot)$
\end{algorithmic}
\end{algorithm} 

\subsubsection{Algorithm \SSSC{}}
\label{section:algorithm-sssc}
We are now ready to present our algorithm \SSSC{}.
On input weighted graph $G = (V, E, \Ben, \Cost)$,
algorithm \SSSC{} runs in parallel the following procedures that process the
stream of edges:
\begin{DenseEnumerate}

\item[\textbf{P1:}]
$(\EdgeId_{\infty}(\cdot), \Eff_{\infty}(\cdot))
\leftarrow
\COVER(G = (V, E, \Ben, \Cost))$.

\item[\textbf{P2:}]
$(\EdgeId^{\mathbf{1}}_{\infty}(\cdot), \Eff^{\mathbf{1}}_{\infty}(\cdot))
\leftarrow
\COVER(G = (V, E, \mathbf{1}, \Cost))$,
where $\mathbf{1}$ stands for the function that assigns a unit benefit to all
vertices $v \in V$.

\item[\textbf{P3:}]
A procedure that maintains for every vertex $v \in V$, a variable
$\MinEdgeCost(v)$ that stores the identifier of the minimum cost edge that
covers $v$, seen so far.

\item[\textbf{P4:}]
A procedure that stores for every vertex $v \in V$, its benefit $\Ben(v)$.

\end{DenseEnumerate}

Upon termination of the input stream, \SSSC{} takes some parameter
$0 \leq \epsilon < 1$
and extracts the desired $(1 - \epsilon)$-cover certificate for $G$ from the
variables returned by procedures P1--P4.
We distinguish between the following two cases.
\begin{DenseItemize}

\item
Case $\epsilon \geq 1 / \sqrt{n}$: \\
The algorithm looks for the largest integer $r^{*} $ such that
$\Ben(I(\leq r^{*})) \leq \epsilon \Ben(V)$,
where
\[
I(\leq r^{*}) = \{ v \in V : \Eff_{\infty}(v) \leq r^{*} \} \, ,
\]
and returns the partial function
$\chi : V \rightarrow \Id(E)$
that maps every vertex
$v \in V - I(\leq r^{*})$
to
$\EdgeId_{\infty}(v)$.

\item
Case $\epsilon < 1 / \sqrt{n}$: \\
The algorithm looks for the largest integer $r^{*}$ such that
$|I^{\mathbf{1}}(\leq r^{*})| \leq \sqrt{n}$,
where
\[
I^{\mathbf{1}}(\leq r^{*}) = \{ v \in V : \Eff^{\mathbf{1}}_{\infty}(v) \leq
r^{*} \}
\]
and sets $\chi'$ to be the partial function
$\chi' : V \rightarrow \Id(E)$
that maps every vertex
$v \in V - I^{\mathbf{1}}(\leq r^{*})$
to
$\EdgeId^{\mathbf{1}}_{\infty}(v)$.
Then, it returns the (complete) function
$\chi'' : V \rightarrow \Id(E)$
extended from $\chi'$ by mapping every vertex
$v \in I^{\mathbf{1}}(\leq r^{*})$
to
$\MinEdgeCost(v)$.

\end{DenseItemize}

Notice that the unweighted case is much simpler:
If $G = (V, E)$, then procedure P2 is identical to procedure P1;
moreover, procedures P3 and P4 are  redundant since all vertices/edges admit a
unit benefit/cost.
Further note that procedures P1--P4 are oblivious to
$\epsilon$.
Upon termination of the input stream, the algorithm extracts, for the given
$0 \leq \epsilon < 1$,
the desired $(1 - \epsilon)$-cover certificate for $G$ from the variables
returned by procedures P1--P4.  
In fact, several such cover certificates can be extracted for different values
of $\epsilon$.

\subsection{Analysis}
\label{section:algorithm-analysis}
We begin our analysis with some observations regarding our main procedure
\COVER{}.

\begin{observation} \label{observation:adding-vertex-to-effective-subset}
If $T \subseteq e_{t}$ is effective at time $t$ and $v \in T$, then $T \cup \{
u \}$ is effective at time $t$ for every $u \in e_{t}$ such that
$\Eff_{t}(u) \leq \Eff_{t}(v)$.
\end{observation}

Notice that \COVER{}'s updating rule guarantees that the effectiveness
$\Eff(v)$ is non-decreasing throughout the course of the execution.
Employing \Obs{}~\ref{observation:adding-vertex-to-effective-subset}, we
can now derive \Obs{}\ \ref{observation:main-procedure-run-time} and
\ref{observation:effective-ratio-increases} (the former follows by sorting the
vertices $v \in e_{t}$ in non-decreasing order of the value of the
effectiveness $\Eff(v)$).

\begin{observation} \label{observation:main-procedure-run-time}
The run-time of \COVER{} on edge $e_{t}$ is $O (|e_{t}| \log |e_{t}|)$.
\end{observation}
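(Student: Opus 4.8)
The plan is to exhibit an implementation of the one nontrivial line of Algorithm~\ref{algorithm:cover} --- ``Compute an effective subset $T \subseteq e_{t}$ of largest benefit $\Ben(T)$'' --- that runs within the claimed bound; the remaining per-edge work (reading $e_{t}$, and updating $\EdgeId(v)$ and $\Eff(v)$ for the $v \in T$) is clearly $O(|e_{t}|)$. The starting point is a structural claim: among the effective subsets of $e_{t}$ of largest benefit there is one that is a \emph{level set} of the effectiveness values, i.e.\ of the form $\{ v \in e_{t} : \Eff_{t}(v) \le \tau \}$ for some threshold $\tau$. Indeed, let $T$ be any effective subset of $e_{t}$ of largest benefit and set $m = \max_{v \in T} \Eff_{t}(v)$; if $T = \emptyset$ there is nothing to prove, so assume $T \neq \emptyset$. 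For any $u \in e_{t} \setminus T$ with $\Eff_{t}(u) \le m$, pick $v \in T$ attaining $\Eff_{t}(v) = m$; then $\Eff_{t}(u) \le \Eff_{t}(v)$, so Observation~\ref{observation:adding-vertex-to-effective-subset} guarantees that $T \cup \{ u \}$ is effective, and since all benefits are positive it has strictly larger benefit, contradicting the choice of $T$. Hence $T = \{ v \in e_{t} : \Eff_{t}(v) \le m \}$, as claimed.

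First I would sort the vertices of $e_{t}$ as $v_{1}, \dots, v_{k}$, where $k = |e_{t}|$, in non-decreasing order of $\Eff_{t}(v_{i})$; this costs $O(|e_{t}| \log |e_{t}|)$. By the structural claim it then suffices to examine the $k + 1$ prefixes $T_{j} = \{ v_{1}, \dots, v_{j} \}$, $j = 0, 1, \dots, k$ (every level set of $e_{t}$ is one of these), and return the largest-benefit one that is effective; note $T_{0} = \emptyset$ is always effective, so the search is non-empty. Maintaining the running benefit $\Ben(T_{j}) = \Ben(T_{j - 1}) + \Ben(v_{j})$ incrementally, the level $\Lev_{t}(T_{j}) = \lceil \lg(\Ben(T_{j}) / \Cost(e_{t})) \rceil$ is computed in $O(1)$ word operations (here I invoke the working assumption that numerical values fit in $O(\log n)$ bits), and $T_{j}$ is effective iff $\Lev_{t}(T_{j}) > \Eff_{t}(v_{j})$, since $v_{j}$ has the largest effectiveness among the vertices of $T_{j}$. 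Scanning $j$ from $0$ to $k$ and remembering the largest-benefit effective prefix seen so far thus takes $O(|e_{t}|)$ additional time; since $\Ben(\cdot)$ is strictly increasing along the prefixes, the prefix found has the maximum benefit among all effective prefixes, hence --- by the structural claim --- among all effective subsets of $e_{t}$. Summing, the per-edge run-time is $O(|e_{t}| \log |e_{t}|)$.

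I do not anticipate a real obstacle; the only points needing mild care are (i) stating the structural claim cleanly, including the degenerate case $T = \emptyset$, and (ii) observing that a prefix $T_{j}$ which ``splits'' a tie-class of equal-effectiveness vertices is never the unique winner: if $T_{j}$ is effective and $\Eff_{t}(v_{j + 1}) = \Eff_{t}(v_{j})$, then $\Lev_{t}(T_{j + 1}) \ge \Lev_{t}(T_{j}) > \Eff_{t}(v_{j + 1})$, so $T_{j + 1}$ is also effective and strictly better, which is why the scan automatically lands on a genuine level set and the reduction to the structural claim goes through. Everything else is routine bookkeeping.
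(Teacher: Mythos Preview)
Your proof is correct and takes essentially the same approach as the paper: the paper's justification is just the one-line hint that the observation ``follows by sorting the vertices $v \in e_{t}$ in non-decreasing order of the value of the effectiveness $\Eff(v)$'' together with Observation~\ref{observation:adding-vertex-to-effective-subset}, and you have simply fleshed this out into the full structural-claim-plus-prefix-scan argument.
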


\begin{observation} \label{observation:effective-ratio-increases}
If $T \subseteq e_{t}$ is effective at time $t$, then for every $v \in T$, it
holds that
\[
\Eff_{t + 1}(v)
\geq
\Lev_t(T) \, .
\]
\end{observation}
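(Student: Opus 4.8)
The plan is to compare the arbitrary effective subset $T$ against the subset actually selected by \COVER{} when it processes $e_{t}$ — call it $T'$ — and to show that in fact $T \subseteq T'$. Once this is established, the claim follows at once: by the updating rule every $v \in T'$ gets $\Eff_{t+1}(v) = \Lev_t(T')$, and since $T'$ is an effective subset of $e_t$ of largest benefit while $T$ is merely \emph{some} effective subset, $\Ben(T') \geq \Ben(T)$; because $\Lev_t(S) = \lceil \lg(\Ben(S)/\Cost(e_t)) \rceil$ is non-decreasing in $\Ben(S)$, this gives $\Lev_t(T') \geq \Lev_t(T)$, hence $\Eff_{t+1}(v) = \Lev_t(T') \geq \Lev_t(T)$ for every $v \in T$.

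The core step is therefore to prove $T \cup T' = T'$. First I would check that $S := T \cup T'$ is itself effective at time $t$. Monotonicity of $\Ben(\cdot)$ and of $\Lev_t(\cdot)$ gives $\Lev_t(S) \geq \max\{\Lev_t(T), \Lev_t(T')\}$; and every $w \in S$ lies in $T$ or in $T'$, so either $\Eff_t(w) < \Lev_t(T) \leq \Lev_t(S)$ or $\Eff_t(w) < \Lev_t(T') \leq \Lev_t(S)$, which is exactly effectiveness of $S$. (This is the natural "merging" generalization of \Obs{}~\ref{observation:adding-vertex-to-effective-subset}, which only adds a single low-effectiveness vertex at a time.) Since $S$ is an effective subset of $e_t$ and $T'$ has largest benefit among such subsets, $\Ben(S) \leq \Ben(T')$; together with $S \supseteq T'$ this yields $\Ben(S) = \Ben(T')$, and since $\Ben$ takes values in $\Rationals_{>0}$ no vertex can be adjoined to $T'$ without strictly increasing the benefit, so $S = T'$ and hence $T \subseteq T'$.

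The one place requiring care — and the only real obstacle — is the tie-breaking in \COVER{}'s selection step: the procedure picks "an effective subset of largest benefit", so a priori $S = T \cup T'$ could share the benefit of $T'$ yet differ from it. This is precisely what positivity of the vertex benefits rules out, and I would state that explicitly. Everything else is a direct unwinding of the definitions of level and effectiveness, plus the observation (already noted in the text) that $\Eff(v)$ is non-decreasing in $t$.
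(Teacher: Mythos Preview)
Your proof is correct and matches what the paper intends. The paper does not spell out a proof of this observation---it only remarks that it follows from \Obs{}~\ref{observation:adding-vertex-to-effective-subset}---and your union argument (showing $T\cup T'$ is effective, hence by maximality and strict positivity of benefits $T\subseteq T'$, hence $\Eff_{t+1}(v)=\Lev_t(T')\geq\Lev_t(T)$) is precisely the natural way to flesh that out; your explicit handling of the tie-breaking issue via $\Ben:V\to\Rationals_{>0}$ is the right detail to record.
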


We are now ready to establish the following lemma.

\begin{lemma} \label{lemma:up-bound-benefit-single-edge}
Consider some integer $r$.
Procedure \COVER{} guarantees that
\[
\Ben \left( \left\{ v \in e_{t} \mid \Eff_{t + 1}(v) \leq r \right\} \right)
<
2^{r + 1} \cdot \Cost(e_{t}) \, .
\]
\end{lemma}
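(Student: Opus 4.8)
The plan is to argue by contradiction, exploiting the maximality of the effective subset chosen by \COVER{}. Fix an integer $r$, let $T \subseteq e_{t}$ be the effective subset of largest benefit picked at time $t$, and set $S = \{ v \in e_{t} \mid \Eff_{t+1}(v) \le r \}$. Assume toward a contradiction that $\Ben(S) \ge 2^{r+1}\,\Cost(e_{t})$; note this forces $S \ne \emptyset$ since vertex benefits and edge costs are strictly positive. Two elementary facts will be used throughout: (i) the effectiveness $\Eff(v)$ is non-decreasing in time (by \COVER{}'s update rule), so $\Eff_{t}(v) \le \Eff_{t+1}(v)$ for every $v$; and (ii) the map $U \mapsto \Lev_{t}(U)$ is monotone under inclusion, because $\Ben$ is monotone and $\lceil \lg(\,\cdot\,/\Cost(e_{t}))\rceil$ is monotone.

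The first step is to show that $S \cup T$ is effective at time $t$. From $\Ben(S) \ge 2^{r+1}\Cost(e_{t})$ and the definition of level we get $\Lev_{t}(S) \ge r+1$, hence $\Lev_{t}(S \cup T) \ge \max\{\Lev_{t}(S), \Lev_{t}(T)\} \ge r+1$ by fact (ii). Now take any $u \in S \cup T$: if $u \in T$ then $\Eff_{t}(u) < \Lev_{t}(T) \le \Lev_{t}(S \cup T)$ since $T$ is effective; if $u \in S$ then by fact (i), $\Eff_{t}(u) \le \Eff_{t+1}(u) \le r < r+1 \le \Lev_{t}(S) \le \Lev_{t}(S \cup T)$. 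In either case $\Lev_{t}(S \cup T) > \Eff_{t}(u)$, so $S \cup T$ is effective. (This can alternatively be read off from \Obs{}~\ref{observation:adding-vertex-to-effective-subset} applied repeatedly to add the vertices of $S$ to $T$.)

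The second step invokes the maximality of $T$: since $T$ has the largest benefit among effective subsets of $e_{t}$ and $S \cup T \supseteq T$ is effective, we must have $\Ben(S \cup T) = \Ben(T)$, i.e.\ $\Ben(S \setminus T) = 0$, and since vertex benefits are positive this means $S \subseteq T$. But then, when $e_{t}$ is processed, every $v \in S$ gets the update $\Eff_{t+1}(v) = \Lev_{t}(T) \ge \Lev_{t}(S) \ge r+1 > r$, contradicting the defining property $\Eff_{t+1}(v) \le r$ of the (nonempty) set $S$. Therefore $\Ben(S) < 2^{r+1}\Cost(e_{t})$, which is the claim.

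I expect the only genuine subtlety to be the realization that one should not work with $S$ directly but with $S \cup T$, so that the ``largest benefit'' property of $T$ can be brought to bear; once that idea is in place, checking that $S \cup T$ is effective is the routine two-case verification above, and the contradiction follows immediately from the update rule. Everything else (monotonicity of $\Lev_t$, non-decreasing $\Eff$, positivity handling the empty/degenerate cases) is bookkeeping.
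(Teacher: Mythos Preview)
Your proof is correct and follows the same contradiction strategy as the paper, but the paper's version is shorter: it observes that $S$ itself is already effective at time $t$ (since $\Lev_t(S) \geq r+1 > r \geq \Eff_{t+1}(v) \geq \Eff_t(v)$ for every $v \in S$) and then applies \Obs{}~\ref{observation:effective-ratio-increases} directly to $S$ to obtain $\Eff_{t+1}(v) \geq \Lev_t(S) \geq r+1$, a contradiction. Your detour through $S \cup T$ and the maximality of $T$ is exactly what \Obs{}~\ref{observation:effective-ratio-increases} already packages for you, so your closing remark that ``one should not work with $S$ directly'' is a misreading of where the difficulty lies---working with $S$ directly is precisely what the paper does.
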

\begin{proof}
Assume by contradiction that there exists a subset $R \subseteq e_{t}$,
$\Ben(R) \geq 2^{r + 1} \cdot \Cost(e_{t})$,
such that
$\Eff_{t + 1}(v) \leq r$
for every $v \in R$.
Since the effectiveness is non-decreasing, it follows that
$\Eff_{t}(v) \leq r$
for every $v \in R$, hence the assumption that
$\Ben(R) \geq 2^{r + 1} \cdot \Cost(e_{t})$
ensures that $R$ is effective at time $t$.
But by \Obs{}~\ref{observation:effective-ratio-increases}, the
effectiveness
$\Eff_{t + 1}(v)$
should have been at least $r + 1$ for every $v \in R$, in contradiction to
the choice of $R$.
\end{proof}

Let $\Eff_{\infty}(v)$ denote the value of the variable $\Eff(v)$ upon
termination of the input stream.
Given some integer $r$, define
\[
I(r) = \left\{ v \in V \mid \Eff_{\infty}(v) = r \right\}
\quad \text{and} \quad
S(r) = \left\{ e \in E \mid \exists v \in I(r) \text{ s.t. } \EdgeId(v)
= \Id(e) \right\}
\]
in accordance with the notation defined in
\Sect{}~\ref{section:algorithm-sssc}.
We extend these two definitions to intervals of integers in the natural way
and denote the intervals
$(-\infty, r]$
and
$(r, \infty)$
in this context by $\leq r$ and $> r$, respectively.

\begin{lemma} \label{lemma:up-bound-benefits}
Consider some integer $r$.
Procedure \COVER{} guarantees that
\[
\Ben(I(\leq r)) < 2^{r+1}\cdot \Cost(\Opt) \, .
\]
\end{lemma}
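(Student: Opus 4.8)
The plan is to bound $\Ben(I(\le r))$ by summing, over the relevant edges, the contribution guaranteed by Lemma~\ref{lemma:up-bound-benefit-single-edge}, and then to relate the resulting sum of edge costs to $\Cost(\Opt)$ via a charging argument against an optimal cover. First I would observe that every vertex $v \in I(\le r)$ has $\EdgeId_\infty(v) = \Id(e)$ for some edge $e$ that was presented at a time step $t$ where $v$ was placed into an effective subset $T \subseteq e_t$; at that step $\Eff_{t+1}(v) = \Lev_t(T) \le \Eff_\infty(v) \le r$, since effectiveness is non-decreasing. So each such $v$ lies in the set $\{u \in e_t : \Eff_{t+1}(u) \le r\}$ for the edge $e = e_t$ that ``finally'' covers it. Grouping vertices of $I(\le r)$ by the edge $\EdgeId_\infty(\cdot)$ assigned to them, and using that the assignment happened at the \emph{last} update (so each vertex is counted for exactly one edge–time pair), Lemma~\ref{lemma:up-bound-benefit-single-edge} gives $\Ben(\{v \in I(\le r) : \EdgeId_\infty(v) = \Id(e)\}) < 2^{r+1}\Cost(e)$ for each edge $e$ that receives at least one such vertex. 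Summing over that collection $\mathcal{E}$ of edges yields $\Ben(I(\le r)) < 2^{r+1} \sum_{e \in \mathcal{E}} \Cost(e)$.

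The remaining task is to show $\sum_{e \in \mathcal{E}} \Cost(e) \le \Cost(\Opt)$ — but this is false in general, so the charging has to be more careful, and this is the main obstacle. The fix I expect to use: rather than summing over all edges that ever receive a low-effectiveness vertex, I would charge each vertex $v \in I(\le r)$ to the edge $e^*_v \in \Opt$ covering it, and argue that for each fixed $e^* \in \Opt$, the total benefit of vertices of $I(\le r)$ charged to $e^*$ is less than $2^{r+1}\Cost(e^*)$. To see this, fix $e^* \in \Opt$ and let $W = \{v \in I(\le r) : e^*_v = e^*\} \subseteq e^*$. Since $\Eff_\infty(v) \le r$ for all $v \in W$, the subset $W \subseteq e^*$ would be effective at the time $t^*$ at which $e^*$ is presented \emph{unless} $\Ben(W)/\Cost(e^*)$ is already too small — precisely, if $\Ben(W) \ge 2^{r+1}\Cost(e^*)$ then $\Lev_{t^*}(W) \ge r+1 > \Eff_{t^*}(v)$ for every $v \in W$ (using that $\Eff_{t^*}(v) \le \Eff_\infty(v) \le r$), so $W$ is effective at time $t^*$; then the effective subset actually chosen at $t^*$ has benefit at least $\Ben(W)$, and by Observation~\ref{observation:effective-ratio-increases} every $v$ in that chosen subset would get $\Eff_{t^*+1}(v) \ge \Lev_{t^*}(W) \ge r+1$. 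This does not immediately contradict $v \in W$ being covered, but it does force a contradiction for those $v \in W$ that the chosen subset happens to contain; to handle the rest I would instead argue directly: if $\Ben(W) \ge 2^{r+1}\Cost(e^*)$, then $W$ itself is an effective subset at time $t^*$, so the set $\{u \in e^* : \Eff_{t^*+1}(u) \le r\}$ cannot contain all of $W$ — but by monotonicity every $v \in W$ has $\Eff_{t^*+1}(v) \le \Eff_\infty(v) \le r$, a contradiction. Hence $\Ben(W) < 2^{r+1}\Cost(e^*)$.

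Finally I would assemble the pieces: since $\Opt$ is a $1$-cover, the sets $W_{e^*} = \{v \in I(\le r) : e^*_v = e^*\}$ over $e^* \in \Opt$ cover all of $I(\le r)$ (choosing one $e^*_v \in \Opt$ per vertex makes them disjoint), so
\[
\Ben(I(\le r)) \;=\; \sum_{e^* \in \Opt} \Ben(W_{e^*}) \;<\; \sum_{e^* \in \Opt} 2^{r+1}\Cost(e^*) \;=\; 2^{r+1}\Cost(\Opt),
\]
which is the claimed bound. The delicate point to get right in the writeup is the second paragraph's argument that $\Ben(W) \ge 2^{r+1}\Cost(e^*)$ is impossible — specifically, verifying that the effectiveness condition for $W$ at time $t^*$ follows from $\Eff_{t^*}(v) \le r$ (monotonicity pushing values down to time $t^*$) together with $\Lev_{t^*}(W) = \lceil \lg(\Ben(W)/\Cost(e^*))\rceil \ge r+1$, and then invoking Lemma~\ref{lemma:up-bound-benefit-single-edge} (or Observation~\ref{observation:effective-ratio-increases} directly) to derive the contradiction with $\Eff_{t^*+1}(v) \le r$ for all $v \in W$.
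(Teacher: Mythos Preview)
Your proposal is correct and arrives at essentially the same argument as the paper, though by a more circuitous route. The paper's proof skips your first paragraph's false start entirely and goes straight to the charging against $\Opt$: it observes that, by monotonicity of effectiveness, $\{v \in e : \Eff_\infty(v) \le r\} \subseteq \{v \in e : \Eff_{t+1}(v) \le r\}$ for each edge $e = e_t$, so Lemma~\ref{lemma:up-bound-benefit-single-edge} directly yields $\Ben(\{v \in e : \Eff_\infty(v) \le r\}) < 2^{r+1}\Cost(e)$ for every $e \in E$, and then sums over $e \in \Opt$. Your second paragraph reproves this per-edge bound from scratch (via the effectiveness of $W$ and Observation~\ref{observation:effective-ratio-increases}) rather than simply invoking Lemma~\ref{lemma:up-bound-benefit-single-edge}; both are valid, but the paper's version is a two-line application of the lemma you already have.
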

\begin{proof}
Since the effectiveness is non-decreasing,
\Lem{}~\ref{lemma:up-bound-benefit-single-edge} ensures that for every edge $e
\in E$, it holds that
\[
\Ben \left( \left\{ v \in e \mid \Eff_{\infty}(v) \leq r \right\} \right)
<
2^{r+1} \cdot \Cost(e) \, .
\]
The assertion is established by observing that
\[
\Ben(I(\leq r))
\leq
\sum_{e \in \Opt} \Ben \left( \left\{ v \in e \mid
\Eff_{\infty}(v) \leq r \right\} \right)
<
\sum_{e \in \Opt} 2^{r+1} \cdot \Cost(e)
 =
2 ^{r+1}\cdot \Cost(\Opt) \, ,
\]
where the first inequality is due to the fact that $\Opt$ is an edge cover of
$G$.
\end{proof}

\Lem{}~\ref{lemma:up-bound-benefits} will be used to bound from above the
benefit of the vertices that are not covered by the edges returned by our
algorithm.
We now turn to bound from above the cost of these edges.

\begin{lemma} \label{lemma:up-bound-edge-costs}
Consider some integer $r$.
The edge collection $S(r)$ satisfies
\[
\Cost(S(r))
<
\Ben(V) / 2^{r - 1} \, .
\]
\end{lemma}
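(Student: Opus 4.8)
The plan is to bound $\Cost(S(r))$ edge-by-edge and then show the total cannot be too large, using the "each $\Eff(v)$ doubles with every update" phenomenon to charge costs against benefits. First I would recall what $S(r)$ consists of: for each edge $e \in S(r)$, there is at least one vertex $v$ with $\EdgeId_\infty(v) = \Id(e)$ and $\Eff_\infty(v) = r$. That vertex $v$ received its final update from $e$ at some time $t$ via an effective subset $T \subseteq e_t$ with $\Lev_t(T) = r$ (this is exactly when $\EdgeId(v)$ was last set to $\Id(e)$ and $\Eff(v)$ to $\Lev_t(T)$, and since the effectiveness is non-decreasing and $\Eff_\infty(v) = r$, no later update touched $v$). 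The key inequality from $\Lev_t(T) = r$ is
\[
\left\lceil \lg \frac{\Ben(T)}{\Cost(e_t)} \right\rceil = r
\quad\Longrightarrow\quad
\frac{\Ben(T)}{\Cost(e_t)} > 2^{r-1}
\quad\Longrightarrow\quad
\Cost(e_t) < \Ben(T) / 2^{r-1} \, .
\]
So each edge $e \in S(r)$ has its cost bounded by $\Ben(T_e)/2^{r-1}$, where $T_e$ is the effective subset witnessing the relevant update.

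The next step is to argue these witnessing subsets $T_e$, ranged over $e \in S(r)$, are pairwise disjoint (or at least that their total benefit is at most $\Ben(V)$). This is where I expect the main subtlety to lie. The natural claim is: if $e \ne e'$ are both in $S(r)$ with witnessing updates at times $t$ and $t'$ respectively, then $T_e$ and $T_{e'}$ are disjoint. Suppose $u \in T_e \cap T_{e'}$ and WLOG $t < t'$. At time $t$, vertex $u$ gets $\Eff(u) \leftarrow r$ (since $\Lev_t(T_e) = r$). But for $T_{e'}$ to be effective at the later time $t'$, we need $\Lev_{t'}(T_{e'}) > \Eff_{t'}(u) \ge r$, so $\Lev_{t'}(T_{e'}) \ge r+1$, which would set $\Eff(u)$ to at least $r+1$ at time $t'$ — contradicting nothing yet for $u$, but I need to be careful. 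Actually the cleaner route: I should define, for each $e \in S(r)$, the set $T_e$ to be $\{v \in e : \EdgeId_\infty(v) = \Id(e), \Eff_\infty(v) = r\}$ — i.e., the vertices that "point to" $e$ at level exactly $r$ at the end. These sets are disjoint across distinct $e$ by definition (a vertex points to only one edge). Then I'd show $\Cost(e) < \Ben(T_e)/2^{r-1}$: every vertex in $T_e$ was last updated by $e$ via a common effective subset containing all of $T_e$ (they all got $\EdgeId \leftarrow \Id(e)$ at the same time $t$, namely the last time $e$ was processed and did an update, and the subset $T \subseteq e_t$ chosen then satisfies $\Lev_t(T) = r$ and contains $T_e$), so $\Ben(T_e) \le \Ben(T) < 2^{r+1}\Cost(e)$... wait, that gives the wrong direction.

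Let me reconsider. The right pairing is: $T_e \supseteq$ the vertices pointing to $e$ at level $r$ — but actually I want $\Ben(T) \ge \Ben(T_e)$ where $T$ is the effective subset, combined with $\Cost(e) < \Ben(T)/2^{r-1}$, which needs $\Ben(T)$ as an upper bound on $2^{r-1}\Cost(e)$, i.e. $\Cost(e) < \Ben(T)/2^{r-1}$. That holds. But then I'd need $\Ben(T_e) \ge \Ben(T)$, not $\le$. The resolution: take $T_e$ to be the \emph{entire} effective subset $T$ chosen at $e$'s last updating time, restricted to those vertices that still point to $e$ at the end; the vertices of $T$ that got overwritten later contribute to $S(r')$ for some $r' > r$ or point elsewhere, so they don't matter for $S(r)$ but they \emph{do} inflate $\Ben(T)$ above $\Ben(T_e)$. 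So I actually get $\Cost(e) < \Ben(T)/2^{r-1}$ with $\Ben(T) \ge \Ben(T_e)$, wrong direction again. Hmm — so the correct statement must instead sum over a cleverer quantity. I suspect the intended argument sums $\Cost(S(r))$ and uses that the $T$'s (full effective subsets at the distinct updating edges) have the property that each vertex $v$ lies in at most a bounded number of them, OR uses \Lem{}~\ref{lemma:up-bound-benefit-single-edge} more directly. In fact: for $e \in S(r)$, all vertices of the chosen $T$ satisfy $\Eff_{t+1}(v) = r$ right after, and $\Ben(T) < 2^{r+1}\Cost(e)$ is false — we have $\Ben(T) \ge 2^{r-1}\Cost(e)$ and by the \emph{largest-benefit} choice of $T$ together with \Lem{}~\ref{lemma:up-bound-benefit-single-edge}, $\Ben(\{v \in e_t : \Eff_{t+1}(v) \le r\}) < 2^{r+1}\Cost(e_t)$, and $T$ is contained in that set, so $\Ben(T) < 2^{r+1}\Cost(e_t)$. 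The hard part will be packaging this: I'd want the witnessing subsets for distinct edges of $S(r)$ to be disjoint so that $\sum_{e \in S(r)} \Ben(T_e) \le \Ben(V)$, giving $\Cost(S(r)) = \sum \Cost(e) \le \sum \Ben(T_e)/2^{r-1} \le \Ben(V)/2^{r-1}$. Disjointness: if $v$ is in the witnessing $T$ for $e$ (its last updater) then $\EdgeId_\infty(v) = \Id(e)$, so $v$ witnesses at most one edge — \emph{provided} I define $T_e$ as $\{v : v$ was set by $e$ at $e$'s last-update time $t$ \emph{and} $v$ not overwritten after$\} = \{v : \EdgeId_\infty(v) = \Id(e)\}$, and then $\Ben(T_e) \ge 2^{r-1}\Cost(e)$ needs re-examination since some of the $T$-vertices were overwritten. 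I'll handle this by instead charging $\Cost(e)$ against $\Ben(T)$ for the \emph{full} $T$ but observing that if $v \in T$ is overwritten at a later time $t' > t$, it's overwritten to effectiveness $\ge r+1$, so $v$'s "home" at the end is at level $\ge r+1$; thus the full $T$'s across all $e \in S(r)$ are still disjoint because a vertex $v$ with final effectiveness $\ge r+1$ appears in at most one such full $T$ either way — no: it could appear in the full $T$ of several edges at level $r$. I will need to order edges by their last-update time and argue that once $v$ is bumped past $r$, it never re-enters a level-$r$ subset. That monotonicity of effectiveness is exactly \Obs{}~\ref{observation:effective-ratio-increases}, so a vertex bumped to $\ge r+1$ at time $t'$ cannot be in any effective subset $T$ at a later time with $\Lev(T) = r$. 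Hence across distinct edges $e \in S(r)$ (ordered by last-update time) the sets $T_e$ (full effective subsets chosen there) are disjoint, $\sum_{e\in S(r)}\Ben(T_e) \le \Ben(V)$, and combined with $\Cost(e) < \Ben(T_e)/2^{r-1}$ we conclude $\Cost(S(r)) < \Ben(V)/2^{r-1}$. The main obstacle, then, is nailing the disjointness bookkeeping cleanly; everything else is the one-line level-to-ratio estimate.
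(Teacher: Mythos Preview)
Your eventual argument is correct and is exactly the paper's proof: for each $e_{t} \in S(r)$ take the effective subset $R(e_{t})$ that \COVER{} chose at time $t$ (so $\Lev_{t}(R(e_{t})) = r$), deduce $\Cost(e_{t}) < \Ben(R(e_{t}))/2^{r-1}$ from the level definition, argue that the sets $R(e_{t})$ are pairwise disjoint, and sum to get $\Cost(S(r)) < \Ben(V)/2^{r-1}$.

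One imprecision worth tightening in your disjointness step: you write ``once $v$ is bumped past $r$'' and ``bumped to $\geq r+1$'', but the correct and simpler observation is that once $v$ lands in a level-$r$ effective subset it already has $\Eff(v) = r$; since membership in any later effective subset $T'$ with $\Lev(T') = r$ would require $r = \Lev(T') > \Eff(v) \geq r$, the vertex $v$ can never re-enter such a subset. This directly gives disjointness of the full chosen subsets across distinct edges of $S(r)$ --- the detours through ``vertices pointing to $e$ at the end'', overwrites, and final effectiveness levels were unnecessary.
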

\begin{proof}
If $e_{t} \in S(r)$, then there exists some subset $R = R(e_{t}) \subseteq
e_{t}$ with $\Lev_t(R) = r$
such that for every vertex $v \in R$, we have
(1) $\Eff_{t}(v) < r$; and
(2) $\Eff_{t + 1}(v) = r$.
By definition, the fact that $\Lev_t(R) = r$ implies that
$\Cost(e_{t}) < \Ben(R) / 2^{r - 1}$.
Since the variable $\EdgeId(v)$ is updated only when $\Eff(v)$ increases and
since $\Eff(v)$ is non-decreasing, it follows that if $e_{t}, e_{t'} \in
S(r)$, $e_{t} \neq e_{t'}$, then the subsets $R(e_{t})$ and $R(e_{t'})$ are
disjoint.
Therefore,
\[
\sum_{e_{t} \in S(r)} \Cost(e_{t})
<
\frac{1}{2^{r - 1}} \sum_{e_{t} \in S(r)} \Ben(R(e_{t}))
\leq
\Ben(V) / 2^{r - 1}
\]
which completes the proof.
\end{proof}

The following corollary is obtained by applying
\Lem{}~\ref{lemma:up-bound-edge-costs} to the integers $r+1, r+2, \dots$

\begin{corollary} \label{corollary:up-bound-edge-costs}
Consider some integer $r$.
The edge collection $S(> r)$ satisfies
\[
\Cost(S(> r))
<
 \Ben(V) / 2^{r-1} \, .
\]
\end{corollary}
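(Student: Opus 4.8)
The plan is to obtain the bound by decomposing $S(> r)$ along the (finitely many) effectiveness levels exceeding $r$ and summing the per-level bound of \Lem{}~\ref{lemma:up-bound-edge-costs} as a geometric series. Concretely, first observe that
$S(> r) = \bigcup_{k \geq r + 1} S(k)$,
and that only finitely many of the sets $S(k)$ are non-empty: each non-empty $S(k)$ witnesses a vertex $v \in V$ with $\Eff_{\infty}(v) = k$, and there are at most $|V|$ such values. Hence, by subadditivity of $\Cost(\cdot)$ over unions of edge sets,
\[
\Cost(S(> r)) \;\leq\; \sum_{k \geq r + 1} \Cost(S(k)) \, ,
\]
where the right-hand side is really a finite sum.

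Next I would apply \Lem{}~\ref{lemma:up-bound-edge-costs} to each integer $k \geq r + 1$, which gives $\Cost(S(k)) < \Ben(V) / 2^{k - 1}$ for every such $k$, and then bound the resulting sum by the full geometric series:
\[
\sum_{k \geq r + 1} \Cost(S(k))
\;\leq\;
\sum_{k \geq r + 1} \frac{\Ben(V)}{2^{k - 1}}
\;=\;
\Ben(V) \sum_{j \geq r} 2^{-j}
\;=\;
\frac{\Ben(V)}{2^{r - 1}} \, .
\]
To get the \emph{strict} inequality claimed in the statement, I would split into two cases: if every $S(k)$ with $k \geq r + 1$ is empty, then $\Cost(S(> r)) = 0 < \Ben(V) / 2^{r - 1}$ since $\Ben(V) > 0$; otherwise the finite sum $\sum_{k \geq r+1}\Cost(S(k))$ has at least one term, and for that term \Lem{}~\ref{lemma:up-bound-edge-costs} is strict, so the chain of inequalities above is strict at that point. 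In both cases $\Cost(S(> r)) < \Ben(V) / 2^{r - 1}$, as required.

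There is essentially no serious obstacle here — the argument is a routine summation once \Lem{}~\ref{lemma:up-bound-edge-costs} is in hand. The only points that need a little care are (i) justifying that the union defining $S(> r)$ involves only finitely many non-empty pieces (so that ``$\leq$'' together with one strict term yields a strict conclusion, rather than merely a non-strict limit of strict inequalities), and (ii) making sure the $2^{-k+1}$ tail sums to exactly $2^{-r+1}$ so the constant matches the statement.
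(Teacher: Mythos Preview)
Your proposal is correct and matches the paper's own approach exactly: the paper simply states that the corollary follows by applying \Lem{}~\ref{lemma:up-bound-edge-costs} to the integers $r+1, r+2, \dots$ and summing the resulting geometric series. Your additional care with finiteness of the non-empty levels and the strict-versus-nonstrict inequality is a welcome bit of rigor that the paper leaves implicit.
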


The following important lemma shows that we can extract from the variables
returned by \COVER{} an edge subset of low total cost which covers much of the
items.

\begin{lemma} \label{lemma:main-procedure}
Consider some $0 < \epsilon < 1$ and let $r^*$ be the
largest integer such that
$\Ben(I(\leq r^*)) \leq \epsilon \cdot \Ben(V)$.
The edge collection $S(> r^*)$ satisfies
\[
\Cost(S(> r^*)) < 8 \cdot \Cost(\Opt) / \epsilon \, .
\]
\end{lemma}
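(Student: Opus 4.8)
The plan is to combine the two preceding bounds on $\Ben(I(\leq r))$ in a dyadic way. The definition of $r^*$ as the \emph{largest} integer with $\Ben(I(\leq r^*)) \leq \epsilon \cdot \Ben(V)$ means that $r^* + 1$ violates the inequality, i.e.\ $\Ben(I(\leq r^* + 1)) > \epsilon \cdot \Ben(V)$. Applying \Lem{}~\ref{lemma:up-bound-benefits} with $r = r^* + 1$ gives $\epsilon \cdot \Ben(V) < \Ben(I(\leq r^*+1)) < 2^{r^*+2} \cdot \Cost(\Opt)$, which rearranges to a lower bound $2^{r^*} > \epsilon \cdot \Ben(V) / (4 \, \Cost(\Opt))$, equivalently $\Ben(V) / 2^{r^*} < 4\,\Cost(\Opt)/\epsilon$. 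This is the key inequality: it converts the "much-of-the-benefit" guarantee into a quantitative handle on $2^{r^*}$.

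Next I would bound $\Cost(S(> r^*))$ using \Cor{}~\ref{corollary:up-bound-edge-costs} applied with $r = r^*$, which yields $\Cost(S(> r^*)) < \Ben(V)/2^{r^*-1} = 2 \cdot \Ben(V)/2^{r^*}$. Substituting the bound from the previous paragraph gives $\Cost(S(> r^*)) < 2 \cdot 4\,\Cost(\Opt)/\epsilon = 8\,\Cost(\Opt)/\epsilon$, which is exactly the claimed inequality.

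One edge case deserves a sentence: I should check that $r^*$ is well defined, i.e.\ that there \emph{is} a largest such integer. Since $\Eff_\infty(v)$ is a finite integer for every $v$ (each vertex is incident to at least one edge, so it gets updated at least once, to a finite level), the set $I(\leq r)$ equals all of $V$ for $r$ sufficiently large, so $\Ben(I(\leq r)) = \Ben(V) > \epsilon \Ben(V)$ fails for large $r$; and $\Ben(I(\leq r)) = 0 \leq \epsilon\Ben(V)$ for $r$ sufficiently small. Hence the maximizing $r^*$ exists and the argument above goes through. I expect this routine well-definedness check to be the only subtlety — the main chain of inequalities is a direct two-step substitution and presents no real obstacle.
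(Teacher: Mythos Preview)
Your proof is correct and follows essentially the same approach as the paper: use \Lem{}~\ref{lemma:up-bound-benefits} to pin down $2^{r^*}$ in terms of $\epsilon\,\Ben(V)/\Cost(\Opt)$, then plug into \Cor{}~\ref{corollary:up-bound-edge-costs}. The paper introduces an auxiliary integer $r$ with $2^{r+1} < \epsilon\,\Ben(V)/\Cost(\Opt) \leq 2^{r+2}$ and shows $r \leq r^*$, whereas you work directly with $r^*+1$; the two arguments are equivalent and yield the same constant $8$.
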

\begin{proof}
Let $r$ be an integer such that
$2^{r + 1} < \epsilon \cdot \frac{\Ben(V)}{\Cost(\Opt)} \leq 2^{r + 2}$.
\Lem{}~\ref{lemma:up-bound-benefits} guarantees that
$
\Ben(I(\leq r))
<
2^{r+1} \cdot \Cost(\Opt)
<
\epsilon \cdot \Ben(V) 
$,
hence $r \leq r^*$.
It follows by \Cor{}~\ref{corollary:up-bound-edge-costs} that 
$
\Cost(S(>r^*))
\leq 
\Cost(S(> r))
<
\Ben(V) / 2^{r-1}
\leq 
 8 \cdot   \Cost(\Opt) /  \epsilon 
$.
\end{proof}

We are now ready to establish the approximation guarantees of algorithm
\SSSC{}.
Theorem~\ref{theorem:upper-bound-assumption} (stated under the
assumption that all vertex benefits and edge costs are encoded using $O (\log
n)$ bits) follows immediately from
Theorem~\ref{theorem:algorithm-performance-guarantee}.

\begin{theorem} \label{theorem:algorithm-performance-guarantee}
For any $0 \leq \epsilon < 1$, our algorithm outputs a $(1 - \epsilon)$-cover
certificate for $G$ whose image has cost
$O \left( \min \left\{ \frac{1}{\epsilon}, \sqrt{n} \right\} \cdot
\Cost(\Opt) \right)$.
\end{theorem}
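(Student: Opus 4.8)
The plan is to check the two defining properties of a $(1 - \epsilon)$-cover certificate --- validity (property (1): $\chi(v) = \Id(e)$ implies $v \in e$) and coverage (property (2): $\Ben(\Domain(\chi)) \geq (1-\epsilon)\Ben(V)$) --- together with the cost bound on $\Image(\chi)$, handling the two cases $\epsilon \geq 1/\sqrt{n}$ and $\epsilon < 1/\sqrt{n}$ (the latter including $\epsilon = 0$) separately, on top of \Lem{}~\ref{lemma:main-procedure} and \Lem{}~\ref{lemma:up-bound-edge-costs}. Before the cases I would record a structural fact about \COVER{}: whenever \COVER{} modifies $\EdgeId(v)$ at step $t$ it sets it to $\Id(e_{t})$ with $v \in e_{t}$, since $v$ then lies in the chosen effective subset $T \subseteq e_{t}$; and it modifies $\EdgeId(v)$ at least once for every $v \in V$, because at the first step $t$ with $v \in e_{t}$ (which exists as $G$ has no isolated vertices) the effectiveness $\Eff_{t}(v)$ still equals its initial value $-\infty < \Lev_{t}(\{v\})$, so $\{v\}$ is effective, hence the maximum-benefit effective subset $T$ picked at step $t$ must contain $v$ --- otherwise, by \Obs{}~\ref{observation:adding-vertex-to-effective-subset}, $T \cup \{v\}$ would be effective of strictly larger benefit. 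Thus $\EdgeId_{\infty}(v)$ is at termination the identifier of an edge incident to $v$ for every $v$, and likewise for invocation P2 and (by definition) for $\MinEdgeCost(v)$.

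In the case $\epsilon \geq 1/\sqrt{n}$, $\chi$ maps each $v \in V - I(\leq r^{*})$ to $\EdgeId_{\infty}(v)$, so validity is immediate from the structural fact. I would note that $r^{*}$ is a well-defined finite integer ($\Ben(I(\leq r)) = 0$ for $r$ small and $\Ben(I(\leq r)) = \Ben(V) > \epsilon\Ben(V)$ for $r$ large, using $\epsilon < 1$), and then maximality of $r^{*}$ gives $\Ben(I(\leq r^{*})) \leq \epsilon\Ben(V)$, so $\Ben(\Domain(\chi)) = \Ben(V) - \Ben(I(\leq r^{*})) \geq (1-\epsilon)\Ben(V)$. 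Since $\Image(\chi)$ is exactly the set of identifiers of the edges in $S(> r^{*})$, \Lem{}~\ref{lemma:main-procedure} bounds its cost by $8\Cost(\Opt)/\epsilon$, and $\epsilon \geq 1/\sqrt{n}$ makes $1/\epsilon = \min\{1/\epsilon,\sqrt{n}\}$.

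In the case $\epsilon < 1/\sqrt{n}$ (assume $n \geq 2$; $n \leq 1$ is trivial), the relevant invocation is P2 on $(V, E, \mathbf{1}, \Cost)$, whose total benefit is $n$ and whose optimal edge $1$-cover is still $\Opt$ (the $1$-cover notion ignores benefits). Setting $\epsilon' = 1/\sqrt{n} \in (0,1)$, the algorithm's $r^{*}$ is precisely the largest integer with $|I^{\mathbf{1}}(\leq r^{*})| \leq \epsilon' n$, so \Lem{}~\ref{lemma:main-procedure} for this invocation gives $\Cost(S^{\mathbf{1}}(> r^{*})) < 8\sqrt{n}\,\Cost(\Opt)$. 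The output $\chi''$ is total, so $\Domain(\chi'') = V$ and property (2) holds vacuously (it is even a $1$-cover certificate); property (1) holds since $\chi'$ assigns $v \notin I^{\mathbf{1}}(\leq r^{*})$ the identifier $\EdgeId^{\mathbf{1}}_{\infty}(v)$ of an incident edge and $\MinEdgeCost(v)$ is the identifier of an incident edge for $v \in I^{\mathbf{1}}(\leq r^{*})$. For the cost, $\Image(\chi'')$ lies in $S^{\mathbf{1}}(> r^{*}) \cup \{\MinEdgeCost(v) : v \in I^{\mathbf{1}}(\leq r^{*})\}$; the first part costs less than $8\sqrt{n}\,\Cost(\Opt)$, and each $\MinEdgeCost(v)$ costs at most the cost of whichever edge of $\Opt$ covers $v$, hence at most $\Cost(\Opt)$, so the second part costs at most $|I^{\mathbf{1}}(\leq r^{*})|\cdot\Cost(\Opt) \leq \sqrt{n}\,\Cost(\Opt)$. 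Summing and using $\min\{1/\epsilon,\sqrt{n}\} = \sqrt{n}$ (with the convention $\min\{1/0,\sqrt n\} = \sqrt n$) finishes this case.

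The only genuinely delicate point I anticipate is the structural fact of the first paragraph: since \COVER{} optimizes the benefit of the effective subset rather than tracking individual vertices, one must argue carefully that every vertex is nonetheless covered and that the recorded identifier is incident to it; the rest is bookkeeping on top of \Lem{}~\ref{lemma:main-procedure} and \Lem{}~\ref{lemma:up-bound-edge-costs}.
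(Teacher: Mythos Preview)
Your proposal is correct and follows essentially the same route as the paper: split on $\epsilon \gtrless 1/\sqrt{n}$, invoke \Lem{}~\ref{lemma:main-procedure} (on $G$ in the first case and on $(V,E,\mathbf{1},\Cost)$ with $\epsilon' = 1/\sqrt{n}$ in the second), and in the second case add the at most $\sqrt{n}$ edges $\MinEdgeCost(v)$, each of cost at most $\Cost(\Opt)$. The only difference is that you spell out the validity property of the certificate (that $\EdgeId_{\infty}(v)$ and $\MinEdgeCost(v)$ are identifiers of edges incident to $v$) and the well-definedness of $r^{*}$, which the paper leaves implicit; your argument for the structural fact via \Obs{}~\ref{observation:adding-vertex-to-effective-subset} is fine, though you should note that when the maximal effective $T$ is empty the observation is not needed since $\{v\}$ itself is already effective with positive benefit.
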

\begin{proof}
If $\epsilon \geq 1 / \sqrt{n}$, then the assertion follows immediately from
\Lem{}~\ref{lemma:main-procedure}, so it remains to consider the case of
$\epsilon < 1 / \sqrt{n}$.
We show that $\chi''$ is a $1$-cover certificates for $G$ such that
$\Cost(\Image(\chi'')) = O (\sqrt{n} \cdot \Cost(\Opt))$.
Observe first that since $\Opt$ covers all vertices in $V$, it is also an
optimal edge $1$-cover of $G^{\mathbf{1}}$.
Thus, \Lem{}~\ref{lemma:main-procedure} guarantees that
$\Cost (\Image(\chi')) < 8 \sqrt{n} \cdot \Cost(\Opt)$.
The vertices $v \in V - \Domain(\chi')$ are mapped under $\chi''$ to
$\MinEdgeCost(v)$.
Since 
$|V - \Domain(\chi')| \leq \sqrt{n}$
and since
$\Cost(\MinEdgeCost(v)) \leq \Cost(\Opt)$
for every $v \in V$, it follows that
\[
\Cost(\Image(\chi''))
<
8 \sqrt{n} \cdot \Cost(\Opt) + |V - \Domain(\chi')| \cdot \Cost(\Opt)
\leq
9 \sqrt{n} \cdot \Cost(\Opt) \, .
\]
The assertion follows.
\end{proof}

\subsection{Lifting the assumption on the numerical values}
\label{section:algorithm-space}
We now turn to lift the assumption that all numerical values are encoded using
$O (\log n)$ bits and establish \Thm{}\ \ref{theorem:upper-bound} and
\ref{theorem:upper-bound-aspect-ratio}, starting with the former.
To that end, consider the hypergraph
$\widetilde{G} = (V, E, \widetilde{\Ben}, \widetilde{\Cost})$
defined by setting
$\widetilde{\Ben}(v) = 2^{\lfloor \lg \Ben(v) \rfloor}$
for every vertex $v \in V$
and
$\widetilde{\Cost}(e) = 2^{\lfloor \lg \Cost(e) \rfloor}$
for every edge $e \in E$.
Since $\widetilde{\Ben}(U)$ and $\widetilde{\Cost}(F)$ are $2$-approximations
of $\Ben(U)$ and $\Cost(F)$, respectively, for every $U \subseteq V$ and $F
\subseteq E$, it follows that a $(1 - O (\epsilon))$-cover certificate for $G$
with image of cost
$O \left( \min \left\{ \frac{1}{\epsilon}, \sqrt{n} \right\} \cdot
\Cost(\Opt) \right)$
can be obtained by running \SSSC{} on $\widetilde{G}$.

So, in what follows, we assume that $\Ben(v)$ and $\Cost(e)$ are (not
necessarily positive) integral powers of $2$ for every vertex $v \in V$ and
edge $e \in E$.
This implies that every benefit $\Ben(v)$ (resp., cost $\Cost(e)$) in $G$ can
be encoded using
$O (\log \Ben^{\lg})$
(resp.,
$O (\log \Cost^{\lg})$)
bits simply by taking the standard binary representation of
$\lg \Ben(v)$
(resp.,
$\lg \Cost(e)$).
Therefore, procedures P3 and P4
can be implemented using
$O \left( \log \left( n + m + \Ben^{\lg} + \Cost^{\lg} \right) \right)$
bits per vertex, as desired.
Procedure \COVER{} can also be implemented with that many bits per vertex
since the level at time $t$ of each subset $T \subseteq e_{t}$ is an integer
whose absolute value satisfies
$|\Lev_{t}(T)| = O (\Ben^{\lg} + \Cost^{\lg} + \log n)$,
thus establishing \Thm{}~\ref{theorem:upper-bound} due to
\Obs{}~\ref{observation:main-procedure-run-time} and
\Thm{}~\ref{theorem:algorithm-performance-guarantee}.

For \Thm{}~\ref{theorem:upper-bound-aspect-ratio}, we need two
additional features.
First, we scale in an online fashion all vertex benefits and edge costs so
that
$\min_{v \in V} \Ben(v)$
and
$\min_{e \in E} \Cost(e)$
are always $1$.
We do the same thing with the effectiveness variables $\Eff(v)$, only that
this time, we ignore those variables with
$\Eff(v) = -\infty$.
This is carried out by maintaining the true values of
$\min_{v \in V} \Ben(v)$,
$\min_{e \in E} \Cost(e)$,
and
$\min_{v \in V : \Eff(v) > -\infty} \Eff(v)$ ---
denote them by 
$\Ben_{\min}$, $\Cost_{\min}$, and $\Eff_{\min}$, respectively ---
and scaling all values of $\Ben(v)$, $\Cost(e)$, and $\Eff(v)$ stored in the
data structures maintained by the procedures of our our algorithm by
$\Ben_{\min}$, $\Cost_{\min}$, and $\Eff_{\min}$, respectively.
Notice that this online scaling requires updating the existing values stored
in the data structures whenever $\Ben_{\min}$, $\Cost_{\min}$, or
$\Eff_{\min}$ are updated, thus resulting in the slightly less favorable
run-time promised by \Thm{}~\ref{theorem:upper-bound-aspect-ratio}.

This online scaling feature ensures that the space allocated for the variables
of each vertex $v$ is now
\begin{equation} \label{equation:temp-space-bound}
O \left( \log \left( n + m + \Ben^{\Delta} + \Cost^{\Delta} \right) \right) \, ,
\end{equation}
where
$\Ben^{\Delta} = \lg \left\lceil \frac{\max_{v \in V} \Ben(v)}{\min_{v \in V}
\Ben(v)} \right\rceil$
is the number of bits required to encode the vertex benefits aspect ratio.
We also need additional
$O (\log (\Ben^{\lg} + \Cost^{\lg}))$
bits to store the variables $\Ben_{\min}$, $\Cost_{\min}$, and
$\Eff_{\min}$.

In order to get rid of the dependency on $\log \Ben^{\Delta}$ in
(\ref{equation:temp-space-bound}) and obtain the space bound promised by
\Thm{}~\ref{theorem:upper-bound-aspect-ratio}, we use the following
feature:
Let
$\sigma = \sum_{v \in V'} \Ben(v)$, where $V'$ is the set of vertices $v \in
V$ encountered by the algorithm so far.
Whenever it becomes clear that the contribution of some vertex $v \in V$ to
$\Ben(V)$
is at most
$\epsilon \cdot \Ben(V) / n$,
which is indicated by
$\Ben(v) \leq \epsilon \sigma / n$,
the algorithm marks vertex $v$ as \emph{insignificant}.
Insignificant vertices are treated as if they are not part of the input
hypergraph $G$;
in particular, upon marking vertex $v$ as insignificant, the algorithm erases
any variable associated with $v$ and updates $\Ben_{\min}$ so that it does not
take $\Ben(v)$ into account.

Notice that the total contribution of all insignificant vertices to $\Ben(V)$
is bounded from above by
$\epsilon \cdot \Ben(V)$.
Therefore, ignoring insignificant vertices cannot hurt our guaranteed coverage
by more than an additive term of $\epsilon \cdot \Ben(V)$.
The key observation now is that by ignoring insignificant vertices, we keep
the parameter $\Ben^{\Delta}$ bounded by
$\Ben^{\Delta} = O (\log (n / \epsilon))$
as the benefit of any vertex encountered by the algorithm so far is
clearly at most $\sigma$.
Recalling that $\epsilon$ is always at least $1 / \sqrt{n}$, we conclude that
the dependency on $\log \Ben^{\Delta}$ in (\ref{equation:temp-space-bound})
is replaced by a dependency on
$\log\log n$.
\Thm{}~\ref{theorem:upper-bound-aspect-ratio} follows by
\Thm{}~\ref{theorem:algorithm-performance-guarantee}.

\section{Lower bounds}
A \emph{randomized} semi-streaming algorithm \Alg{} for the edge cover problem
in hypergraphs is said to be an
\emph{$(n, s, \epsilon, \rho)$-algorithm}
(resp., an \emph{uncertified $(n, s, \epsilon, \rho)$-algorithm})
if given any $n$-vertex unweighted hypergraph $G$,
\Alg{} is guaranteed to maintain a memory of size at most $s$ bits and
to output a $(1 - \epsilon)$-cover certificate for $G$ with image of expected
cardinality at most $\rho \cdot |\Opt|$
(resp., to output the identifiers of an edge $(1 - \epsilon)$-cover of $G$
whose expected cardinality is at most $\rho \cdot |\Opt|$),
where $\Opt$ is an optimal edge cover of $G$.
Our goal in this section is to establish \Thm{}\ \ref{theorem:lower-bound}
and \ref{theorem:lower-bound-uncertified}, treated in \Sect{}\
\ref{section:certified-case} and \ref{section:uncertified-case},
respectively.
Observe that the constructions that lie at the heart of Theorems
\ref{theorem:lower-bound} and \ref{theorem:lower-bound-uncertified} are based
on hypergraphs whose number of vertices and number of edges are polynomially
related, that is, $m = n^{\Theta (1)}$.

\begin{theorem} \label{theorem:lower-bound}
For every integer $n_{0}$,
there exists an integer $n \geq n_{0}$ such that
for every
$\epsilon = \Omega (1 / \sqrt{n})$,
the existence of an
$(n, o (n^{3 / 2}), \epsilon, \rho)$-algorithm
implies that
$\rho = \Omega (1 / \epsilon)$.
\end{theorem}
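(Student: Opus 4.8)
The plan is to use Yao's principle: I will construct a distribution over hard instances and argue that any deterministic low-space algorithm fails on this distribution in expectation, which then transfers to randomized algorithms. The instances are built from an affine plane $\mathcal{A} = (P, L)$ of order $q$, so $|P| = q^2$ points and $|L| = q(q+1)$ lines, organized into $q+1$ parallel classes (``angles''), each a perfect partition of $P$ into $q$ lines of $q$ points. The vertex set $V$ will be $P$ (so $n = q^2$), and for each line $\ell \in L$ I will present it to the algorithm as \emph{two} edges obtained by a uniformly random bipartition of the $q$ points of $\ell$; thus $2q(q+1)$ edges are streamed first. After all line-edges, pick a uniformly random angle $A_i$ and a uniformly random subset of $r \approx \epsilon q$ lines of that angle to ``omit,'' and present one final edge $e^*$ equal to the union of the points on the $q - r$ non-omitted lines of $A_i$. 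An optimal cover is $e^*$ together with, for each of the $r$ omitted lines, the two random half-edges covering it — a total of $2r + 1 = O(\epsilon q)$ edges, so $|\Opt| = O(\epsilon q)$. The algorithm must produce a $(1-\epsilon)$-cover; since $e^*$ already misses exactly the $rq = \Theta(\epsilon q^2) = \Theta(\epsilon n)$ points on the omitted lines, to cover a $(1-\epsilon)$ fraction the algorithm must cover a constant fraction of those omitted points using line-edges, and by the parallel-class structure of the affine plane, any line from an angle $A_j \neq A_i$ meets each omitted line of $A_i$ in at most one point. So covering $\Omega(\epsilon n)$ omitted points with ``wrong-angle'' half-edges requires $\Omega(\epsilon n / q) = \Omega(\epsilon q)$... but that alone is not enough; the real claim is that the algorithm cannot \emph{identify} the right-angle half-edges, and must therefore use $\Omega(q)$ edges total, giving $\rho = |F|/|\Opt| = \Omega(q / (\epsilon q)) = \Omega(1/\epsilon)$.

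The crux is the information-theoretic argument showing that a deterministic algorithm with memory $s = o(n^{3/2}) = o(q^3)$ cannot know, at the time $e^*$ arrives, enough about the $\approx \binom{q}{2}^{q(q+1)}$ possible random bipartitions of the lines to pick the correct half-edges for the omitted lines cheaply. I would formalize this by fixing the algorithm's behavior and noting that its memory state after the line-stream is a function of all the bipartitions, taking at most $2^s$ values; since $\epsilon = \Omega(1/\sqrt n)$ means $r = \Omega(1)$ omitted lines each with a bipartition carrying $\Omega(q)$ bits of entropy (there are $\binom{q}{q/2} = 2^{\Theta(q)}$ bipartitions), and the identity of the omitted angle and lines is chosen \emph{after} the memory state is frozen, a counting/entropy argument shows that for most choices of $(A_i, \text{omitted set})$ the memory state determines the correct bipartitions of only $o(q)$... more carefully, I want to conclude that the algorithm, to reliably cover the omitted points, is forced to use $\Omega(q)$ distinct line-edges across the angles other than $A_i$, because within any single wrong angle $A_j$ only $r$ of its $q$ lines hit the omitted lines of $A_i$ at all, and distinguishing which half of each such line lies among the omitted points requires knowing the bipartition — information the algorithm has largely lost. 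Summing over the $\Theta(q)$ angles and using that in expectation the algorithm cannot have retained the bipartitions, the expected number of line-edges needed to $(1-\epsilon)$-cover is $\Omega(q)$, whereas $|\Opt| = O(\epsilon q)$.

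I expect the main obstacle to be making precise the step ``the algorithm has lost the bipartition information, hence needs many edges'' — i.e., converting the memory bound $o(q^3)$ into a lower bound on the \emph{number of output edges} rather than directly on some communication quantity. The natural route is a careful accounting: condition on the memory state $M$; over the randomness of the bipartitions consistent with $M$ and the random choice of omitted angle/lines, show that for any fixed small set $F$ of edges the algorithm might output, the probability that $F$ actually $(1-\epsilon)$-covers the instance is small unless $|F| = \Omega(q)$; this uses that each wrong-angle half-edge covers at most one point per omitted line and that the ``good half'' of each relevant line is near-uniformly distributed given $M$ (since $s$ is small relative to the total bipartition entropy $\Theta(q^3 \log q)$, the posterior on the bipartitions is close to uniform on a $2^{\Theta(q^3 \log q)}$-size set). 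Then a union bound over all possible outputs $F$ with $|F| = o(q)$ — there are at most $m^{o(q)} = 2^{o(q\log q)}$ of them — is dominated by the $2^{-\Omega(q\log q)}$-type failure probability, yielding the claim. Finally, Yao's principle converts this expected-case deterministic bound into the stated lower bound for randomized $(n, o(n^{3/2}), \epsilon, \rho)$-algorithms with $\rho = \Omega(1/\epsilon)$, and the ``for every $n_0$ there exists $n \geq n_0$'' quantifier is handled by taking $q$ to be a prime power (so affine planes exist) exceeding $\sqrt{n_0}$.
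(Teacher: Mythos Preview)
Your construction and high-level framework match the paper exactly: affine plane of order $q$, each line split into two random halves, a final edge $e^{*}$ omitting $r=\Theta(\epsilon q)$ random lines of a random angle, $|\Opt|=O(\epsilon q)$, and Yao's principle to pass from deterministic to randomized algorithms.

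The divergence is in the information-theoretic core, and there your sketch has a genuine gap. First a small correction: the total bipartition entropy is $q^{2}(q+1)=\Theta(q^{3})$ bits (one bit per point), not $\Theta(q^{3}\log q)$. More importantly, the paper does \emph{not} use a union bound over candidate output sets. Instead it exploits the certificate requirement directly. Since $s=o(q^{3})$, one has $\Entropy(X\mid M)\ge (1-o(1))\Entropy(X)$; two applications of a ``random-restriction'' entropy lemma plus Markov's inequality give that with constant probability the bipartitions of the $r$ omitted lines retain at least $\frac{2}{3}rq$ bits of entropy conditioned on the actual memory image $M=\mu$. Now the key step you are missing: the certificate $\chi$ is a deterministic function of $(\mu,e^{*})$ and must be valid for \emph{every} bipartition consistent with $\mu$. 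Hence if $\chi(p)$ names a half-edge of the line through $p$ in angle $A_{i}$, then all consistent bipartitions agree at $p$, so $p$ contributes zero entropy. Letting $R\subseteq P'=P\setminus e^{*}$ be the set of such points, this forces $|R|\le rq-\tfrac{2}{3}rq=rq/3$. The remaining $\ge rq/3$ points of $\Domain(\chi)\cap P'$ must be certified via wrong-angle edges, each of which meets $P'$ in at most $r$ points, giving $|\Image(\chi)|\ge q/3$.

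Your proposed union bound over outputs $F$ with $|F|=o(q)$ does not go through as written. For a deterministic algorithm there is only \emph{one} output per $(\mu,e^{*})$, so there is nothing to union-bound over; and for a fixed small $F$ there is no $2^{-\Omega(q\log q)}$ failure event to point to, since $F$ may well contain both halves of all $r$ omitted lines and hence cover everything. What prevents the algorithm from producing such an $F$ is not a probabilistic miss but the certificate-validity constraint (it cannot name a half-edge for $p$ unless that half is determined by $\mu$), and that constraint is precisely what the paper converts into the pointwise entropy bound above.
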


\begin{theorem} \label{theorem:lower-bound-uncertified}
Fix some constant real $\alpha > 0$.
For every integer $n_{0}$,
there exists an integer $n \geq n_{0}$ such that
for every
$\epsilon \geq n^{-1 / 2 + \alpha}$,
the existence of an uncertified
$(n, o (n^{1 + \alpha}), \epsilon, \rho)$-algorithm
implies that
$\rho = \Omega \left( \frac{\log\log n}{\log n} \frac{1}{\epsilon} \right)$.
\end{theorem}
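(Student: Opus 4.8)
The plan is to follow the affine-plane construction sketched in the techniques overview, but with the extra freedom of splitting each line into several edges so that an ``uncertified'' algorithm — one that only returns identifiers of edges in a $(1-\epsilon)$-cover — is still forced to output many edges. Fix a prime power $q$ and build an affine plane $\mathcal{A} = (P, L)$ with $|P| = q^2$ points and $q(q+1)$ lines, grouped into $q+1$ parallel classes (\emph{angles}) $A_0, \dots, A_q$, each containing $q$ pairwise disjoint lines whose union is $P$. For a parameter $k$ (to be chosen as roughly $\log n / \log\log n$), randomly partition each line $\ell \in L$ into $k$ edges of (nearly) equal size; present all $k q(q+1)$ of these edges in the stream. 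Then pick a uniformly random angle $A_i$ and a uniformly random subset of $r \approx \epsilon q$ lines from $A_i$, and present one final edge $e^*$ equal to $P$ minus the points of those $r$ lines. Here $n = q^2$ (after padding lines outside $A_i$ with dummy points if needed to balance sizes), so $\epsilon \ge n^{-1/2+\alpha}$ means $r = \epsilon q \ge n^{\alpha}$; and $|\Opt| = 1 + kr = \Theta(kr) = \Theta(k\epsilon q) = \Theta(k \epsilon \sqrt{n})$, which is $\Theta(\epsilon^2 n)$ up to the factor $k$ — matching the claim that $\Opt$ is proportional to $\epsilon^2 n$ (the $\log\log n / \log n = 1/k$ factor in $\rho$ will come precisely from this $k$).

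Next I would set up the information-theoretic / communication argument, following Yao's principle: it suffices to lower-bound deterministic algorithms against the above input distribution. The adversary reveals the stream of line-edges first; after that, the algorithm's memory is a string of $s = o(n^{1+\alpha})$ bits. The key point is that $e^*$ covers all points except those on the $r$ removed lines of $A_i$, so a $(1-\epsilon)$-cover $F$ must cover all but an $\epsilon n$ fraction of those uncovered points; since the removed lines carry $r q \approx \epsilon q^2 = \epsilon n$ points, the algorithm must cover a constant fraction of them using line-edges it committed to (encoded in) its memory, and these edges were chosen \emph{before} $i$ and the removed set were revealed. Now use the defining property of an affine plane: any line from an angle $A_j$ with $j \ne i$ meets every line of $A_i$ in exactly one point, so it contributes only $O(1/k)$ of its points — in fact one point per removed line, i.e.\ $r$ points total — toward covering a removed line of $A_i$; whereas a line-edge drawn from $A_i$ itself can cover (a $1/k$ fraction of) one whole removed line at once but only if the algorithm ``knew'' which lines would be removed. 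The entropy argument shows that, with $s = o(n^{1+\alpha})$ bits and $\binom{q}{r}(q+1)$ roughly-equally-likely choices of $(i, \text{removed set})$, the memory cannot encode enough of the relevant $A_i$-edges: a counting/Fano-type bound gives that the expected number of removed lines that are ``$1/k$-covered'' by a single committed $A_i$-edge is $o(r)$, so to cover a constant fraction of the $\epsilon n$ uncovered points the algorithm must instead spend $\Omega(\epsilon n / r) = \Omega(q)$ line-edges coming from the other angles (each worth only $\Theta(r)$ uncovered points), plus the $k$-fold blow-up because each original line is split into $k$ edges. Hence $|F| = \Omega(kq) = \Omega(k \cdot \epsilon^{-1} \cdot \epsilon q) = \Omega\!\left(\frac{1}{k}\cdot\frac{1}{\epsilon}\cdot |\Opt|\right)$ once we substitute $|\Opt| = \Theta(k \epsilon q)$ and $k = \Theta(\log n / \log\log n)$, which is the claimed $\Omega\!\left(\frac{\log\log n}{\log n}\cdot\frac{1}{\epsilon}\right)\cdot|\Opt|$ bound. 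Finally, translate from deterministic-against-distribution to randomized worst-case via Yao, and note the construction gives infinitely many valid $n$ (one per prime power $q \ge \sqrt{n_0}$).

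The main obstacle I expect is the quantitative entropy step: making precise the claim that an $s = o(n^{1+\alpha})$-bit memory cannot ``know in advance'' enough $A_i$-lines to be useful. The subtlety is that the algorithm is uncertified, so it need not know which point of a committed edge lies on which removed line; one must argue that regardless of how the $\le s$ bits are used, for the random $(i,\text{removed set})$ the edges singled out by the memory are — in expectation — almost useless for covering removed lines of $A_i$, because the number of $(i,\text{removed set})$ pairs vastly exceeds $2^s / \text{poly}(n)$ and each useful edge-set profile serves only a $1/\binom{q}{r}$-ish fraction of them. Getting the parameters $q$, $r = \epsilon q$, $k$, and the space bound $o(n^{1+\alpha})$ to line up so that this ``useless in expectation'' conclusion holds with the right constant (and so that the residual $\epsilon n$ points genuinely force $\Omega(q)$ other-angle lines) is the delicate part; the choice $k \asymp \log n/\log\log n$ is exactly what is needed to keep $2^s \cdot \binom{q}{r}^{-1}$ negligible while keeping $|\Opt|$ only a $1/k$ factor away from $\epsilon^2 n$.
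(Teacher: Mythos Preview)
Your construction and entropy argument have a genuine gap. You base the information-theoretic step on the randomness of the pair $(i,\text{removed lines})$, arguing that the algorithm's $o(n^{1+\alpha})$-bit memory cannot ``commit'' to the right $A_i$-edges before this pair is revealed. But in the uncertified model the algorithm commits to nothing: its output is a function of its memory $\mu$ \emph{and} the final edge $e^*$, and $e^*$ fully reveals $i$ and the removed lines. If identifiers are deterministic (say, arrival times in a fixed canonical order), then after reading $e^*$ the algorithm can simply compute the $kr$ identifiers of the edges $e_1(\ell_i^{j(t)}),\dots,e_k(\ell_i^{j(t)})$ for each removed line and output them --- using zero bits of memory from the first phase --- achieving $|F|=kr+1=O(|\Opt|)$. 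The entropy of $\binom{q}{r}(q+1)$ choices is therefore irrelevant: that entropy is handed to the algorithm for free by $e^*$.

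The paper closes this gap by putting the randomness in the \emph{identifiers} rather than in the hidden angle: each edge $e_k(\ell_i^j)$ is given an identifier with a uniformly random $3\lg q$-bit suffix $X_i^{j,k}$, and all unused identifiers are filled by empty dummy edges. The entropy argument then bounds $\Entropy(X_i^{j(1)},\dots,X_i^{j(r)}\mid M=\mu)$ and shows that for a constant fraction of pairs $(t,k)\in[r]\times[r]$ at least $\lg q$ bits of entropy remain in $X_i^{j(t),k}$; hence outputting such an edge forces at least $q$ candidate identifiers into $F$, all but one of them dummy. This is what actually drives the algorithm to cover the removed points from other angles. Note also that the paper partitions each line into $r$ pieces, not $k=\Theta(\log n/\log\log n)$ pieces: a line $\ell$ from another angle meets the removed set in exactly $r$ points, and the random partition of $\ell$ into $r$ edges is a balls-into-bins process with $r$ balls and $r$ bins, whose maximum load $O(\log r/\log\log r)$ is precisely the source of the $\frac{\log\log n}{\log n}$ factor. (Your final arithmetic is also inconsistent: with $|F|=\Omega(kq)$ and $|\Opt|=\Theta(k\epsilon q)$ you would get $\rho=\Omega(1/\epsilon)$, not $\Omega(1/(k\epsilon))$; the genuine $\frac{\log\log n}{\log n}$ loss in the theorem comes from the max-load bound, not from a parameter $k$ you introduce.)
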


\subsection{The certified case}
\label{section:certified-case}
We shall establish \Thm{}~\ref{theorem:lower-bound} by introducing a
probability distribution $\mathcal{G}$ over $n$-vertex hypergraphs that
satisfy the following two properties:
(1) Every hypergraph in the support of $\mathcal{G}$ admits an edge cover of
cardinality $O (\epsilon \sqrt{n})$.
(2) For every \emph{deterministic} semi-streaming algorithm \Alg{} that given
an $n$-vertex hypergraph $G$, maintains a memory of size
$o (n^{3 / 2})$
and outputs a $(1 - \epsilon)$-cover certificate $\chi$ for $G$,
when \Alg{} is invoked on a hypergraph chosen according to $\mathcal{G}$, the
expected cardinality of $\Image(\chi)$ is $\Omega (\sqrt{n})$.
The theorem than follows by Yao's principle.

\subsubsection{The construction of $\mathcal{G}$}
\label{section:lower-bound-construction}
Let $q$ be a large prime power.
Our construction relies on the \emph{affine plane} $\mathcal{A}
= (P, L)$, where $P$ is a set of $q^{2}$ \emph{points} and $L \subseteq 2^{P}$
is a set of $q (q + 1)$ \emph{lines} satisfying the following
properties: \\
(1) every line contains $q$ points; \\
(2) every point is contained in $q + 1$ lines; \\
(3) for every two distinct points, there is exactly one line that contains
both of them; and \\
(4) every two lines intersect in at most one point. \\
Two lines with an empty intersection are called \emph{parallel}.
The line set $L$ can be partitioned into $q + 1$ clusters
$A_{1}, \dots, A_{q + 1}$
referred to as \emph{angles}, where
$A_{i} = \{ \ell_{i}^{1}, \dots, \ell_{i}^{q} \}$
for $i = 1, \dots, q + 1$, such that two distinct lines are parallel if and
only if they belong to the same angel.
Refer to \cite{LindnerR2011} for an explicit construction of such a
combinatorial structure.

Consider some
$\frac{1}{3 q} \leq \epsilon \leq \frac{1}{66} - \frac{1}{3 q}$
and let
$r = \lceil 3 \epsilon q \rceil$.
We construct a random hypergraph $G = (V, E)$ based on the affine plane
$\mathcal{A} = (P, L)$ as follows (refer to Figure~\ref{figure:bad-hypergraph}
for an illustration).
Fix $V = P$.
Randomly partition each line $\ell \in L$ into $2$ edges
$e_{1}(\ell) \cup e_{2}(\ell) = \ell$
by assigning each point in $L$ to one of the $2$ edges u.a.r.\ (and
independently of all other random choices).\footnote{
Throughout, we use u.a.r.\ to abbreviate ``uniformly at random''.
}
It will be convenient to denote the set of edges corresponding to the lines in
angle $A_{i}$ by 
$E_{i} = \{ e_{1}(\ell), e_{2}(\ell) \mid \ell \in A_{i} \}$.
Let
\[
e^{*} = P - \bigcup_{t = 1}^{r} \ell_{i}^{j(t)} \, ,
\]
where
$i$ is an index chosen u.a.r.\ (and independently) from $[q + 1]$ and
$1 \leq j(1) < \cdots < j(r) \leq q$
are $r$ distinct indices chosen u.a.r.\ (and independently) from $[q]$.
In other words, $e^{*}$ is constructed by randomly choosing an angle $A_{i}$
and then randomly choosing $r$ distinct lines
$\ell_{i}^{j(1)}, \dots, \ell_{i}^{j(r)}$
from $A_{i}$;
the edge consists of all points except those contained in these $r$ lines.

\begin{figure}
\begin{center}
\begin{subfigure}[b]{0.3\textwidth}
\begin{center}
\includegraphics[width=\textwidth]{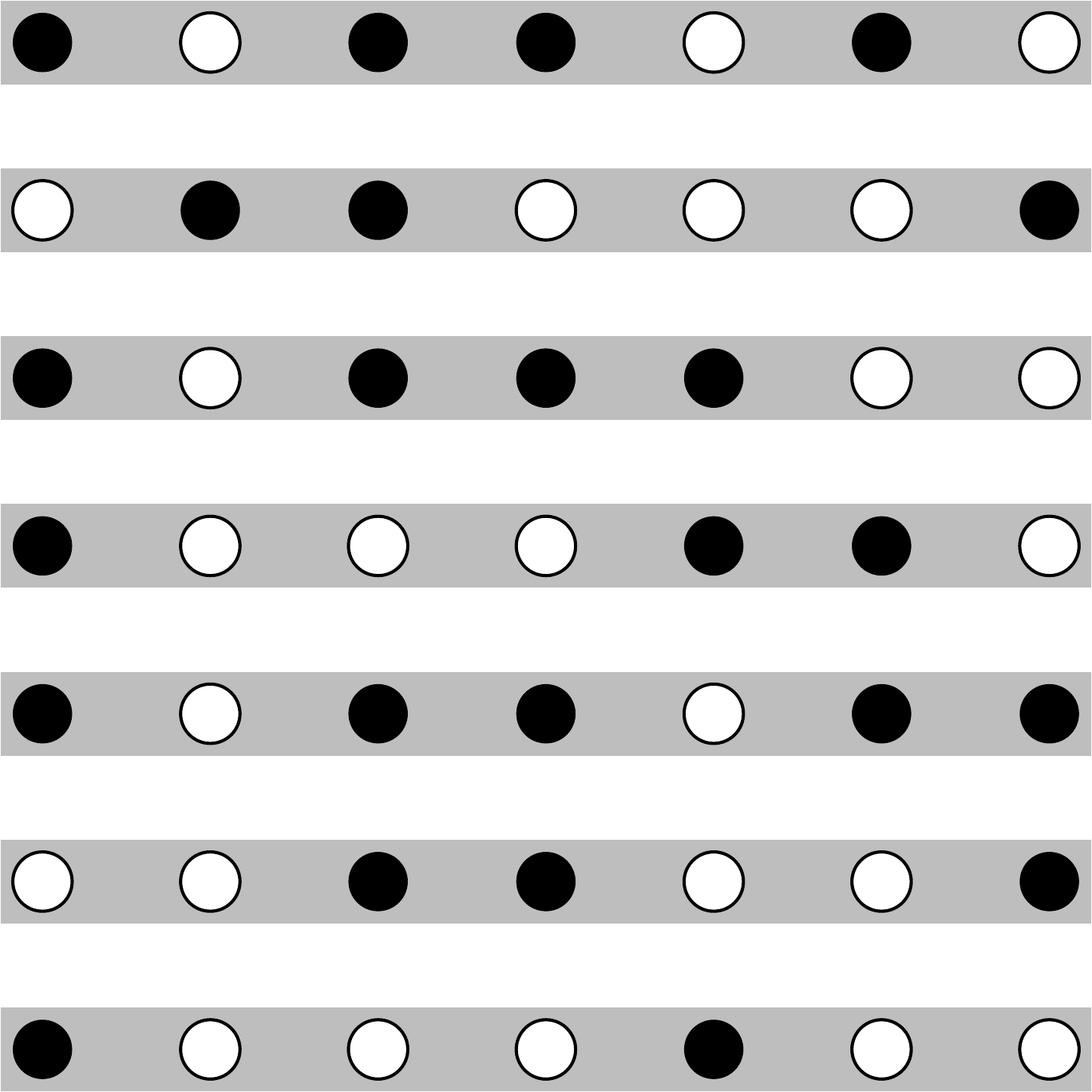}
\caption{The edges in $E_{i}$}
\label{figure:bad-hypergraph:small-edges}
\end{center}
\end{subfigure}
\qquad
\begin{subfigure}[b]{0.3\textwidth}
\begin{center}
\includegraphics[width=\textwidth]{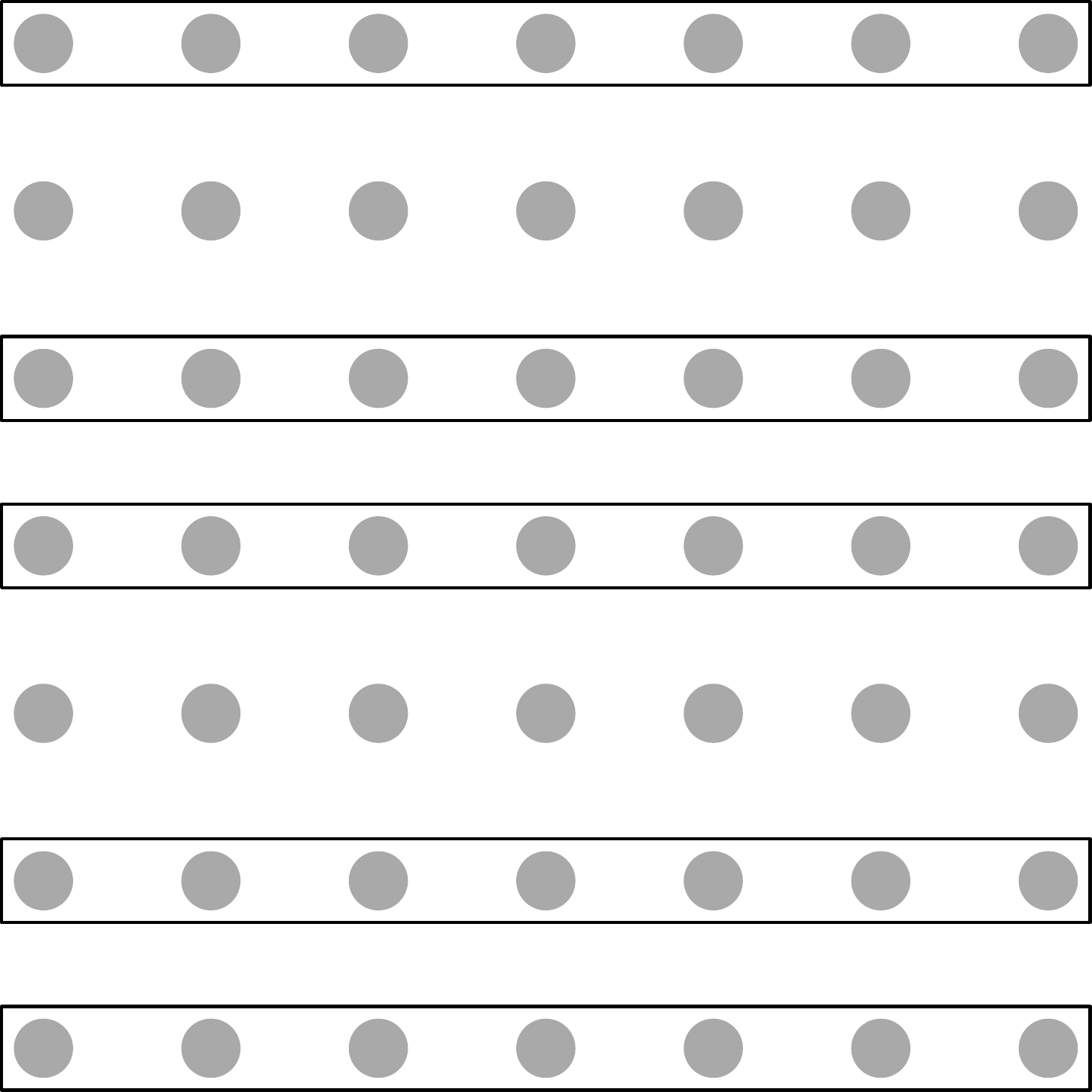}
\caption{Edge $e^{*}$}
\label{figure:bad-hypergraph:large-edge}
\end{center}
\end{subfigure}
\end{center}
\caption{
The hypergraph $G$ for $q = 7$.
(The requirements on $\epsilon$ actually imply that $q$ must be larger, but we
set $q = 7$ for the sake of a clearer illustration.)
The gray rectangles in (\subref{figure:bad-hypergraph:small-edges}) depict the
$7$ parallel lines in angle $A_{i}$ for some $i \in [q + 1]$, whereas the
black/white circles in each line $\ell_{i}^{j}$ depict the points in
$e_{1}(\ell_{i}^{j})$/$e_{2}(\ell_{i}^{j})$.
Edge $e^{*}$, depicted by the white rectangles in
(\subref{figure:bad-hypergraph:large-edge}), consists of all points except
those in $r = 2$ lines of angle $A_{i}$.
}
\label{figure:bad-hypergraph}
\end{figure}

Fix
\[
E = E_{1} \cup \cdots \cup E_{q + 1} \cup \{ e^{*} \} \, .
\]
Observe that
$n = |P| = q^{2}$
and
$m = 1 + 2 \cdot |L| = 1 + 2 \cdot q (q + 1)$.
The execution is divided into two stages, where in the first stage, the edges in
$E_{1} \cup \cdots \cup E_{q + 1}$ are presented in an arbitrary order and in
the second stage, edge $e^{*}$ is presented.

\subsubsection{Analysis}
\label{section:lower-bound-analysis}
We start the analysis by observing that $G$ can be covered by the
edge $e^{*}$ and the edges in
$\{ e_{1}(\ell_{i}^{j(t)}), e_{2}(\ell_{i}^{j(t)}) \mid 1 \leq t \leq r \}$.
Therefore,
\begin{equation} \label{equation:opt-cardinality}
|\Opt|
\leq
2 r + 1
= O (\epsilon q) \, ,
\end{equation}
where the equation follows from the definition of
$r = \lceil 3 \epsilon q \rceil$
due to the requirement that
$\epsilon \geq \frac{1}{3 q}$.

Let $s$ be the space of the deterministic semi-streaming algorithm \Alg{}.
\Thm{}~\ref{theorem:lower-bound} is established by combining
(\ref{equation:opt-cardinality}) with the following lemma
(that ensures an $\Omega (q)$ expected image cardinality whenever $s = o (n^{3
/ 2})$).

\begin{lemma} \label{lemma:lower-bound-exact-parameters}
If
$s \leq q^{2} (q + 1) / 48$,
then w.p.\ $\geq 1 / 8$, the $(1 - \epsilon)$-cover certificate returned by
\Alg{} has image of cardinality at least $q / 3$.\footnote{
Throughout, we use w.p.\ and w.h.p.\ to abbreviate ``with probability'' and
``with high probability'', respectively.
}
\end{lemma}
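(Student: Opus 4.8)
The plan is to argue via an information-theoretic (encoding) argument that if the memory $s$ is too small, then after the first stage the algorithm's state cannot "remember" enough about how the lines were split into edges, and consequently the certificate it produces for the vertices outside $e^*$ must use many edges from angles other than the chosen angle $A_i$. First I would set up the hard instance conditionally: fix the random partition of all lines into edges, fix the angle index $i$, and consider the residual randomness in the choice of the $r$ lines $\ell_i^{j(1)},\dots,\ell_i^{j(r)}$ removed from $e^*$. The key structural fact from the affine plane is that any line $\ell$ from an angle $A_{i'}$ with $i'\neq i$ intersects each line of $A_i$ in exactly one point, so $\ell$ meets at most $r$ of the removed lines; hence $\ell$ covers at most $r$ of the points not in $e^*$. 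Since the points not in $e^*$ number $rq$, to $(1-\epsilon)$-cover them (up to the $\epsilon n$ slack) using only edges from angles $\neq i$ would require $\Omega(q)$ such edges — this gives the desired lower bound on $|\Image(\chi)|$ provided the algorithm is forced into essentially this situation.

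The crux, then, is to show that a low-space algorithm \emph{is} forced into this situation: it cannot identify the removed lines well enough to use the $O(r)$ "cheap" edges $e_1(\ell_i^{j(t)}),e_2(\ell_i^{j(t)})$ that make $\Opt$ small. Here I would use a counting/encoding argument. The certificate $\chi$ assigns to each covered vertex an edge identifier; restricted to the vertices lying on the removed lines $\bigcup_t \ell_i^{j(t)}$, the image of $\chi$ reveals (a superset of) which of those lines got covered by their own two edges $e_1(\ell),e_2(\ell)$. If the image is small — say of size $< q/3$ — then only few of the removed lines can be covered "by themselves," so most of their $rq$ points must be covered by edges $e^*$ (impossible, since they're excluded) or by edges from other angles (needs $\Omega(q)$ of them, contradiction) — unless the algorithm's post-stage-one memory already essentially encoded which lines of $A_i$ were the removed ones. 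But the removed set is one of $\binom{q}{r}$ roughly-equally-likely subsets chosen \emph{after} stage one, and the algorithm's stage-one memory is fixed before this choice; so no single memory state can be "good" for more than a small fraction of the $(q+1)\binom{q}{r}$ possible $(i,\text{removed-set})$ pairs unless $2^s$ is comparable to that count. The bound $s\le q^2(q+1)/48 = o(n^{3/2})$ is precisely what makes $2^s$ too small relative to $\binom{q+1}{1}\cdot$(something like $2^{\Omega(q\log q)}$, or more carefully a union bound over $(q+1)\cdot q$ line-level events), forcing failure w.p. $\ge 1/8$.

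Concretely I would proceed as follows. (i) Condition on the line partition and on $i$; define, for each of the $q$ lines $\ell$ of $A_i$, the "bad event" that $\ell$ is among the removed lines but neither $e_1(\ell)$ nor $e_2(\ell)$ is in $\Image(\chi)$. (ii) Show deterministically that if fewer than $q/3$ removed lines trigger a bad event, then $|\Image(\chi)| \ge q/3$ already (because $\ge q/3$ distinct $e_j(\ell)$'s are in the image) — so we may assume $\ge q/3$ bad events and derive that $\Omega(q)$ edges from other angles are needed, again giving $|\Image(\chi)|\ge q/3$ if the algorithm even succeeds; the failure mode is that it does \emph{not} $(1-\epsilon)$-cover. (iii) Bound the probability that the algorithm can avoid this: its action on the removed vertices depends only on its stage-one memory $M\in\{0,1\}^s$ together with $e^*$; a fixed $M$ can "prepare for" only a bounded number of the $\binom qr$ removed-sets, and a Shearer/counting bound shows the fraction of $(i,\text{removed-set})$ pairs it handles is $\le 2^s/((q+1)\binom qr) $-ish, which is $\le 1/8$ under the stated bound on $s$ once one checks $\binom qr = 2^{\Omega(q)}$ using $r=\lceil 3\epsilon q\rceil \ge 1$ and $\epsilon = \Omega(1/\sqrt n)=\Omega(1/q)$. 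The main obstacle I anticipate is step (iii): making precise the claim "a fixed memory state is good for few removed-sets" requires carefully defining what "good" means (the algorithm might get lucky on the cover even without knowing the removed lines, because of the $\epsilon n$ slack), and ruling out that the $\epsilon$-slack itself — which is $\epsilon q^2 \approx r q$, the same order as the number of removed points — lets the algorithm skip covering the removed region entirely. Handling that boundary-of-parameters subtlety (why $\epsilon \le \frac1{66}-\frac1{3q}$ is imposed, leaving a constant fraction of the removed points that genuinely must be covered) is where the real work lies; the affine-plane intersection counting and the $2^s$ union bound are then routine.
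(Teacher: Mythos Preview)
Your proposal has the right affine-plane counting at the end, but the information-theoretic core is aimed at the wrong random object, and step (iii) does not go through.

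You condition on the line partitions and treat the residual randomness as the choice of the $r$ removed lines, then try to compare $2^{s}$ against $(q+1)\binom{q}{r}$. This cannot work quantitatively: the space bound in the lemma is $s \le q^{2}(q+1)/48 = \Theta(q^{3})$, while $(q+1)\binom{q}{r} \le (q+1)2^{q}$ for every $r$, so $2^{s}$ dwarfs the number of $(i,\text{removed-set})$ pairs by an enormous factor. Your claim that $\binom{q}{r} = 2^{\Omega(q)}$ also fails across the stated range of $\epsilon$: when $\epsilon = \Theta(1/q)$ we have $r = O(1)$ and $\binom{q}{r} = q^{O(1)}$. More conceptually, once $e^{*}$ arrives the algorithm \emph{knows} exactly which lines were removed (they are precisely the lines of $A_{i}$ disjoint from $e^{*}$), so there is no uncertainty there to exploit. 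Step (ii) is also off: there are only $r$ removed lines, and $r$ can be much smaller than $q/3$, so ``$\ge q/3$ removed lines have $e_{j}(\ell)$ in the image'' is not what you can conclude.

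The entropy that actually matters is in the \emph{partitions} $\ell = e_{1}(\ell)\cup e_{2}(\ell)$: each of the $q(q+1)$ lines carries $q$ independent bits, for a total of $q^{2}(q+1)$ bits, which is the scale of $s$. The paper shows that, with probability at least $1/8$, at least $\tfrac{2}{3}rq$ bits of entropy remain in the partitions of the $r$ removed lines given the stage-one memory $M=\mu$. The key step (their Lemma on ``missing points'') is that a \emph{certificate} assigning a removed point $p$ to an edge in $E_{i}$ pins down one bit of that partition (it must be correct, so all partitions consistent with $\mu$ agree on that bit); hence at most $rq - \tfrac{2}{3}rq = rq/3$ removed points can be certified via $E_{i}$. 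Combined with the $\epsilon q^{2} \le rq/3$ slack and the affine-plane fact that any edge from another angle meets the removed region in at most $r$ points, this forces $\ge q/3$ edges in the image. To repair your argument, drop the conditioning on the partitions and instead run the entropy bookkeeping on the $X_{i}^{j}$'s describing those partitions.
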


\paragraph{Bounding the expected entropy.}
The proof of \Lem{}~\ref{lemma:lower-bound-exact-parameters} is based on
information theoretic arguments that require the following definitions.
Let $X_{i}^{j}$ be a random variable that depicts the partition
$(e_{1}(\ell_{i}^{j}), e_{2}(\ell_{i}^{j}))$
of line
$\ell_{i}^{j} = e_{1}(\ell_{i}^{j}) \cup e_{2}(\ell_{i}^{j})$
for every $i \in [q + 1]$ and $j \in [q]$.
Let
$X_{i} = (X_{i}^{1}, \dots, X_{i}^{q})$
and
$X = (X_{1}, \dots, X_{q + 1})$.
The independent random choices in the construction of the hypergraph $G$
guarantee that
$\Entropy(X_{i}^{j}) = q$,
$\Entropy(X_{i}) = q^2$, and
$\Entropy(X) = q^{2} (q + 1)$,
where $\Entropy(\cdot)$ denotes the binary entropy function.
Before we can proceed with our proof, we have to establish the following
lemma whose restriction to the case $k = 1$ is a basic fact in information
theory.
It will not strike us as a surprise if this lemma was already proved beforehand
although we are unaware of any such specific proof;
for the sake of completeness, we provide a full proof of this lemma based on
Baranyai's Theorem in
Appendix~\ref{appendix:proof-lemma-expected-entropy}.

\begin{lemma} \label{lemma:expected-entropy}
Let $X_{1}, \dots, X_{n}, Y$ be $n + 1$ arbitrary random variables and let
$1 \leq j(1) < \cdots < j(k) \leq n$
be $1 \leq k \leq n$ distinct indices chosen u.a.r.\ from $[n]$.
Then,
\[
\left\lceil \frac{n}{k} \right\rceil \Expectation_{j(1), \dots, j(k)} \left[
\Entropy \left( X_{j(1)}, \dots, X_{j(k)} \mid Y \right)
\right]
~ \geq ~
\Entropy \left( X_{1}, \dots, X_{n} \mid Y \right) \, .
\]
\end{lemma}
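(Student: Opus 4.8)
The plan is to prove the inequality by an averaging argument over a uniformly random relabelling of the index set $[n]$: such a relabelling splits $(X_1,\dots,X_n)$ into $\lceil n/k\rceil$ blocks of size exactly $k$, each distributed as a uniformly random $k$-subset, and then summing the chain-rule/subadditivity bound over the blocks and averaging gives the claim. This yields a self-contained proof; it also parallels the Baranyai-based proof deferred to the appendix, the correspondence being that a $1$-factorization of the complete $k$-uniform hypergraph plays the role of the random permutation.

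Concretely, set $m = \lceil n/k\rceil$, so $mk \ge n$, and think of the positions $1, \dots, n$ as arranged around a cycle of length $n$. For $\ell = 0, 1, \dots, m-1$, let $P_\ell \subseteq [n]$ be the arc of $k$ cyclically consecutive positions starting at position $\ell k + 1$. Since $k \le n$ each $P_\ell$ has exactly $k$ elements, and since the $m$ arcs together sweep out $mk \ge n$ positions we have $\bigcup_{\ell=0}^{m-1} P_\ell = [n]$ (the arcs overlap in exactly $mk-n$ positions, which does no harm). Draw a uniformly random permutation $\pi$ of $[n]$ and set $B_\ell = \{\pi(p) : p \in P_\ell\}$; then $|B_\ell| = k$ and $\bigcup_\ell B_\ell = [n]$.

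For every realization of $\pi$, subadditivity of conditional entropy gives
\[
\Entropy(X_1, \dots, X_n \mid Y) \;\le\; \Entropy\bigl( (X_{B_\ell})_{\ell=0}^{m-1} \mid Y \bigr) \;\le\; \sum_{\ell=0}^{m-1} \Entropy(X_{B_\ell} \mid Y),
\]
where the first inequality holds because the tuples $X_{B_\ell}$ jointly contain every $X_i$, and the second is the chain rule combined with the fact that conditioning cannot increase entropy. Taking $\Expectation_\pi$ of both sides, it remains to observe that for each fixed $\ell$ the set $B_\ell = \pi(P_\ell)$ is, by uniformity of $\pi$ and $|P_\ell| = k$, a uniformly random $k$-subset of $[n]$; hence $\Expectation_\pi[\Entropy(X_{B_\ell} \mid Y)] = \Expectation_{j(1), \dots, j(k)}[\Entropy(X_{j(1)}, \dots, X_{j(k)} \mid Y)]$ for every $\ell$, and summing the $m$ equal terms yields $\lceil n/k \rceil \, \Expectation_{j(1), \dots, j(k)}[\Entropy(X_{j(1)}, \dots, X_{j(k)} \mid Y)] \ge \Entropy(X_1, \dots, X_n \mid Y)$.

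The step I expect to require the most care is the non-divisible case $k \nmid n$: then $[n]$ cannot be partitioned into blocks of size $k$, so the blocks $P_\ell$ must be allowed to overlap (the last block ``wraps around'' the cycle). This is really the only obstacle, and the cyclic-arc construction resolves it, since subadditivity only needs the $B_\ell$ to cover $[n]$, not to partition it. In the special case $k \mid n$ the blocks do form a partition, and the argument specializes both to the textbook $k=1$ inequality $\Entropy(X_1,\dots,X_n\mid Y) \le \sum_i \Entropy(X_i \mid Y)$ and, via Baranyai's theorem, to a deterministic version of the bound.
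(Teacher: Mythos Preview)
Your proof is correct and takes a genuinely different, more elementary route than the paper's. The paper's argument (in the appendix) first treats the divisible case $k \mid n$ by invoking Baranyai's theorem to decompose the family of all $k$-subsets of $[n]$ into perfect matchings (clusters of $n/k$ pairwise disjoint $k$-sets partitioning $[n]$), applies subadditivity within each cluster, and averages over clusters; the non-divisible case is then handled by padding with $r = k\lceil n/k\rceil - n$ dummy zero-entropy variables and arguing that restricting to subsets avoiding the dummies can only raise the expected conditional entropy. Your random-permutation trick replaces Baranyai's theorem entirely: a single uniform $\pi$ already makes each block $B_\ell = \pi(P_\ell)$ a uniform $k$-subset, so averaging over $\pi$ gives the desired expectation directly, and the cyclic-arc construction handles $k \nmid n$ in one stroke by allowing the last block to wrap around (subadditivity only needs a cover, not a partition). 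The upshot is that your argument is self-contained and avoids a nontrivial extremal-combinatorics tool; the paper's approach, on the other hand, yields a deterministic decomposition in the divisible case and makes the connection to hypergraph factorizations explicit.
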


Let $M$ be a random variable that depicts the memory image of \Alg{} upon
completion of the first stage of the execution.
Since $M$ is fully determined by $X$, it follows that
$\Entropy(X, M) = \Entropy(X)$,
hence
$\Entropy(X \mid M)
=
\Entropy(X) - \Entropy(M)$.
Recalling that $M$ is described by $s$ bits, we conclude that
$\Entropy(M) \leq s \leq q^{2} (q + 1) / 48$,
thus
\begin{equation} \label{equation:entropy-in-X}
\Entropy(X \mid M)
\geq
\frac{47}{48} \cdot q^{2} (q + 1)
=
\frac{47}{48} \cdot \Entropy(X) \, .
\end{equation}
We are now ready to establish the following lemma.

\begin{lemma} \label{lemma:entropy-in-r-lines}
Our construction guarantees that
\[
\Probability_{i, j(1), \dots, j(r)} \left( \Entropy \left( X_{i}^{j(1)},
\dots, X_{i}^{j(r)} \mid M \right)
\geq
\frac{5}{6} \cdot r q \right) \geq 1 / 4 \, ,
\]
where $i \in [q + 1]$ and
$1 \leq j(1) < \cdots < j(r) \leq q$
are the random indices chosen during the construction of edge $e^{*}$.
\end{lemma}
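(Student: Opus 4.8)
The plan is to show that the expected value of $\Entropy\bigl(X_i^{j(1)}, \dots, X_i^{j(r)} \mid M\bigr)$, taken over the random choices of $i$ and $j(1), \dots, j(r)$, is close to its maximum possible value $r q$, and then to conclude via Markov's inequality. The first step is to observe that a \emph{random} angle must retain almost all of its entropy given $M$: by the chain rule for conditional entropy and the fact that conditioning cannot increase entropy,
\[
\sum_{i = 1}^{q + 1} \Entropy(X_i \mid M)
\;\geq\;
\sum_{i = 1}^{q + 1} \Entropy\bigl( X_i \mid M, X_1, \dots, X_{i - 1} \bigr)
\;=\;
\Entropy(X \mid M)
\;\geq\;
\tfrac{47}{48}\, q^{2} (q + 1) \, ,
\]
where the last inequality is (\ref{equation:entropy-in-X}). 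Hence, with $i$ uniform over $[q + 1]$, $\Expectation_{i}[\Entropy(X_i \mid M)] = \tfrac{1}{q + 1} \sum_{i} \Entropy(X_i \mid M) \geq \tfrac{47}{48}\, q^{2}$.

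The second step descends from a whole angle to $r$ of its lines. Fixing $i$ and applying Lemma~\ref{lemma:expected-entropy} with the $q$ random variables $X_i^{1}, \dots, X_i^{q}$ in the role of $X_{1}, \dots, X_{n}$, with $M$ in the role of $Y$, and with $k = r$ (legitimate since $1 \leq r \leq q$ by the constraints $\tfrac{1}{3 q} \leq \epsilon \leq \tfrac{1}{66} - \tfrac{1}{3 q}$), I would get
\[
\lceil q / r \rceil \cdot \Expectation_{j(1), \dots, j(r)}\!\left[ \Entropy\bigl( X_i^{j(1)}, \dots, X_i^{j(r)} \mid M \bigr) \right]
\;\geq\;
\Entropy\bigl( X_i^{1}, \dots, X_i^{q} \mid M \bigr)
\;=\;
\Entropy(X_i \mid M) \, .
\]
Averaging over the independently chosen uniform $i$, substituting the bound from the first step, and using $\lceil q / r \rceil \leq (q + r) / r$ together with $r \leq q / 22$ (which is exactly what the upper bound on $\epsilon$ buys, since $3 \epsilon q \leq q / 22 - 1$), I would obtain
\[
\Expectation_{i, j(1), \dots, j(r)}\!\left[ \Entropy\bigl( X_i^{j(1)}, \dots, X_i^{j(r)} \mid M \bigr) \right]
\;\geq\;
\frac{47}{48} \cdot \frac{q^{2} r}{q + r}
\;\geq\;
\frac{47}{48} \cdot \frac{22}{23}\, r q
\;=\;
\frac{517}{552}\, r q \, .
\]

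To conclude, I would use that the $r$ variables $X_i^{j(1)}, \dots, X_i^{j(r)}$ correspond to $r$ distinct lines and are therefore mutually independent with $\Entropy(X_i^{j(t)}) = q$, so $\Entropy\bigl(X_i^{j(1)}, \dots, X_i^{j(r)} \mid M\bigr) \leq r q$ always; thus $Z := r q - \Entropy\bigl(X_i^{j(1)}, \dots, X_i^{j(r)} \mid M\bigr)$ is a non-negative random variable with $\Expectation[Z] \leq \bigl(1 - \tfrac{517}{552}\bigr) r q = \tfrac{35}{552}\, r q$. Markov's inequality then gives $\Probability\bigl( Z \geq \tfrac{1}{6} r q \bigr) \leq \tfrac{6 \cdot 35}{552} = \tfrac{35}{92} < \tfrac{3}{4}$, so with probability greater than $1/4$ we have $\Entropy\bigl(X_i^{j(1)}, \dots, X_i^{j(r)} \mid M\bigr) > \tfrac{5}{6} r q$, which is the claimed bound. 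The only non-routine ingredient is Lemma~\ref{lemma:expected-entropy} used in the second step, which is already in hand; the rest is bookkeeping, and the main thing to watch is that the two sources of slack --- the factor $\tfrac{47}{48}$ from the memory bound and the bound $r \leq q / 22$ coming from the permissible range of $\epsilon$ --- jointly leave enough margin for Markov's inequality to close at probability $1/4$ (they do, comfortably). A minor point worth spelling out in the write-up is that $M$ is a legitimate conditioning variable for Lemma~\ref{lemma:expected-entropy} despite being a deterministic function of $X$, since that lemma is stated for arbitrary random variables.
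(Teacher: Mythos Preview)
Your proof is correct and follows essentially the same approach as the paper: bound the expected conditional entropy via subadditivity and Lemma~\ref{lemma:expected-entropy}, then close with Markov's inequality. The only difference is cosmetic---the paper applies Markov twice (once after choosing $i$, once after choosing the $j(t)$'s) and multiplies the two probabilities, whereas you combine the two expectation bounds first and apply Markov once at the end; both routes land at the same $1/4$ threshold.
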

\begin{proof}
By combining (\ref{equation:entropy-in-X}) with an application of
\Lem{}~\ref{lemma:expected-entropy} to the random choice of index $i \in [q +
1]$, we derive the inequality
\[
\Expectation_{i} \left[ \Entropy \left( X_{i} \mid M \right) \right]
\geq
\frac{47}{48} \cdot q^{2} \, .
\]
Since
$\Entropy(X_{i} \mid M) \leq q^{2}$,
we can apply Markov's inequality to conclude that
\begin{equation} \label{equation:condition-event-half}
\Entropy(X_{i} \mid M)
\geq
\frac{23}{24} \cdot q^{2}
\end{equation}
w.p.\ $\geq 1 / 2$.

Conditioned on the event that (\ref{equation:condition-event-half}) holds, we
can apply \Lem{}~\ref{lemma:expected-entropy} to the random choice of indices
$1 \leq j(1) < \cdots < j(r) \leq q$,
deriving the inequality
\[
\left\lceil \frac{q}{r} \right\rceil \Expectation_{j(1), \dots, j(r)} \left[
\Entropy \left( X_{i}^{j(1)}, \dots, X_{i}^{j(r)} \mid M \right) \right]
\geq
\frac{23}{24} \cdot q^{2}
\]
which means that
\[
\Expectation_{j(1), \dots, j(r)} \left[ \Entropy \left( X_{i}^{j(1)}, \dots, X_{i}^{j(r)}
\mid M \right) \right]
\geq
\frac{23}{24} \frac{r q^2}{q + r} \, .
\]
Since
$\epsilon \leq \frac{1}{66} - \frac{1}{3 q}$,
it follows that
$r
=
\lceil 3 \epsilon q \rceil
\leq
3 \epsilon q + 1
\leq
q / 22$.
This, in turn, implies that
$\frac{23}{24} \frac{r q^2}{q + r} \geq \frac{11}{12} r q$
which guarantees that
\[
\Expectation_{j(1), \dots, j(r)} \left[
\Entropy \left( X_{i}^{j(1)}, \dots, X_{i}^{j(r)} \mid M \right) \right]
\geq
\frac{11}{12} \cdot r q \, .
\]
Since
$\Entropy(X_{i}^{j(1)}, \dots, X_{i}^{j(r)} \mid M) \leq r q$,
we can apply Markov's inequality to conclude that
\[
\Entropy \left( X_{i}^{j(1)}, \dots, X_{i}^{j(r)} \mid M \right)
\geq
\frac{5}{6} \cdot r q
\]
w.p.\ $\geq 1 / 2$.
The assertion follows as (\ref{equation:condition-event-half}) holds w.p.\
$\geq 1 / 2$.
\end{proof}

\paragraph{Introducing the random variable $Z$.}
Let $\mu$ be the actual memory image of \Alg{} upon completion of the first
stage of the execution and recall that $\mu$ is some instance of the random
variable $M$.
Let $Z$ be a real valued random variable that maps the event $M = \mu$ to the
entropy in the joint random variable $X_{i}^{j(1)}, \dots, X_{i}^{j(r)}$ given
$M = \mu$.
Observe that by the definition of conditional entropy, we have
$\Expectation[Z] = \Entropy(X_{i}^{j(1)}, \dots, X_{i}^{j(r)} \mid M)$.
If the event described in \Lem{}~\ref{lemma:entropy-in-r-lines} occurs, then
$\Expectation[Z]
\geq
\frac{5}{6} \cdot r q$
and since $Z$ is never larger than $r q$, we can apply Markov's inequality to
conclude that
\[
\Entropy \left( X_{i}^{j(1)}, \dots, X_{i}^{j(r)} \mid M = \mu \right)
\geq
\frac{2}{3} \cdot r q
\]
w.p.\ $\geq 1 / 2$.
The following corollary is established since the event described in
\Lem{}~\ref{lemma:entropy-in-r-lines} holds w.p.\ $\geq 1 / 4$.

\begin{corollary} \label{corollary:large-entropy-remains}
W.p.\ $\geq 1 / 8$, the entropy that remains in
$X_{i}^{j(1)}, \dots, X_{i}^{j(r)}$
after $e^{*}$ is exposed to \Alg{} given that $M = \mu$ is at least
$\frac{2}{3} \cdot r q$ bits.
\end{corollary}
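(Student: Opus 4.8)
The plan is to derive the corollary directly from \Lem{}~\ref{lemma:entropy-in-r-lines} by a single application of Markov's inequality to the random variable $Z$ introduced above. First I would recall that, with the realized values of $i, j(1), \dots, j(r)$ held fixed, $Z$ is a nonnegative function of the random memory image $M$, that it never exceeds $r q$ (each $X_{i}^{j}$ carries entropy $q$, there are $r$ of them, and conditioning cannot increase entropy), and that $\Expectation[Z] = \Entropy(X_{i}^{j(1)}, \dots, X_{i}^{j(r)} \mid M)$ by the definition of conditional entropy.

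Next I would let $\mathcal{E}$ denote the event --- depending only on the choices $i, j(1), \dots, j(r)$ made in constructing $e^{*}$ --- that $\Entropy(X_{i}^{j(1)}, \dots, X_{i}^{j(r)} \mid M) \geq \frac{5}{6} r q$; by \Lem{}~\ref{lemma:entropy-in-r-lines} we have $\Probability(\mathcal{E}) \geq 1/4$. Since $i, j(1), \dots, j(r)$ are chosen independently of $X$ and hence of $M$, for each realization of $(i, j(1), \dots, j(r))$ lying in $\mathcal{E}$ we still have $\Expectation_{M}[Z] \geq \frac{5}{6} r q$ with $0 \leq Z \leq r q$, so Markov's inequality applied to $r q - Z$ gives $\Probability_{M}(r q - Z \geq \frac{1}{3} r q) \leq \frac{(1/6) r q}{(1/3) r q} = \frac{1}{2}$, i.e.\ $\Probability_{M}(Z \geq \frac{2}{3} r q) \geq \frac{1}{2}$. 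Averaging this over the realizations of $(i, j(1), \dots, j(r))$ in $\mathcal{E}$ and using $\Probability(\mathcal{E}) \geq 1/4$ yields $\Probability(Z \geq \frac{2}{3} r q) \geq \frac{1}{4} \cdot \frac{1}{2} = \frac{1}{8}$.

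To finish I would identify $Z = \Entropy(X_{i}^{j(1)}, \dots, X_{i}^{j(r)} \mid M = \mu)$ with the entropy that remains in $X_{i}^{j(1)}, \dots, X_{i}^{j(r)}$ once $e^{*}$ has been exposed to \Alg{} given $M = \mu$: presenting $e^{*}$ discloses to the deterministic algorithm only the angle $A_{i}$ and the $r$ omitted lines $\ell_{i}^{j(1)}, \dots, \ell_{i}^{j(r)}$, and since $e^{*}$ contains none of the points of those lines it reveals nothing about their partitions $X_{i}^{j(1)}, \dots, X_{i}^{j(r)}$, which are moreover independent of $i, j(1), \dots, j(r)$; hence $\Entropy(X_{i}^{j(1)}, \dots, X_{i}^{j(r)} \mid M = \mu, i, j(1), \dots, j(r)) = Z$. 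This last identification is the only step that warrants care; everything else is the routine two-step probabilistic bookkeeping above.
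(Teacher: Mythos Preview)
Your proposal is correct and follows essentially the same approach as the paper: condition on the event of \Lem{}~\ref{lemma:entropy-in-r-lines}, apply Markov's inequality to $rq - Z$ using the bound $0 \leq Z \leq rq$ and $\Expectation[Z] \geq \tfrac{5}{6} rq$, and then multiply the two probabilities $\tfrac{1}{4} \cdot \tfrac{1}{2}$. Your additional paragraph justifying that exposing $e^{*}$ reveals nothing about the partitions $X_{i}^{j(1)}, \dots, X_{i}^{j(r)}$ is a point the paper leaves implicit, so if anything you are slightly more careful here.
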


\paragraph{High entropy implies a large edge cover.}
Condition hereafter on the event described in
\Cor{}~\ref{corollary:large-entropy-remains}.
Consider the $(1 - \epsilon)$-cover certificate $\chi$ returned by \Alg{} and
let
$P' = \bigcup_{t = 1}^{r} \ell_{i}^{j(t)} = P - e^{*}$
be the set of points not covered by $e^{*}$.
Let
\[
R = \left\{ p \in P' \mid p \in \Domain(\chi) \land \chi(p) \in E_{i}
\right\}
\]
be the set of points not covered by $e^{*}$ that are mapped under $\chi$ to
some edge in $E_{i}$, where recall that $E_{i}$ is the set of edges
corresponding to the lines in angle $A_{i}$ (the angle chosen in the random
construction of $e^{*}$).
We can now establish the following lemma.

\begin{lemma} \label{lemma:missing-points}
Our construction guarantees that
$|R| \leq r q / 3$.
\end{lemma}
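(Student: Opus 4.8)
The plan is to exploit the fact that the cover certificate $\chi$ returned by \Alg{} is a deterministic function of the stage-$1$ memory image $\mu$ and of the edge $e^{*}$ alone --- and in particular does not depend on the part of the partition vector $X$ that is not already pinned down by $\mu$ --- together with the large residual entropy supplied by \Cor{}~\ref{corollary:large-entropy-remains}. Recall that $e^{*}$ is determined by the random choices $i$ and $j(1) < \cdots < j(r)$, which are independent of $X$; hence, writing $\chi = g(\mu, e^{*})$ for the (fixed) map computed by \Alg{}, the certificate $\chi$ is identical for every input hypergraph $G$ whose stage-$1$ memory image is $\mu$ and whose large edge is $e^{*}$. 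It is convenient to view the joint variable $(X_{i}^{j(1)}, \dots, X_{i}^{j(r)})$ as the vector $(Y_{p})_{p \in P'}$ of $rq$ ``side bits'': for each $p \in P'$ there is a unique index $t$ with $p \in \ell_{i}^{j(t)}$ (uniqueness because $\ell_{i}^{j(1)}, \dots, \ell_{i}^{j(r)}$ all lie in the angle $A_{i}$ and are therefore pairwise parallel, hence disjoint), and $Y_{p}$ records which of $e_{1}(\ell_{i}^{j(t)}), e_{2}(\ell_{i}^{j(t)})$ contains $p$. Since we are conditioning on the event of \Cor{}~\ref{corollary:large-entropy-remains}, we have $\Entropy\bigl((Y_{p})_{p \in P'} \mid M = \mu\bigr) \geq \tfrac{2}{3} rq$, the conditional entropy being over the residual randomness of $X$.

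The key step --- and the one I expect to be the main point requiring care --- is the claim that $\Entropy(Y_{p} \mid M = \mu) = 0$ for every $p \in R$, i.e.\ each such $Y_{p}$ is constant over the support of $X$ conditioned on $M = \mu$. To prove it, fix $p \in R$ and let $t$ be the index with $p \in \ell_{i}^{j(t)}$. By definition of $R$, $\chi$ is defined on $p$ and $\chi(p) \in E_{i}$, say $\chi(p) = \Id\bigl(e_{b}(\ell)\bigr)$ with $\ell \in A_{i}$ and $b \in \{1,2\}$. Validity of the certificate forces $p \in e_{b}(\ell) \subseteq \ell$; but $p$ also lies on $\ell_{i}^{j(t)}$, and two distinct lines of the angle $A_{i}$ are disjoint, so $\ell = \ell_{i}^{j(t)}$. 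Hence $\chi(p)$ determines the value of $Y_{p}$ (namely $Y_{p} = b$). Now $\chi = g(\mu, e^{*})$ is the same object for every $G$ with stage-$1$ image $\mu$ and large edge $e^{*}$; since \Alg{} must output a correct certificate on every such input, $p \in e_{b}(\ell_{i}^{j(t)})$ --- i.e.\ $Y_{p} = b$ --- must hold for all of them, using the independence of the choices defining $e^{*}$ from $X$ to ensure that all these inputs are indeed legitimate. This is exactly the claim.

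It then remains to combine the pieces via subadditivity of entropy: the $|R|$ bits $Y_{p}$ with $p \in R$ carry zero conditional entropy and each of the remaining $rq - |R|$ bits carries at most one bit, so
\[
\tfrac{2}{3} rq
\;\leq\;
\Entropy\bigl((Y_{p})_{p \in P'} \mid M = \mu\bigr)
\;\leq\;
\sum_{p \in P'} \Entropy\bigl(Y_{p} \mid M = \mu\bigr)
\;\leq\;
rq - |R| \, ,
\]
which rearranges to $|R| \leq rq/3$. Apart from the determinism-from-validity argument above, the only ingredients are the incidence properties of the affine plane (pairwise disjointness of parallel lines, so that each $p \in P'$ sits on exactly one of the $r$ missing lines) and this one-line entropy estimate.
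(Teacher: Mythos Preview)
Your proof is correct and follows essentially the same approach as the paper: both arguments identify the partition data $(X_{i}^{j(1)}, \dots, X_{i}^{j(r)})$ with the $rq$ ``side bits'' indexed by $P'$, observe that for each $p \in R$ the corresponding bit is pinned down (since $\chi$ is determined by $\mu$ and $e^{*}$, and a valid certificate value $\chi(p) \in E_{i}$ forces $Y_{p}$), and then bound the residual entropy by $rq - |R|$ to conclude. Your write-up is in fact somewhat more explicit than the paper's about why the line $\ell$ with $\chi(p) = \Id(e_{b}(\ell))$ must equal $\ell_{i}^{j(t)}$ (via disjointness of parallel lines) and about the determinism-from-validity step; the paper compresses this into the single sentence ``by the definition of $R$, all matrices $T$ in the support of $\pi$ must agree on $T(t,k)$''.
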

\begin{proof}
The joint random variable
$X_{i}^{j(1)}, \dots, X_{i}^{j(r)}$
conditioned on $M = \mu$ can be viewed as a probability distribution $\pi$
over the matrices $T \in \{ 1, 2 \}^{r \times q}$, where $T(t, k) \in \{ 1, 2 \}$
indicates whether the $k^{\text{th}}$ point in line $\ell_{i}^{j(t)}$ belongs
to edge
$e_{1}(\ell_{i}^{j(t)})$
or
$e_{2}(\ell_{i}^{j(t)})$
for every $k \in [q]$ and $1 \leq t \leq r$.
Consider some point $p \in R$ and suppose that this is the $k^{\text{th}}$
point in line $\ell_{i}^{j(t)}$.
By the definition of $R$, all matrices $T$ in the support of $\pi$
must agree on $T(t, k)$.\footnote{
In fact, even if we relax the requirement from \Alg{} so that $\chi$ is
allowed to err on some vertices in its domain and the coverage is measured
with respect to the vertices for which $\chi$ is correct, we can still achieve
the desired (asymptotic) bound by using a line of arguments similar to that
used in the proof of Lemma 6.2 in \cite{AlonEFT2013}.
}
Therefore, the entropy that remains in
$X_{i}^{j(1)}, \dots, X_{i}^{j(r)}$
can only arrive from points in $P' - R$.
The assertion follows by \Cor{}~\ref{corollary:large-entropy-remains} since
each such point contributes at most $1$ bit of entropy.
\end{proof}

The cardinality of $\Domain(\chi)$ is at least
$|\Domain(\chi)| \geq (1 - \epsilon) q^{2}$.
The choice of
$r = \lceil 3 \epsilon q \rceil$
ensures that
$\epsilon q^{2} \leq r q / 3$,
thus
$|\Domain(\chi)| \geq q^{2} - r q / 3$.
The key observation now is that even if all these $r q / 3$ missing points
from $\Domain(\chi)$ are in $P'$, it still leaves us with
$|\Domain(\chi) \cap (P' - R)| \geq r q / 3$
by \Lem{}~\ref{lemma:missing-points}.

Every point in $\Domain(\chi) \cap (P' - R)$ is covered by some edge $e \in
E_{j}$, $j \neq i$.
The properties of the affine plane guarantee that each such edge $e$ covers at
most one point in line $\ell_{i}^{j(t)}$, which sums up to at most $r$ points
in $P'$.
Thus, the image of $\chi$ must contain (the identifiers of) at least $q /
3$ different edges.
This concludes the proof of \Lem{}~\ref{lemma:lower-bound-exact-parameters}.
\Thm{}~\ref{theorem:lower-bound} then follows by combining
(\ref{equation:opt-cardinality}) and
\Lem{}~\ref{lemma:lower-bound-exact-parameters}.

\subsection{The uncertified case}
\label{section:uncertified-case}
Similarly to the proof of \Thm{}~\ref{theorem:lower-bound}, we shall
establish \Thm{}~\ref{theorem:lower-bound-uncertified} by introducing a
probability distribution $\mathcal{G}'$ over $n$-vertex hypergraphs that this
time, satisfies the following two properties:
(1) Every hypergraph in the support of $\mathcal{G}'$ admits an edge cover of
cardinality $O (\epsilon^{2} n)$.
(2) For every \emph{deterministic} semi-streaming algorithm \Alg{} that given
an $n$-vertex hypergraph $G = (V, E)$, maintains a memory of size
$o (n^{1 + \alpha})$
and outputs the identifiers of an edge $(1 - \epsilon)$-cover $F
\subseteq E$ of $G$,
when \Alg{} is invoked on a hypergraph chosen according to $\mathcal{G}'$, the
expected cardinality of $F$ is
$\Omega \left( \epsilon n \frac{\log\log n}{\log n} \right)$.
The theorem than follows by Yao's principle.

\subsubsection{The construction of $\mathcal{G}'$}
We construct a random hypergraph $\hat{G} = (\hat{V}, \hat{E})$ as follows.
Let $q$ be a large power of $2$ and fix some constant real $\alpha > 0$.
Consider some
$q^{-(1 - \alpha)} \leq \epsilon \leq \frac{1}{66} - \frac{1}{3 q}$
and let
$r = \lceil 3 \epsilon q \rceil$.
The main building block of $\hat{G}$ is very similar to the random hypergraph
$G = (V, E)$ constructed in \Sect{}~\ref{section:lower-bound-construction}
based on the affine plane $\mathcal{A} = (P, L)$.
Specifically, fix $\hat{V} = P$ and let $E'$ be a random edge set constructed
just like the construction of the random edge set $E$ presented in
\Sect{}~\ref{section:lower-bound-construction} with the following exception:
Instead of randomly partitioning each line $\ell \in L$ into $2$ edges
$e_{1}(\ell) \cup e_{2}(\ell) = \ell$
by assigning each point in $L$ to one of the $2$ edges u.a.r.\ (and
independently),
we randomly partition each line $\ell \in L$ into $r$ edges
$e_{1}(\ell) \cup \cdots \cup e_{r}(\ell) = \ell$
by assigning each point in $L$ to one of the $r$ edges u.a.r.\ (and
independently).

The edge $e^{*}$ is constructed in the same manner as in
\Sect{}~\ref{section:lower-bound-construction}, i.e.,
we choose an angle $A_{i}$ u.a.r.\ and then choose $r$ distinct lines
$\ell_{i}^{j(1)}, \dots, \ell_{i}^{j(r)}$
u.a.r.\ from $A_{i}$;
the edge consists of all points except those contained in these $r$ lines.
(Notice that the parameter $r$ is now used for both the partition of each line
into $r$ edges and the construction of edge $e^{*}$.)
For every $i \in [q + 1]$, denote the set of edges corresponding to the lines
in angle $A_{i}$ by
$E'_{i} = \{ e_{1}(\ell), \dots, e_{r}(\ell) \mid \ell \in A_{i} \}$
and fix
$E' = E'_{1} \cup \cdots \cup E'_{q + 1} \cup \{ e^{*} \}$.

The edge multi-set $\hat{E}$ is obtained from $E'$ by augmenting it with
\emph{dummy} edges:
fix $\hat{E} = E' \cup E_{\Dummy}$, where the edges $e \in E_{\Dummy}$,
referred to as dummy edges, are all empty $e = \emptyset$.
(Concerns regarding the usage of empty edges can be lifted by augmenting
$\hat{V}$ with a dummy vertex $v_{\Dummy}$ and taking all dummy edges $e \in
E_{\Dummy}$ to be singletons $e = \{ v_{\Dummy} \}$.)

\paragraph{Identifier assignment.}
Recall that the arrival order of the edges is determined by their identifiers
so that the edge $e_{t}$ arriving at time $t$ is assigned with identifier
$\Id(e_{t}) = t$.
In contrast to the construction presented in
\Sect{}~\ref{section:lower-bound-construction}, where the identifier
assignment is arbitrary (with the exception that $\Id(e^{*})$ should be the
largest identifier), the assignment of identifiers to the edges in $\hat{E}$
plays a key role in the current construction.
Specifically, for every $i \in [q + 1]$, $j \in [q]$, and $k \in [r]$, the
identifier assigned to edge
$e_{k}(\ell_{i}^{j})$
is
\[
\Id(e_{k}(\ell_{i}^{j}))
=
0 \circ i \circ j \circ k \circ X_{i}^{j, k} \, ,
\]
where $i$, $j$, and $k$ are assumed to be encoded as bitstrings of lengths
$\lceil \lg (q + 1) \rceil$, $\lg q$ (recall that $q$ is a power of $2$), and
$\lceil \lg r \rceil$, respectively,
$\circ$ denotes the string concatenation operator, and
$X_{i}^{j, k}$
is a bitstring of length $3 \lg q$ chosen u.a.r.\ (and
independently).
Notice that each identifier contains
$\iota =
1 + \lceil \lg (q + 1) \rceil + \lg q + \lceil \lg r \rceil + 3 \lg q$
bits encoding some integer (with the most significant bit on the
left) in
$[0, 2^{\iota - 1} - 1]$
and by design, each edge in
$E'_{1} \cup \cdots \cup E'_{q + 1}$
is assigned with a unique identifier.

The identifier assigned to edge $e^{*}$ is
$\Id(e^{*}) = 1 \circ 0^{\iota - 1}$,
which encodes the integer $2^{\iota - 1}$.
The dummy edges are used for filling up the gaps between the identifiers
assigned to the edges in $E'$ so that $\Id(\cdot)$ is a bijection from
$\hat{E} = E' \cup E_{\Dummy}$
to
$\left[ 0, 2^{\iota - 1} \right]$.
As $e^{*}$ is assigned with the highest identifier, this is the last edge to
arrive.
Observe that
$n = q^{2}$
and
$m =
2^{\iota - 1} + 1 = 
O (q^{6})$.

\subsubsection{Analysis}
\sloppy
We start the analysis by observing that $\hat{G}$ can be covered by edge
$e^{*}$ and the edges in
$\{ e_{1}(\ell_{i}^{j(t)}), \dots, e_{r}(\ell_{i}^{j(t)}) \mid 1 \leq t \leq r
\}$.
Therefore,
\begin{equation} \label{equation:opt-cardinality-uncertified}
|\Opt|
\leq
r^{2} + 1
= O (\epsilon^{2} q^{2}) \, ,
\end{equation}
where the equation follows from the definition of
$r = \lceil 3 \epsilon q \rceil$
due to the requirement that
$\epsilon = \omega (q^{-1})$.
\par\fussy

Let $s$ be the space of the deterministic semi-streaming algorithm \Alg{}.
\Thm{}~\ref{theorem:lower-bound-uncertified} is established by combining
(\ref{equation:opt-cardinality-uncertified}) with the following lemma
(that ensures an $\tilde{\Omega} ( \epsilon q^{2} )$ expected set cover
cardinality whenever $s = o (\epsilon n^{3 / 2})$).

\begin{lemma} \label{lemma:lower-bound-exact-parameters-uncertified}
If
$s \leq r q (q + 1) / 16$,
then w.p.\ $\geq 1 / 9$, the edge $(1 - \epsilon)$-cover returned by \Alg{}
has cardinality
$\Omega \left( \epsilon q^{2} \frac{\log\log q}{\log q} \right)$.
\end{lemma}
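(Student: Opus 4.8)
The plan is to follow the template of the certified lower bound (\Lem{}~\ref{lemma:lower-bound-exact-parameters}) but adapt the entropy accounting to the new construction, where each line is split into $r$ edges rather than $2$, and where identifiers carry the extra random bitstrings $X_i^{j,k}$. First I would let $M$ be the memory image of \Alg{} just before $e^*$ arrives. The random input is now determined by the partitions of the lines into $r$ edges together with the random identifier suffixes; the total entropy of the relevant random variables $X_i^j$ (now describing a partition of a $q$-point line into $r$ parts plus the associated random identifier bits) is $\Theta(q\log r + q\log q) = \Theta(q\log q)$ per line, hence $\Theta(q^3\log q)$ overall. Since $\Entropy(M)\le s \le rq(q+1)/16 = O(\epsilon q^3)$, and $\epsilon = O(1/\log q)$-ish relative to the $\log q$ factor in the total entropy — more precisely since $s = o(\epsilon q^3)\cdot(\text{something})$ — a constant fraction of this entropy survives conditioning on $M$. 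I would then invoke \Lem{}~\ref{lemma:expected-entropy} twice exactly as before: once over the random angle $i\in[q+1]$ and once over the $r$ random lines $j(1),\dots,j(r)$, combined with Markov's inequality at each stage, to conclude that with probability $\ge 1/8$ (or some fixed constant), the entropy remaining in the joint variable $(X_i^{j(1)},\dots,X_i^{j(r)})$ given the actual memory image $M=\mu$ is at least a constant fraction (say $\frac23$) of its maximum $\Theta(rq\log q)$.

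Next I would translate surviving entropy into a lower bound on $|F|$. Condition on the high-entropy event. Let $P' = \bigcup_t \ell_i^{j(t)} = P - e^*$ be the $rq$ uncovered points. After $e^*$ arrives, \Alg{} must output identifiers of a $(1-\epsilon)$-cover $F$, so all but $\epsilon q^2 \le rq/3$ points of $P'$ must be covered by edges whose identifiers appear in $F$. The crucial point — this is where the $\log\log q/\log q$ loss enters — is that to cover a point $p\in P'$ lying in line $\ell_i^{j(t)}$, the algorithm must name an identifier of one of the $r$ edges $e_1(\ell_i^{j(t)}),\dots,e_r(\ell_i^{j(t)})$ partitioning that line (no edge from another angle $A_j$, $j\ne i$, contains more than one point of $P'$, by the affine plane property, and those from other angles can cover at most $r$ points of $P'$ total per edge — actually at most one, summing to few). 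But naming such an identifier $\Id(e_k(\ell_i^{j(t)})) = 0\circ i\circ j\circ k\circ X_i^{j,k}$ reveals the random bitstring $X_i^{j,k}$, thereby pinning down $3\lg q$ bits of entropy about line $\ell_i^{j(t)}$, plus it reveals which part $k$ the point $p$ landed in. So each edge identifier the algorithm emits can "explain away" at most $O(\log q)$ bits of the surviving $\Omega(rq\log q)$ entropy in $(X_i^{j(1)},\dots,X_i^{j(r)})$; combined with the fact that a point $p\in P'\cap\Domain(\text{cover})\setminus(\text{points explained by emitted identifiers})$ still contributes $\Omega(1)$ bits, an averaging argument forces $|F| = \Omega(rq\log q / \log q)\cdot(\cdots)$. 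I would need to be careful to extract the exact $\frac{\log\log q}{\log q}$ factor: the point is that the surviving entropy is $\Omega(rq\log q)$ but it is spread across $rq$ points each carrying $\Theta(\log q)$ bits, while each named identifier kills $\Theta(\log q)$ bits but the covering requirement only forces us to cover $\Omega(rq)$ points; a single named edge of angle $A_i$ can be reused to cover up to $r$ of the uncovered points (one per line it meets — wait, an edge $e_k(\ell_i^{j(t)})$ lies entirely within a single line, so it covers points of only one uncovered line), so actually the counting is subtler and I would track how the $\log q$ bits of identifier randomness versus the $\log r \approx \log(\epsilon q)$ bits of "which part" information interact — the ratio $\log r / \log q$ against the entropy-per-point gives the $\log\log/\log$ saving.

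The main obstacle I anticipate is precisely this last entropy-to-cardinality conversion: in the certified case each named edge could cover many uncovered points but the \emph{certificate} structure forced one identifier per point, giving the clean $q/3$ bound; here, without certificates, the algorithm may name a single identifier and let it cover all points of its line within $P'$ — but each such line $\ell_i^{j(t)}$ is partitioned into $r$ parts, so to cover all $q$ of its points one must name $\Theta(r)$ of its edges, and naming all $r$ of them reveals essentially the full partition $X_i^{j(t)}$ (entropy $\approx q\log r$), which is a contradiction with surviving entropy only if few lines get fully covered. The delicate balance is: covering line $\ell_i^{j(t)}$ up to an $\epsilon$-fraction needs $\Omega(r)$ named edges from that line but reveals only $O(r\log q)$ identifier bits while "spending" $\Omega(q)$ of the $\Theta(q\log q)$ surviving bits of that line (because once you know a $(1-\epsilon)$ fraction of a uniformly random balanced partition into $r$ parts, you've learned $\Omega(q\log r)$ bits); so each fully-covered line costs the algorithm $\Omega(r)$ edges and $\Omega(q\log r)$ bits, and there are $\Omega(rq\log q)$ bits to kill, giving $\Omega(rq\log q/(q\log r)) = \Omega(r\log q/\log r) = \Omega(r\log q/\log q) $ — hmm, this needs $\log r$ in the denominator and $\log q$ in numerator, with $r=\Theta(\epsilon q)$ and $\epsilon \ge q^{-(1-\alpha)}$ giving $\log r = \Theta(\log q)$; I would instead aim the bound through the \emph{number of distinct named edges} being $\Omega(r)\cdot(\text{number of lines that must be substantially covered})$ and show the latter is $\Omega(q\cdot\frac{\log\log q}{\log q})$ because the surviving entropy $\Omega(rq\log q)$ divided by the per-line budget of $O(q\log q)$ bits gives $\Omega(r)$ lines — then $|F| \ge r\cdot\Omega(r) = \Omega(r^2)=\Omega(\epsilon^2 q^2)$, which is too weak, so the real argument must instead count that you need $\Omega(r)$ edges per covered line but there are $\Omega(q)$ uncovered lines total and you must cover all but a $\frac13$-fraction of their points on average, forcing many lines to be covered to depth $\Omega(r/\log q \cdot \log\log q)$ on average; getting this averaging exactly right, so that the $\frac{\log\log q}{\log q}$ emerges rather than a cruder constant or an $\Omega(r^2)$, is the crux. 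I would expect to borrow the refined counting from the referenced Lemma 6.2 of \cite{AlonEFT2013} to handle the case where some covered points are "cheap" (shared across parts), and to set the threshold in the Markov steps to yield the stated probability $1/9$.
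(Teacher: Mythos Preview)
Your entropy setup (memory image $M$, double application of \Lem{}~\ref{lemma:expected-entropy}, Markov) is on the right track and matches the paper up to \Cor{}~\ref{corollary:large-entropy-remains-uncertified}. But the entropy-to-cardinality conversion you sketch is not the paper's argument and, as written, does not work.

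Two concrete gaps. First, the mechanism by which high residual entropy forces large $|F|$ is \emph{not} that ``naming an identifier reveals $O(\log q)$ bits, so you cannot name too many.'' It is the dummy edges. If $\Entropy(X_i^{j(t),k}\mid M=\mu)\ge\lg q$ then, from the algorithm's point of view, there are at least $q$ equally plausible values for $\Id(e_k(\ell_i^{j(t)}))$, and all but one of them are identifiers of dummy (empty) edges. Hence to \emph{guarantee} that $e_k(\ell_i^{j(t)})$ is in the output, the algorithm must emit $\ge q$ identifiers for this single edge. The paper collects the pairs $(t,k)$ with this property into a set $\Psi$ of size $\ge r^2/2$; since covering the points of any such $e_k(\ell_i^{j(t)})$ via edges of other angles takes at most $2q/r\ll q$ edges (\Lem{}~\ref{lemma:edge-cardinalities}), one may assume the algorithm covers all of $N=\bigcup_{(t,k)\in\Psi}e_k(\ell_i^{j(t)})$ using edges in $E'_{-i}$. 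No per-identifier entropy bookkeeping is needed.

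Second, you have the geometry wrong, and this is exactly where the $\frac{\log\log q}{\log q}$ factor lives. A \emph{line} $\ell\in L\setminus A_i$ meets $P'=\bigcup_t\ell_i^{j(t)}$ in exactly $r$ points (one per chosen line), not one. An \emph{edge} $e_k(\ell)$ is a random $1/r$ fraction of $\ell$, so those $r$ intersection points are distributed among $e_1(\ell),\dots,e_r(\ell)$ as $r$ balls into $r$ bins; the maximum load is $O(\log r/\log\log r)=O(\log q/\log\log q)$ w.h.p. Thus every edge in $E'_{-i}$ covers at most $O(\log q/\log\log q)$ points of $N'$ (the $\ge qr/12$ points of $N$ that must be covered after subtracting the $\le qr/3$ uncovered slack), forcing $|F|=\Omega\!\left(qr\cdot\frac{\log\log q}{\log q}\right)=\Omega\!\left(\epsilon q^2\cdot\frac{\log\log q}{\log q}\right)$. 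Your attempts to extract the factor from ratios like $\log r/\log q$ or from per-line entropy budgets are on the wrong track; the factor is purely a balls-into-bins max-load bound.
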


The proof of \Lem{}~\ref{lemma:lower-bound-exact-parameters-uncertified} is
based on information theoretic arguments that require the following
definitions.
Recall that $X_{i}^{j, k}$ is a random bitstring of length $3 \lg q$ used in
the construction of
$\Id(e_{k}(\ell_{i}^{j}))$
for every $i \in [q + 1]$, $j \in [q]$, and $k \in [r]$.
Let
$X_{i}^{j} = (X_{i}^{j, 1}, \dots, X_{i}^{j, r})$,
$X_{i} = (X_{i}^{1}, \dots, X_{i}^{q})$,
and
$X = (X_{1}, \dots, X_{q + 1})$.
The independent random choices in the construction of the identifiers of
$\hat{E}$ guarantee that
$\Entropy(X_{i}^{j, k}) = 3 \lg q$,
$\Entropy(X_{i}^{j}) = 3 r \lg q$,
$\Entropy(X_{i}) = 3 r q \lg q$, and
$\Entropy(X) = 3 r q (q + 1) \lg q$.

As in the analysis performed in \Sect{}~\ref{section:lower-bound-analysis},
let
$i \in [q + 1]$
and
$1 \leq j(1) < \cdots < j(r) \leq q$
be the random indices chosen in the construction of edge $e^{*}$.
Let $M$ be a random variable that depicts the memory image of \Alg{} before
the last edge $e^{*}$ arrives and let $\mu$ be its actual instantiation.
Observing that
$\Entropy(X \mid M) \geq \frac{47}{48} \cdot \Entropy(X)$
(cf.\ inequality~(\ref{equation:entropy-in-X})),
we can repeat the line of arguments used in
\Sect{}~\ref{section:lower-bound-analysis} to derive the following corollary
(analogous to \Cor{}~\ref{corollary:large-entropy-remains}).

\begin{corollary} \label{corollary:large-entropy-remains-uncertified}
W.p.\ $\geq 1 / 8$, the entropy that remains in
$X_{i}^{j(1)}, \dots, X_{i}^{j(r)}$
after $e^{*}$ is exposed to \Alg{} given that $M = \mu$ is at least
$2 r^{2} \lg q$ bits.
\end{corollary}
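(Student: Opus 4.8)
The plan is to transcribe, almost verbatim, the chain of inequalities and Markov arguments that led from inequality~(\ref{equation:entropy-in-X}) to \Cor{}~\ref{corollary:large-entropy-remains} in the certified case, now carrying the per-line entropy $\Entropy(X_i^j) = 3 r \lg q$ through the computation in place of the old per-line entropy $q$. The starting point is the inequality $\Entropy(X \mid M) \geq \frac{47}{48} \cdot \Entropy(X) = \frac{47}{48} \cdot 3 r q (q + 1) \lg q$, which holds because $\Entropy(M) \leq s \leq r q (q + 1) / 16 \leq \frac{1}{48} \cdot 3 r q (q + 1) \lg q$ as soon as $q \geq 2$ (this is exactly the analogue of inequality~(\ref{equation:entropy-in-X}) noted in the paragraph preceding the corollary).

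First I would apply \Lem{}~\ref{lemma:expected-entropy} with $k = 1$ and $Y = M$ to the variables $X_1, \dots, X_{q + 1}$ over the uniformly random angle $i \in [q + 1]$, obtaining $\Expectation_i[\Entropy(X_i \mid M)] \geq \frac{1}{q + 1} \Entropy(X \mid M) \geq \frac{47}{48} \cdot 3 r q \lg q$; since $\Entropy(X_i \mid M) \leq \Entropy(X_i) = 3 r q \lg q$, Markov's inequality gives $\Entropy(X_i \mid M) \geq \frac{23}{24} \cdot 3 r q \lg q$ with probability at least $1/2$ over $i$. Conditioning on this event I would apply \Lem{}~\ref{lemma:expected-entropy} again, this time with $k = r$ and $Y = M$ to the variables $X_i^1, \dots, X_i^q$ over the uniformly random $r$-subset $j(1) < \cdots < j(r)$ of $[q]$, obtaining $\lceil q / r \rceil \, \Expectation_{j(1), \dots, j(r)}[\Entropy(X_i^{j(1)}, \dots, X_i^{j(r)} \mid M)] \geq \frac{23}{24} \cdot 3 r q \lg q$. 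Using $\lceil q / r \rceil \leq (q + r) / r$ together with $r = \lceil 3 \epsilon q \rceil \leq q / 22$ (which follows from $\epsilon \leq \frac{1}{66} - \frac{1}{3 q}$, exactly as in \Sect{}~\ref{section:lower-bound-analysis}), this simplifies to $\Expectation_{j(1), \dots, j(r)}[\Entropy(X_i^{j(1)}, \dots, X_i^{j(r)} \mid M)] \geq \frac{11}{12} \cdot 3 r^{2} \lg q$; since the quantity is at most $\Entropy(X_i^{j(1)}, \dots, X_i^{j(r)}) = 3 r^{2} \lg q$, Markov gives that it is at least $\frac{5}{6} \cdot 3 r^{2} \lg q = \frac{5}{2} r^{2} \lg q$ with probability at least $1/2$. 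Combining the two independent $1/2$-probability events over $i$ and over $j(1), \dots, j(r)$ yields $\Probability_{i, j(1), \dots, j(r)} \big( \Entropy(X_i^{j(1)}, \dots, X_i^{j(r)} \mid M) \geq \frac{5}{2} r^{2} \lg q \big) \geq 1/4$, the analogue of \Lem{}~\ref{lemma:entropy-in-r-lines}.

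Finally, I would introduce the random variable $Z$ that maps the event $M = \mu$ to $\Entropy(X_i^{j(1)}, \dots, X_i^{j(r)} \mid M = \mu)$, so that $\Expectation[Z] = \Entropy(X_i^{j(1)}, \dots, X_i^{j(r)} \mid M)$; conditioned on the event of the previous paragraph we have $\Expectation[Z] \geq \frac{5}{2} r^{2} \lg q$ and $Z \leq 3 r^{2} \lg q$ always, so Markov gives $Z \geq 2 r^{2} \lg q$ with probability at least $1/2$, and hence unconditionally with probability at least $1/8$. Since edge $e^{*}$ is a deterministic function of $i$ and $j(1), \dots, j(r)$ alone --- its presentation lists only its point set $P - \bigcup_{t} \ell_i^{j(t)}$ and its identifier $\Id(e^{*}) = 1 \circ 0^{\iota - 1}$, neither of which reveals any of the identifier bitstrings $X_i^{j, k}$, and the partitions of the lines into $r$ edges are independent of $X$ --- conditioning additionally on the arrival of $e^{*}$ given $M = \mu$ and given $i, j(1), \dots, j(r)$ leaves the above conditional entropy unchanged, which is precisely \Cor{}~\ref{corollary:large-entropy-remains-uncertified}. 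The only step that requires a moment's thought rather than pure transcription is this last observation that exposing $e^{*}$ does not decrease the entropy remaining in $X_i^{j(1)}, \dots, X_i^{j(r)}$; every other step is a bookkeeping computation identical to the certified case with the substitution $q \mapsto 3 r \lg q$ at the per-line level.
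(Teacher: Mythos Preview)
Your proposal is correct and follows precisely the route the paper indicates: the paper merely says to ``repeat the line of arguments used in \Sect{}~\ref{section:lower-bound-analysis}'' starting from the analogue of inequality~(\ref{equation:entropy-in-X}), and you have carried this out faithfully with the substitution $\Entropy(X_i^j) = 3 r \lg q$ in place of $q$. One tiny wording issue: the two $1/2$-probability events (over $i$ and over $j(1),\dots,j(r)$) are not independent but nested --- the second is conditioned on the first --- so the bound $1/4$ comes from the chain rule rather than independence; the numerical conclusion is unaffected.
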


Notice that the requirement
$\epsilon \geq q^{-(1 - \alpha)}$
ensures that
$r = \lceil 3 \epsilon q \rceil$
and $q$ are polynomially related and so are $r$ and $n = q^{2} + 1$.
Therefore, an event that holds w.h.p.\ with respect to the parameter $r$ also
holds w.h.p.\ with respect to the parameters $q$ and $n$;
in what follows, whenever we use the term w.h.p., we refer to w.h.p. with
respect to these three parameters. 

\begin{lemma} \label{lemma:edge-cardinalities}
W.h.p., all edges
$e_{k}(\ell_{i}^{j(t)})$,
$t \in [r], k \in [r]$,
satisfy
$(5 / 6) q / r \leq |e_{k}(\ell_{i}^{j(t)})| \leq 2 q / r$.
\end{lemma}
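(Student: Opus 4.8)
The plan is a routine Chernoff concentration argument followed by a union bound. First I would fix arbitrary indices $i \in [q+1]$, $j \in [q]$, and $k \in [r]$, and observe that by the construction of $E'$ the edge $e_{k}(\ell_{i}^{j})$ consists of exactly those points, among the $|\ell_{i}^{j}| = q$ points of the line $\ell_{i}^{j}$, that are thrown into the $k^{\text{th}}$ of the $r$ parts, where each point is thrown uniformly and independently of all others. Hence the indicators ``point $p$ lands in part $k$'', $p \in \ell_{i}^{j}$, are i.i.d.\ Bernoulli$(1/r)$, so $|e_{k}(\ell_{i}^{j})|$ is distributed as $\mathrm{Bin}(q, 1/r)$ with mean $\mu := q/r$. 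Since $r = \lceil 3\epsilon q \rceil$ we have $\mu = \Theta(1/\epsilon)$; recall also that $r$, $q$, and $n$ are polynomially related, so it suffices to bound the failure probability by $q^{-\Omega(1)}$.

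Next I would invoke the two standard multiplicative Chernoff bounds for $|e_{k}(\ell_{i}^{j})| \sim \mathrm{Bin}(q, 1/r)$: the upper-tail bound with relative deviation $1$ gives $\Probability( |e_{k}(\ell_{i}^{j})| \geq 2\mu ) \leq (e/4)^{\mu}$, and the lower-tail bound with relative deviation $1/6$ gives $\Probability( |e_{k}(\ell_{i}^{j})| \leq (5/6)\mu ) \leq e^{-\mu/72}$. Combining the two, $\Probability\big( |e_{k}(\ell_{i}^{j})| \notin [ (5/6) q/r , \, 2 q/r ] \big) \leq 2^{-c\mu}$ for an absolute constant $c>0$, and this holds for every triple $(i,j,k)$ regardless of the randomness later used to build $e^{*}$.

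Finally I would take a union bound. Rather than condition on the random angle $i$ and lines $j(1) < \cdots < j(r)$ chosen during the construction of $e^{*}$, it is cleanest to union the above bound over \emph{all} $(q+1)\cdot q \cdot r = O(q^{3})$ triples $(i,j,k)$; the resulting event --- every edge of the construction has cardinality in $[ (5/6) q/r , \, 2 q/r ]$ --- then in particular contains, for every realization of $i$ and $j(1), \dots, j(r)$, the event asserted by the lemma for the $r^{2}$ edges $e_{k}(\ell_{i}^{j(t)})$, $t,k \in [r]$. Its complement has probability at most $O(q^{3}) \cdot 2^{-c\mu} = 2^{-\Omega(1/\epsilon) + O(\log q)}$, which is $q^{-\Omega(1)}$ and hence the desired w.h.p.\ bound.

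The computation is entirely mechanical; the only point demanding care is the bookkeeping in this last step, namely verifying that the Chernoff exponent $c\mu = \Theta(1/\epsilon)$ dominates the $O(\log q)$ loss incurred by the union bound. This is precisely where one uses that $\epsilon$ is small, i.e.\ that $1/\epsilon$ is at least of order $\log q$ --- which is exactly the regime in which the lemma (and hence \Lem{}~\ref{lemma:lower-bound-exact-parameters-uncertified} and \Thm{}~\ref{theorem:lower-bound-uncertified}, which build on it) carries content; for larger $\epsilon$ the conclusion is not needed.
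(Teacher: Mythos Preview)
Your proof is correct and follows essentially the same approach as the paper's --- Chernoff for a single edge with mean $q/r$, then a union bound. The only cosmetic differences are that the paper union-bounds over just the $r^{2}$ edges in the statement (implicitly using that the partition randomness is independent of the choice of $i$ and the $j(t)$'s) rather than over all $(q+1)qr$ edges as you do, and that your final caveat about needing $1/\epsilon \gtrsim \log q$ for the Chernoff exponent to dominate the union-bound loss is a point the paper simply glosses over.
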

\begin{proof}
Fix some $t \in [r]$ and $k \in [r]$.
The random partition of line
$\ell_{i}^{j(t)}$
into the $r$ edges
$e_{1}(\ell_{i}^{j(t)}) \cup \cdots \cup e_{r}(\ell_{i}^{j(t)}) =
\ell_{i}^{j(t)}$
implies that
$\Expectation[|e_{k}(\ell_{i}^{j(t)})|] = q / r$.
By Chernoff's bound, we have
$(5 / 6) q / r \leq |e_{k}(\ell_{i}^{j(t)})| \leq 2 q / r$
w.h.p.
The assertion follows by union bound.
\end{proof}

\paragraph{Identifiers with large entropy.}
Condition hereafter on the events described in
\Cor{}~\ref{corollary:large-entropy-remains-uncertified} and
\Lem{}~\ref{lemma:edge-cardinalities}.
Since \Cor{}~\ref{corollary:large-entropy-remains-uncertified} ensures that
\[
\sum_{t = 1}^{r} \sum_{k = 1}^{r} \Entropy(X_{i}^{j(t), k} \mid M = \mu)
\geq
\Entropy(X_{i}^{j(1)}, \dots, X_{i}^{j(r)} \mid M = \mu)
\geq
2 r^{2} \lg q
\]
and since
$\Entropy(X_{i}^{j(t), k} \mid M = \mu) \leq 3 \lg q$
for every
$(t, k) \in [r] \times [r]$,
it follows that there exists a subset
$\Psi \subseteq [r] \times [r]$
such that
(1) $|\Psi| \geq r^{2} / 2$; and
(2) $\Entropy(X_{i}^{j(t), k} \mid M = \mu) \geq \lg q$
for every
$(t, k) \in \Psi$.

Consider some pair $(t, k) \in \Psi$.
The definition of $\Psi$ guarantees that at least $\lg q$ bits of entropy
remain in the identifier
$\Id(e_{k}(\ell_{i}^{j(t)}))$
of edge
$e_{k}(\ell_{i}^{j(t)})$
after $e^{*}$ is exposed to \Alg{} given that $M = \mu$.
Thus, \Alg{} must have at least $q$ different candidates for
$\Id(e_{k}(\ell_{i}^{j(t)}))$.
The design of the identifier assignment function $\Id(\cdot)$ guarantees that
all but one of these candidate identifiers are actually assigned to dummy
edges and that the candidate identifiers of edge 
$e_{k}(\ell_{i}^{j(t)})$
and the candidate identifiers of edge
$e_{k'}(\ell_{i}^{j(t')})$
are disjoint for every
$(t, k), (t', k') \in \Psi$,
$(t, k) \neq (t', k')$.
Therefore, every edge
$e_{k}(\ell_{i}^{j(t)})$
with
$(t, k) \in \Psi$
that is guaranteed to belong to the edge $(1 - \epsilon)$-cover $F$ output by
\Alg{} contributes at least $q$ distinct edges to $|F|$. 

On the other hand, \Lem{}~\ref{lemma:edge-cardinalities} ensures that the
points in
$e_{k}(\ell_{i}^{j(t)})$
can be covered by at most
$2 q / r \ll q$
edges belonging to
$E'_{-i} = E'_{1} \cup \cdots \cup E'_{i - 1} \cup E'_{i + 1} \cup \cdots \cup
E'_{q + 1}$,
that is, edges corresponding to lines of angles other than $A_{i}$.
Hence, for the sake of the analysis, we may assume hereafter that \Alg{}
covers the points in
$e_{k}(\ell_{i}^{j(t)})$
by edges belonging to $E'_{-i}$
for every
$(t, k) \in \Psi$.

\paragraph{Coverage from another angle.}
Let
$N = \bigcup_{(t, k) \in \Psi} e_{k}(\ell_{i}^{j(t)})$
be the set of points contained in the edges corresponding to the index pairs
in $\Psi$.
Since
$|\Psi| \geq r^{2} / 2$
and since \Lem{}~\ref{lemma:edge-cardinalities} guarantees that
$|e_{k}(\ell_{i}^{j(t)})| \geq (5 / 6) q / r$
for every
$(t, k) \in \Psi$,
it follows that
$|N| \geq 5 q r / 12$.

Recall that the edge $(1 - \epsilon)$-cover $F$ may leave at most
$\epsilon q^{2}$
uncovered points.
The choice of
$r = \lceil 3 \epsilon q \rceil$
ensures that
$\epsilon q^{2} \leq q r / 3$,
thus
at most $q r / 3$ points are not covered by $F$.
The key observation now is that even if all these uncovered points
belong to $N$, then $F$ should still cover at least
$5 q r / 12 - qr / 3 = q r / 12$
points in $N$;
let $N' \subseteq N$ be the subset consisting of these (at least) $q r / 12$
covered points.

We argue that in order to cover the points in $N'$ with edges belonging to
$E_{-i}$, one needs
$\Omega \left( \epsilon q^{2} \frac{\log\log q}{\log q} \right) =
\Omega \left( q r \frac{\log\log q}{\log q} \right)$
distinct edges w.h.p.
The proof of \Lem{}~\ref{lemma:lower-bound-exact-parameters-uncertified} is
completed by union bound since the events described in
\Cor{}~\ref{corollary:large-entropy-remains-uncertified} and
\Lem{}~\ref{lemma:edge-cardinalities} (i.e., the events on which our analysis
is conditioned) hold w.p.\ $\geq 1 / 8$ and w.h.p., respectively.
To that end, consider some line $\ell \in L - A_{i}$, namely, a line from an
angle other than $A_{i}$.
The properties of the affine plane $\mathcal{A}$ ensure that the intersection
$I(\ell) = \ell \cap (\ell_{i}^{j(1)} \cup \cdots \cup \ell_{i}^{j(r)})$
contains exactly $|I(\ell)| = r$ points.
The assignment of these $r$ points to the edges
$e_{1}(\ell), \dots, e_{r}(\ell)$
is determined by the random partition of $\ell$ into
$e_{1}(\ell) \cup \cdots \cup e_{r}(\ell) = \ell$
and it can be viewed as a balls-into-bins process with $r$ balls and $r$ bins.
By a known result on balls-into-bins processes (see, e.g.,
\cite{MitzenmacherU2005}), we conclude that w.h.p.,
$\max_{k \in [r]} |e_{k}(\ell) \cap I(\ell)| = O \left( \frac{\log r}{\log\log
r} \right)$
and by union bound, this holds for all lines $\ell \in L - A_{i}$ w.h.p.;
in particular, every edge in $E'_{-i}$ covers
$O \left( \frac{\log r}{\log\log r} \right)$
points in $N'$ .
The argument follows since
$|N'| = \Omega (q r)$.

This concludes the proof of
\Lem{}~\ref{lemma:lower-bound-exact-parameters-uncertified}.
\Thm{}~\ref{theorem:lower-bound-uncertified} then follows by combining
(\ref{equation:opt-cardinality-uncertified}) and
\Lem{}~\ref{lemma:lower-bound-exact-parameters-uncertified}.

\clearpage

\renewcommand{\thepage}{}
\appendix

\renewcommand{\theequation}{A-\arabic{equation}}
\setcounter{equation}{0}

\begin{center}
\textbf{\large{APPENDIX}}
\end{center}

\section{Proving \Lem{}~\ref{lemma:expected-entropy}}
\label{appendix:proof-lemma-expected-entropy}
Assume first that $n / k = d$ for some integer $d \geq 1$.
Let $\mathcal{S}(n, k)$ be the collection of all ${n}\choose{k}$ subsets $S
\subseteq [n]$ of cardinality $|S| = k$.
By Baranyai's Theorem (see, e.g., \cite{vanLintW2001}), there exists a
partition $\mathcal{P}$ of $\mathcal{S}(n, k)$ into ${{n}\choose{k}} / d$
pairwise disjoint clusters such that every cluster $C$ of $\mathcal{P}$
consists of $d$ subsets
$S \in \mathcal{S}(n, k)$
whose union satisfies
$\bigcup_{S \in C} S = [n]$.
Note that by definition, the subsets in $C$ must be pairwise disjoint.

\sloppy
Given some subset $S = \{ j_{1}, \dots, j_{\ell}\} \subseteq [n]$, let
$X_{S}$ denote the joint random variable
$(X_{j_{1}}, \dots, X_{j_{\ell}})$.
Fix some cluster $C = \{ S_{1}, \dots, S_{d} \}$ of $\mathcal{P}$.
The chain rule of conditional entropy implies that
\begin{align*}
\Entropy \left( X_{1}, \dots, X_{n} \mid Y \right)
~ = ~ &
\Entropy \left( X_{S_{1}} \mid Y \right)
+
\Entropy \left( X_{S_{2}} \mid X_{S_{1}} \mid Y \right)
+ \cdots +
\Entropy \left( X_{S_{d}} \mid X_{S_{1} \cup \cdots \cup S_{d - 1}} \mid Y \right) \\
~ \leq ~ &
\Entropy \left( X_{S_{1}} \mid Y \right)
+
\Entropy \left( X_{S_{2}} \mid Y \right)
+ \cdots +
\Entropy \left( X_{S_{d}} \mid Y \right) \, .
\end{align*}
Denoting the clusters of $\mathcal{P}$ by
$C^{1}, \dots, C^{{{n}\choose{k}} / d}$
and letting
$C^{i} = \{ S_{1}^{i}, \dots, S_{d}^{i} \}$
for $i = 1, \dots, {{n}\choose{k}} / d$,
we can sum over all clusters of $\mathcal{P}$ to conclude that
\begin{equation} \label{equation:entropy-many-terms}
\frac{{{n}\choose{k}}}{d} \Entropy \left( X_{1}, \dots, X_{n} \mid Y \right)
~ \leq ~
\sum_{i = 1}^{{{n}\choose{k}} / d}
\sum_{j = 1}^{d}
\Entropy \left( X_{S_{j}^{i}} \mid Y \right) \, .
\end{equation}
The assertion follows since the right hand side of
(\ref{equation:entropy-many-terms}) has ${n}\choose{k}$ terms, each identified
with a unique subset
$S \in \mathcal{S}(n, k)$,
hence if we pick one term u.a.r., then its expected value is at least
$\Entropy(X_{1}, \dots, X_{n} \mid Y) / d$.
\par\fussy

Now, assume that $n = k \cdot d - r$ for some integers $d \geq 1$ and $0 < r
< k$ and let $n' = k \cdot d$.
Let $X_{n + 1}, \dots, X_{n'}$ be $r$ \emph{dummy} random variables with $0$
entropy.
We have all ready showed that if subset $S \subseteq [n']$ is chosen u.a.r.\
from $\mathcal{S}(n', k)$, then
\[
d \cdot \Expectation_{S} \left[ \Entropy \left( X_{S} \mid Y \right) \right]
~ \geq ~
\Entropy \left( X_{1}, \dots, X_{n'} \mid Y \right)
~ = ~
\Entropy \left( X_{1}, \dots, X_{n} \mid Y \right) \, .
\]
Since
$\Entropy \left( X_{S} \mid Y \right) = \Entropy \left( X_{S \cap [n]} \mid Y
\right)$
for every $S \in \mathcal{S}(n', k)$, it follows that shifting the probability
mass in a uniform manner from subsets $S$ containing dummy variables to
subsets $S$ that do not contain dummy variables cannot decrease the expected
entropy;
in other words, if subset $S \subseteq [n]$ is chosen u.a.r.\ from
$\mathcal{S}(n, k)$ and subset $S' \subseteq [n']$ is chosen u.a.r.\ from
$\mathcal{S}(n', k)$, then
\[
\Expectation_{S} \left[ \Entropy \left( X_{S} \mid Y \right) \right]
~ \geq ~
\Expectation_{S'} \left[ \Entropy \left( X_{S} \mid Y \right) \right] \, .
\]
The assertion follows since $d = \lceil n / k \rceil$.

\clearpage

\renewcommand{\thepage}{}

\bibliographystyle{abbrv}
\bibliography{references}

\end{document}